\algnewcommand{\Inputs}[1]{%
	\State \textbf{input:}
	\parbox[t]{.8\linewidth}{\raggedright #1}
}
\algnewcommand{\Initialize}[1]{%
	\State \textbf{initialization}
	\parbox[t]{.95\linewidth}{\raggedright #1}
}
\algnewcommand{\Outputs}[1]{%
	\State \textbf{output:}
	\parbox[t]{.8\linewidth}{\raggedright #1}
}
\newtheorem{theorem}{Theorem}
\newtheorem{lemma}{Lemma}  
\newtheorem{corollary}{Corollary}  
\newtheorem{example}{Example}
\newtheorem{proposition}{Proposition}   
\newtheorem{definition}{Definition} 
\newtheorem{remark}{Remark} 
\newcommand{\dv}{\mathbf} 
\newcommand{\mc}{\mathcal} 
\newcommand{\mkv}{-\!\!\!\!\minuso\!\!\!\!-}
\newcommand{\kbar}{{\bar{k}}}
\newcommand{\Mod}[1]{\ (\mathrm{mod}\ #1)}
\newcommand{\E}{\mathbb{E}}
\DeclareMathOperator*{\argmin}{min}
\newcommand{\squeezeup}{\vspace{-1em}}
\newcommand{\bqed}{\tag*{$\blacksquare$}}
\newcommand*{\qedblack}{\hfill\ensuremath{\blacksquare}}
\newcommand*\xbar[1]{%
    \hbox{%
		 \vbox{%
		 \hrule height 0.5pt 
		 \kern0.5ex
		 \hbox{%
		 \kern-0.1em
		\ensuremath{#1}%
		\kern-0.1em
		}%
		}%
		}%
		}
\begin{document}
\fontencoding{OT1}\fontsize{10}{11}\selectfont

\title{Vector Gaussian CEO Problem Under Logarithmic Loss and Applications}

\author{\vspace{0.3cm}
	Yi{\u{g}}it U{\u{g}}ur $^{\dagger}$$^{\ddagger}$ \qquad \quad I\~naki Estella Aguerri $^{\dagger}$ \qquad \quad Abdellatif Zaidi $^{\dagger}$$^{\ddagger}$ \vspace{0.3cm} \\   
	{\small
		$^{\dagger}$ Mathematical and Algorithmic Sciences Lab, Paris Research Center, Huawei Technologies,\\ Boulogne-Billancourt, 92100, France\\
		$^{\ddagger}$ Universit\'e Paris-Est, Champs-sur-Marne, 77454, France\\
		\vspace{0.1cm}
		\{\tt yigit.ugur@gmail.com,  inaki.estella@gmail.com, abdellatif.zaidi@u-pem.fr\} }
	\thanks{The results of this paper have been presented in part at the 2017 IEEE Information Theory Workshop~\cite{UEZ17} and in part at the 2018 IEEE Information Theory Workshop~\cite{UEZ18}.}
	}

\markboth{A\MakeLowercase{ccepted for publication in} IEEE T\MakeLowercase{ransactions of} I\MakeLowercase{nformation} T\MakeLowercase{heory}, 2020}%
{UGUR \MakeLowercase{\textit{et al.}}: VECTOR GAUSSIAN CEO PROBLEM UNDER LOGARITHMIC LOSS AND APPLICATIONS}


\maketitle

\begin{abstract}
In this paper, we study the vector Gaussian Chief Executive Officer (CEO) problem under logarithmic loss distortion measure. Specifically, $K \geq 2$ agents observe independently corrupted Gaussian noisy versions of a remote vector Gaussian source, and communicate independently with a decoder or CEO over rate-constrained noise-free links. The CEO also has its own Gaussian noisy observation of the source and  wants to reconstruct the remote source to within some prescribed distortion level where the incurred distortion is measured under the logarithmic loss penalty criterion. We find an explicit characterization of the rate-distortion region of this model. The result can be seen as the counterpart to the vector Gaussian setting of that by Courtade-Weissman which provides the rate-distortion region of the model in the discrete memoryless setting. For the proof of this result, we obtain an outer bound by means of a technique that relies on the de Bruijn identity and the properties of Fisher information. The approach is similar to Ekrem-Ulukus outer bounding technique for the vector Gaussian CEO problem under quadratic distortion measure, for which it was there found generally non-tight; but it is shown here to yield a complete characterization of the region for the case of logarithmic loss measure. Also, we show that Gaussian test channels with time-sharing exhaust the Berger-Tung inner bound, which is optimal. Furthermore, application of our results allows us to find the complete solutions of two related problems: a quadratic vector Gaussian CEO problem with \textit{determinant} constraint and the vector Gaussian distributed Information Bottleneck problem. Finally, we develop Blahut-Arimoto type algorithms that allow to compute numerically the regions provided in this paper, for both discrete and Gaussian models. With the known relevance of the logarithmic loss fidelity measure in the context of learning and prediction, the proposed algorithms may find usefulness in a variety of applications where learning is performed distributively. We illustrate the efficiency of our algorithms through some numerical examples.
\end{abstract}

\begin{figure}[t!]
	\centering
	\includegraphics[width=0.7\linewidth]{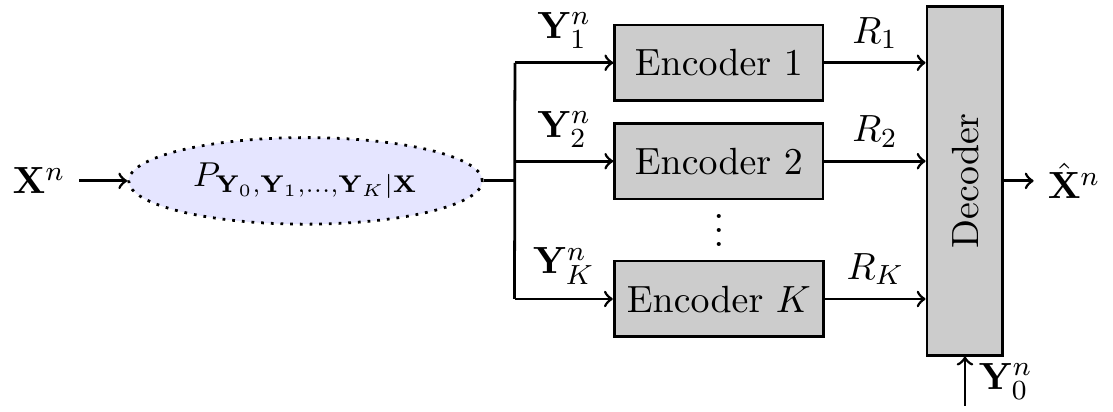}
	\caption{Chief Executive Officer (CEO) source coding problem with side information.}	
	\label{fig-system-model-CEO}
\end{figure}

\section{Introduction}\label{section-intorduction}

Consider the vector Gaussian Chief Executive Officer (CEO) problem shown in Figure~\ref{fig-system-model-CEO}. In this model, there are $K \geq 2$ agents each observing a noisy version of a vector Gaussian source $\dv X$. The goal of the agents is to describe the source to a central unit, which wants to reconstruct this source to within a prescribed distortion level. The incurred distortion is measured according to some loss measure $d\: :\: \mc X \times \hat{\mc X} \rightarrow \mathbb{R}$, where $\hat{\mc X}$ designates the reconstruction alphabet. For quadratic distortion measure, i.e., 
\begin{equation*}
d(x,\hat{x})=|x-\hat{x}|^2 \;,
\end{equation*}
the rate-distortion region of the vector Gaussian CEO problem is still unknown in general, except in few special cases the most important of which is perhaps the case of scalar sources, i.e., scalar Gaussian CEO problem. For this case, a complete solution, in terms of characterization of the optimal rate-distortion region, was found independently by Oohama in~\cite{O05} and by Prabhakaran \textit{et al.} in~\cite{PTR04}. Key to establishing this result is a judicious application of the entropy power inequality. The extension of this argument to the case of vector Gaussian sources, however, is not straightforward as the entropy power inequality is known to be non-tight in this setting. The reader may refer also to~\cite{CW11, WC12} where non-tight outer bounds on the rate-distortion region of the vector Gaussian CEO problem under quadratic distortion measure are obtained by establishing some extremal inequalities that are similar to Liu-Viswanath~\cite{LV07}, and to~\cite{XW16} where a strengthened extremal inequality yields a complete characterization of the region of the vector Gaussian CEO problem in the special case of trace distortion constraint.

In this paper, we study the CEO problem of Figure~\ref{fig-system-model-CEO} in the case in which $(\dv X, \dv Y_0, \dv Y_1,\ldots,\dv Y_K)$ is jointly Gaussian and the distortion is measured using the logarithmic loss criterion, i.e., 
\begin{equation}~\label{equation-log-loss-distortion-measure-n-letter}
d^{(n)}(x^n, \hat{x}^n) = \frac{1}{n} \sum_{i=1}^{n} d(x_i,\hat{x}_i) \;,
\end{equation}
with the letter-wise distortion given by  
\begin{equation}~\label{equation-log-loss-distortion-measure}
d(x,\hat{x}) = \log\frac{1}{\hat{x}(x)} \;,
\end{equation}
where $\hat{x}(\cdot)$ designates a probability distribution on $\mc X$ and $\hat{x}(x)$ is the value of this distribution evaluated for the outcome $x \in \mc X$.

The logarithmic loss distortion measure, often referred to as \textit{self-information loss} in the literature about prediction, plays a central role in settings in which reconstructions are allowed to be `soft', rather than `hard' or deterministic. That is, rather than just assigning a deterministic value to each sample of the source, the decoder also gives an assessment of the degree of confidence or reliability on each estimate, in the form of weights or probabilities. This measure, which was introduced in the context of rate-distortion theory by Courtade \textit{et al.}~\cite{CW11-2, CW14}, has appreciable mathematical properties~\cite{JCVW15,NW15}, such as a deep connection to lossless coding for which fundamental limits are well developed (e.g., see~\cite{SRV17} for recent results on universal lossy compression under logarithmic loss that are built on this connection). Also, it is widely used as a penalty criterion in various contexts, including clustering and classification~\cite{TPB99}, pattern recognition, learning and prediction~\cite{C-BL06}, image processing~\cite{AABG06}, secrecy~\cite{KCOSW16} and others.

\subsection{Main Contributions}

The main contribution of this paper is a complete characterization of the rate-distortion region of the vector Gaussian CEO problem of Figure~\ref{fig-system-model-CEO} under logarithmic loss distortion measure. The result can be seen as the counterpart to the vector Gaussian case of that by Courtade and Weissman~\cite[Theorem 10]{CW14}, who established the rate-distortion region of the CEO problem under logarithmic loss in the discrete memoryless (DM) case. For the proof of this result, we derive a matching outer bound by means of a technique that relies of the de Bruijn identity, a connection between differential entropy and Fisher information, along with the properties of minimum mean square error (MMSE) and Fisher information. By opposition to the case of quadratic distortion measure, for which the application of this technique was shown in~\cite{EU14} to result in an outer bound that is generally non-tight, we show that this approach is successful in the case of logarithmic distortion measure and yields a complete characterization of the region. On this aspect, it is noteworthy that, in the specific case of scalar Gaussian sources, an alternate converse proof may be obtained by extending that of the scalar Gaussian many-help-one source coding problem by Oahama~\cite{O05} and Prabhakaran \textit{et al.}~\cite{PTR04} by accounting for side information and replacing the original mean square error distortion constraint with conditional entropy. However, such approach does not seem to lead to a conclusive result in the vector case as the entropy power inequality is known to be generally non-tight in this setting~\cite{TP05,WSS06}. The proof of the achievability part simply follows by evaluating a straightforward extension to the continuous alphabet case of the solution of the DM model using Gaussian test channels and \textit{no} time-sharing. Because this does \textit{not} necessarily imply that Gaussian test channels also exhaust the Berger-Tung inner bound, we investigate the question and we show that they \textit{do} if time-sharing is allowed. 

Furthermore, we show that application of our results allows us to find complete solutions to two related problems. The first is a quadratic vector Gaussian CEO problem with reconstruction constraint on the \textit{determinant} of the error covariance matrix that we introduce here, and for which we also characterize the optimal rate-distortion region. Key to establishing this result, we show that the rate-distortion region of vector Gaussian CEO problem under logarithmic loss which is found in this paper translates into an outer bound on the rate region of the quadratic vector Gaussian CEO problem with \textit{determinant} constraint. The reader may refer to, e.g.,~\cite{PCL03} and~\cite{SSBGS02} for examples of usage of such a determinant constraint in the context of equalization and others. The second is an extension of Tishby's single-encoder Information Bottleneck (IB) method~\cite{TPB99} to the case of multiple encoders. Information theoretically, this problem is known to be essentially a remote source coding problem with logarithmic loss distortion measure~\cite{HT07}; and, so, we use our result for the vector Gaussian CEO problem under logarithmic loss to infer a full characterization of the optimal trade-off between \textit{complexity} (or rate) and \textit{accuracy} (or information) for the distributed vector Gaussian IB problem.

Finally, for both DM and memoryless Gaussian settings we develop Blahut-Arimoto (BA)~\cite{B72,A72} type iterative algorithms that allow to compute (approximations of) the rate regions that are established in this paper; and prove their convergence to stationary points. We do so through a variational formulation that allows to determine the set of self-consistent equations that are satisfied by the stationary solutions. In the Gaussian case, we show that the algorithm reduces to an appropriate updating rule of the parameters of noisy linear projections. We note that the computation of the rate-distortion regions of multiterminal and CEO source coding problems is important \textit{per-se} as it involves non-trivial optimization problems over distributions of auxiliary random variables. Also, since the logarithmic loss function is instrumental in connecting problems of multiterminal rate-distortion theory with those of distributed learning and estimation, the algorithms that are developed in this paper also find usefulness in emerging applications in those areas. For example, our algorithm for the DM CEO problem under logarithm loss measure can be seen as a generalization of Tishby's IB method~\cite{TPB99} to the distributed learning setting. Similarly, our algorithm for the vector Gaussian CEO problem under logarithm loss measure can be seen as a generalization of that of~\cite{CGTW05, WM14, WFM14} to the distributed learning setting. For other extensions of the BA algorithm in the context of multiterminal data transmission and compression, the reader may refer to related works on point-to-point~\cite{CSX05,CB04} and broadcast and multiple access multiterminal settings~\cite{DYW04,RG04}. 

\subsection{Related Works}
 
As we already mentioned, this paper mostly relates to~\cite{CW14} in which the authors establish the rate-distortion region of the DM CEO problem under logarithmic loss in the case of an arbitrary number of encoders and no side information at the decoder, as well as that of the DM multiterminal source coding problem under logarithmic loss in the case of two encoders and no side information at the decoder. Motivated by the increasing interest for problems of learning and prediction, a growing body of works study point-to-point and multiterminal source coding models under logarithmic loss. In~\cite{JCVW15}, Jiao \textit{et al.} provide a fundamental justification for inference using logarithmic loss, by showing that under some mild conditions (the loss function satisfying some data processing property and alphabet size larger than two) the reduction in optimal risk in the presence of side information is uniquely characterized by mutual information, and the corresponding loss function coincides with the logarithmic loss. Somewhat related, in~\cite{PW18} Painsky and Wornell show that for binary classification problems the logarithmic loss dominates ``universally" any other convenient (i.e., smooth, proper and convex) loss function, in the sense that by minimizing the logarithmic loss one minimizes the regret that is associated with any such measures. More specifically, the divergence associated any smooth, proper and convex loss function is shown to be bounded from above by the Kullback-Leibler divergence, up to a multiplicative normalization constant. In~\cite{SRV17}, the authors study the problem of universal lossy compression under logarithmic loss, and derive bounds on the non-asymptotic fundamental limit of fixed-length universal coding with respect to a family of distributions that generalize the well-known minimax bounds for universal lossless source coding. In~\cite{LWOG18}, the minimax approach is studied for a problem of remote prediction and is shown to correspond to a one-shot minimax noisy source coding problem. The setting of remote prediction of~\cite{LWOG18} provides an approximate one-shot operational interpretation of the Information Bottleneck method of~\cite{TPB99},   which is also sometimes interpreted as a remote source coding problem under logarithmic loss~\cite{HT07}.  

Logarithmic loss is also instrumental in problems of data compression under a mutual information constraint~\cite{TC09}, and problems of relaying with relay nodes that are constrained not to know the users' codebooks (sometimes termed ``oblivious" or nomadic processing) which is studied in the single user case first by Sanderovich \textit{et al.} in~\cite{SSSK08} and then by Simeone \textit{et al.} in~\cite{SES11}, and in the multiple user multiple relay case by Aguerri \textit{et al.} in~\cite{EZSC17-2} and~\cite{EZSC17}. Other applications in which the logarithmic loss function can be used include secrecy and privacy~\cite{KCOSW16, CMMVCD17}, hypothesis testing against independence~\cite{AC86,H87,RW12,TC08, SWT18} and others.

\subsection{Outline and Notation}

The rest of this paper is organized as follows. Section~\ref{section-problem-formulation} provides a formal description of the vector Gaussian CEO model that we study in this paper, as well as some definitions that are related to it. Section~\ref{section-vector-CEO} contains the main results of this paper: an explicit characterization of the rate-distortion region of the memoryless vector Gaussian CEO problem with side information under logarithmic loss as well as the proof that Gaussian test channels with time-sharing exhaust the Berger-Tung rate region which is optimal. In Section~\ref{secV} we use our results on the CEO problem under logarithmic loss to infer complete solutions of two related problems: a quadratic vector Gaussian CEO problem with a determinant constraint on the covariance matrix error and the vector Gaussian distributed Information Bottleneck problem. Section~\ref{section-BA-algorithm} provides BA-type algorithms for the computation of the rate-distortion regions that are established in this paper in both DM and Gaussian cases as well as proofs of their convergence and some numerical examples. The proofs are deferred to the appendices section.  

\vspace{1em}
Throughout this paper, we use the following notation. Upper case letters are used to denote random variables, e.g., $X$; lower case letters are used to denote realizations of random variables, e.g., $x$; and calligraphic letters denote sets, e.g., $\mc X$.  The cardinality of a set $\mc X$ is denoted by $|\mc X|$. The length-$n$ sequence $(X_1,\ldots,X_n)$ is denoted as  $X^n$; and, for integers $j$ and $k$ such that $1 \leq k \leq j \leq n$, the sub-sequence $(X_k,X_{k+1},\ldots, X_j)$ is denoted as  $X_{k}^j$. Probability mass functions (pmfs) are denoted by $P_X(x)=\mathrm{Pr}\{X=x\}$; and, sometimes, for short, as $p(x)$. We use $\mc P(\mc X)$ to denote the set of discrete probability distributions on $\mc X$.  Boldface upper case letters denote vectors or matrices, e.g., $\dv X$, where context should make the distinction clear. For an integer $K \geq 1$, we denote the set of integers smaller or equal $K$ as $\mc K = \{ k \in \mathbb{N} \: : \: 1 \leq k \leq K\}$. For a set of integers $\mc S \subseteq \mc K$, the complementary set of $\mc S$ is denoted by $\mc S^c$, i.e., $\mc S^c = \{k \in \mathbb{N} \: : \: k \in \mc K \setminus \mc S\}$. Sometimes, for convenience we will need to define $\bar{\mc S}$ as $\bar{\mc S} = \{0\} \cup \mc S^c$. For a set of integers $\mc S \subseteq \mc K$; the notation $X_{\mc S}$ designates the set of random variables $\{X_k\}$ with indices in the set $\mc S$, i.e., $X_{\mc S}=\{X_k\}_{k \in \mc S}$. We denote the covariance of a zero mean, complex-valued, vector $\dv X$ by $\mathbf{\Sigma}_{\mathbf{x}} =\mathbb{E}[\mathbf{XX}^{\dag}]$, where $(\cdot)^{\dag}$ indicates conjugate transpose. Similarly, we denote the cross-correlation of two zero-mean vectors $\dv X$  and $\dv Y$ as $\mathbf{\Sigma}_{\mathbf{x},\mathbf{y}} = \mathbb{E}[\mathbf{XY}^{\dag}]$, and the conditional correlation matrix of $\mathbf{X}$ given $\mathbf{Y}$ as $\mathbf{\Sigma}_{\mathbf{x}|\mathbf{y}} = \mathbb{E}\big[\big(\dv X - \mathbb{E}[\dv X|\dv Y]\big)\big(\dv X - \mathbb{E}[\dv X|\dv Y]\big)^{\dag}\big]$ i.e., $\mathbf{\Sigma}_{\mathbf{x}|\mathbf{y}} = \mathbf{\Sigma}_{\mathbf{x}}-\mathbf{\Sigma}_{\mathbf{x},\mathbf{y}}\mathbf{\Sigma}_{\mathbf{y}}^{-1}\mathbf{\Sigma}_{\mathbf{y},\mathbf{x}}$. For matrices $\dv A$ and $\dv B$, the notation $\mathrm{diag}(\dv A, \dv B)$ denotes the block diagonal matrix whose diagonal elements are the matrices $\dv A$ and $\dv B$ and its off-diagonal elements are the all zero matrices. Also, for a set of integers $\mc J \subset \mathbb{N}$ and a family of matrices $\{\dv A_i\}_{i \in \mc J}$ of the same size, the notation $\dv A_{\mc J}$ is used to denote the (super) matrix obtained by concatenating vertically the matrices $\{\dv A_i\}_{i \in \mc J}$, where the indices are sorted in the ascending order, e.g, $\dv A_{\{0,2\}}=[\dv A^{\dag}_0, \dv A^{\dag}_2]^{\dag}$.

\section{Problem Formulation}\label{section-problem-formulation}  

Consider the $K$-encoder CEO problem  with side information shown in Figure~\ref{fig-system-model-CEO}, where the agents’ observations are assumed to be Gaussian noisy versions of a remote vector Gaussian source. Specifically, let $(\dv X, \dv Y_0, \dv Y_1, \ldots, \dv Y_K)$  be a jointly Gaussian random vector, with zero mean and covariance matrix ${\dv \Sigma}_{(\dv x, \dv y_0, \dv y_1, \ldots, \dv y_K)}$. Without loss of generality, the remote vector source $ \dv X \in \mathbb{C}^{n_x}$ is assumed complex-valued, has $n_x$-dimensions, and is assumed to be Gaussian with zero mean and covariance matrix $\dv\Sigma_{\dv x} \succeq \dv 0$. $\dv X^n=(\dv X_1, \ldots, \dv X_n)$ denotes a collection of $n$ independent copies of $\dv X$. The agents' observations are Gaussian noisy versions of the remote vector source, with the observation at agent $k \in \mc K$ given by
\begin{equation}~\label{mimo-gaussian-model}
\dv Y_{k,i} = \dv H_k \dv X_i + \dv N_{k,i} \;, \quad\text{for}\:\: i = 1,\ldots,n \;,
\end{equation}
where $\dv H_k \in \mathds{C}^{n_k\times n_x}$ represents the channel matrix connecting the remote vector source to the $k$-th agent; and $\dv N_{k,i} \in \mathds{C}^{n_k}$ is the noise vector at this agent, assumed to be i.i.d. Gaussian with zero-mean and independent from $\dv X_i$. The decoder has its own noisy observation of the remote vector source, in the form of a correlated jointly Gaussian side information stream $\dv Y^n_0$, with
\begin{equation}~\label{mimo-gaussian-model-2}
\dv Y_{0,i} = \dv H_0 \dv X_i  + \dv N_{0,i} \;, \quad\text{for}\:\: i = 1,\ldots,n \;,
\end{equation}
where, similar to the above, $\dv H_0 \in \mathds{C}^{n_0\times n_x}$ is the channel matrix connecting the remote vector source to the CEO; and $\dv N_{0,i} \in \mathds{C}^{n_0}$ is the noise vector at the CEO, assumed to be Gaussian with zero-mean and covariance matrix $\dv\Sigma_0 \succeq \dv 0$ and independent from $\dv X_i$. In this section, it is assumed that the agents' observations are independent conditionally given the remote vector source $\dv X^n$ and the side information $\dv Y^n_0$, i.e., for all $\mc S \subseteq \mc K$,
\begin{equation}~\label{markov-chain-assumption-gaussian-model} 
\dv Y^n_{\mc S} \mkv  (\dv X^n, \dv Y^n_0)  \mkv \dv Y^n_{\mc S^c} \;. \end{equation}
Using~\eqref{mimo-gaussian-model} and~\eqref{mimo-gaussian-model-2}, it is easy to see that the assumption~\eqref{markov-chain-assumption-gaussian-model} is equivalent to that the noises at the agents are independent conditionally given $\dv N_0$. For notational simplicity, $\dv\Sigma_k$ denotes the conditional covariance matrix of the noise $\dv N_k$ at the $k$-th agent given $\dv N_0$, i.e., $\dv\Sigma_k := \dv\Sigma_{\dv n_k|\dv n_0}$. Recalling that for a set $\mc S \subseteq \mc K$, $\dv N_{\mc S}$ designates the collection of noise vectors with indices in the set $\mc S$, in what follows we denote the covariance matrix of $\dv N_{\mc S}$ as $\dv\Sigma_{\dv n_{\mc S}}$.

In this model, Encoder (or agent) $k\in \mc K$ uses $R_k$ bits per sample to describe its observation $\dv Y^n_k$ to the decoder. The decoder wants to reconstruct the remote source $\dv X^n$ to within a prescribed fidelity level. Similar to~\cite{CW14},  we consider the reproduction alphabet to be equal to the set of probability distributions over the source alphabet $\mathbb{C}^{n\times n_x}$. In other words, the decoder generates `soft` estimates of the remote source's sequences. We consider the logarithmic loss distortion measure defined as in~\eqref{equation-log-loss-distortion-measure-n-letter}, where the letter-wise distortion measure is given by~\eqref{equation-log-loss-distortion-measure}.  

\begin{definition}~\label{definition-encoder}
A rate-distortion code (of blocklength $n$) for the model of Figure~\ref{fig-system-model-CEO} consists of $K$ encoding functions \begin{equation*}
\phi^{(n)}_k \: : \:  \mathbb{C}^{n \times n_k} \rightarrow \{1,\ldots,M^{(n)}_k\} \;, \quad\text{for}\:\: k=1,\ldots,K \;,
\end{equation*}
and a decoding function
\begin{equation*}
\psi^{(n)} \: : \:  \{1,\ldots,M^{(n)}_1\}\times \cdots \times \{1,\ldots,M^{(n)}_K\}\times \mathbb{C}^{n \times n_0} \rightarrow \mc P(\mathbb{C}^{n \times n_x}) \;,
\end{equation*}
where $\mc P(\mathbb{C}^{n \times n_x})$ designates the set of probability distributions over the $n$-Cartesian product of $\mathbb{C}^{n_x}$. \qedblack
\end{definition}

\begin{definition}~\label{definition-rate-distortion-region}
A rate-distortion tuple $(R_1,\ldots,R_K,D)$ is achievable for the vector Gaussian CEO problem with side information if there exist a blocklength  $n$, $K$ encoding functions $\{\phi^{(n)}_k\}_{k=1}^K$ and a decoding function $\psi^{(n)}$ such that
\begin{align*}
R_k &\geq \frac{1}{n}\log M^{(n)}_k \;, \quad\text{for}\:\: k=1,\ldots,K \;,\\
D &\geq \E\big[ d^{(n)}\big( \dv X^n, \psi^{(n)}(\phi^{(n)}_1(\dv Y^n_1), \ldots,\phi^{(n)}_K(\dv Y^n_K), \dv Y_0^n) \big) \big] \;.
\end{align*} 
	
\noindent The rate-distortion region $\mc{RD}_{\mathrm{VG}\text{-}\mathrm{CEO}}^\star$ of the vector Gaussian CEO problem under logarithmic loss is defined as the union of all non-negative tuples $(R_1,\ldots,R_K,D)$ that are achievable. \qedblack
\end{definition}

The main goal of this paper is to characterize the rate-distortion region $\mc{RD}_{\mathrm{VG}\text{-}\mathrm{CEO}}^\star$ of the vector Gaussian CEO problem under logarithmic loss.

\section{Main Results}\label{section-vector-CEO}

In this section we provide an explicit characterization of the rate-distortion region $\mc{RD}_{\mathrm{VG}\text{-}\mathrm{CEO}}^\star$ of the vector Gaussian CEO problem under logarithmic loss. Also, we show that Gaussian test channels with time-sharing exhaust the Berger-Tung region which is optimal.

\subsection{Rate-Distortion Region}

We first state the following theorem which follows essentially by an easy application of~\cite[Theorem 10]{CW14} that provides the rate-distortion region of the DM version of the problem.

\begin{definition}~\label{defintion-continous-RD1-CEO}
For given tuple of auxiliary random variables $(U_1,\ldots,U_K,Q)$ with distribution $P_{U_{\mc K},Q}(u_{\mc K},q)$ such that $P_{\dv X, \dv Y_0,\dv Y_{\mc K}, U_{\mc K},Q}(\dv x,\dv y_0,\dv y_{\mc K},u_{\mc K},q)$ factorizes as
\begin{equation}~\label{equation-joint-measure-auxiliary-random-variables-continous-1-CEO-problem}
P_{\dv X,\dv Y_0}(\dv x,\dv y_0) \prod_{k=1}^K P_{\dv Y_k|\dv X,\dv Y_0}(\dv y_k|\dv x,\dv y_0) \: P_Q(q) \prod_{k=1}^{K} P_{U_k|\dv Y_k,Q}(u_k|\dv y_k,q) \;,
\end{equation}
define $\mc{RD}_\mathrm{CEO}^\mathrm{I}(U_1,\ldots,U_K,Q)$ as the set of all non-negative rate-distortion tuples $(R_1,\ldots,R_K,D)$ that satisfy, for all subsets $\mc S \subseteq \mc K$, 
\begin{equation}~\label{equation-continous-RD1-CEO}
D + \sum_{k \in \mc S} R_k \geq \sum_{k \in \mc S} I(\dv Y_k;U_k|\dv X,\dv Y_0,Q) + h(\dv X|U_{\mc S^c},\dv Y_0,Q) \;.
\end{equation}	
Also, let $\mc{RD}_\mathrm{CEO}^\mathrm{I} := \bigcup \mc{RD}_\mathrm{CEO}^\mathrm{I} (U_1,\ldots,U_K,Q)$ where the union is taken over all tuples $(U_1,\ldots,U_K,Q)$ with distributions that satisfy~\eqref{equation-joint-measure-auxiliary-random-variables-continous-1-CEO-problem}.
\qedblack
\end{definition}	

\begin{definition}~\label{defintion-continous-RD2-CEO}
For given tuple of auxiliary random variables $(V_1,\ldots,V_K,Q^\prime)$ with distribution $P_{V_{\mc K},Q^\prime}(v_{\mc K},q^\prime)$ such that $P_{\dv X, \dv Y_0,\dv Y_{\mc K}, V_{\mc K},Q^\prime}(\dv x,\dv y_0,\dv y_{\mc K},v_{\mc K},q^\prime)$ factorizes as
\begin{equation}~\label{equation-joint-measure-auxiliary-random-variables-continous-2-CEO-problem}
P_{\dv X,\dv Y_0}(\dv x,\dv y_0) \prod_{k=1}^K P_{\dv Y_k|\dv X,\dv Y_0}(\dv y_k|\dv x,\dv y_0) \: P_{Q^\prime}(q^\prime) \prod_{k=1}^{K} P_{V_k|\dv Y_k,Q^\prime}(v_k|\dv y_k,q^\prime) \;,
\end{equation}
define $\mc{RD}_\mathrm{CEO}^\mathrm{II}(V_1,\ldots,V_K,Q^\prime)$ as the set of all non-negative rate-distortion tuples $(R_1,\ldots,R_K,D)$ that satisfy, for all subsets $\mc S \subseteq \mc K$, 
\begin{align} 
\sum_{k \in \mc S} R_k &\geq I(\dv Y_{\mc S};V_{\mc S}|V_{\mc S^c},\dv Y_0,Q^\prime) \label{equation-rate-constraints}\\
D &\geq h(\dv X|V_1,\ldots,V_K,\dv Y_0,Q^\prime) \;.
\end{align}
Also, let $\mc{RD}_\mathrm{CEO}^\mathrm{II} := \bigcup \mc{RD}_\mathrm{CEO}^\mathrm{II} (V_1,\ldots,V_K,Q')$ where the union is taken over all tuples $(V_1,\ldots,V_K,Q')$ with distributions that satisfy~\eqref{equation-joint-measure-auxiliary-random-variables-continous-2-CEO-problem}. \qedblack
\end{definition}

\begin{theorem}~\label{theorem-continuous-RD1-CEO}
The rate-distortion region for the vector Gaussian CEO problem under logarithmic loss is given by  
\begin{equation*}
\mc{RD}_{\mathrm{VG}\text{-}\mathrm{CEO}}^\star = \mc{RD}_\mathrm{CEO}^\mathrm{I} = \mc{RD}_\mathrm{CEO}^\mathrm{II} \;.
\end{equation*}
\end{theorem}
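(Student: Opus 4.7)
The plan is to establish the two equalities $\mc{RD}_{\mathrm{VG}\text{-}\mathrm{CEO}}^\star = \mc{RD}_\mathrm{CEO}^\mathrm{I}$ and $\mc{RD}_\mathrm{CEO}^\mathrm{I} = \mc{RD}_\mathrm{CEO}^\mathrm{II}$ by proving the chain of inclusions $\mc{RD}_\mathrm{CEO}^\mathrm{II} \subseteq \mc{RD}_{\mathrm{VG}\text{-}\mathrm{CEO}}^\star \subseteq \mc{RD}_\mathrm{CEO}^\mathrm{I} \subseteq \mc{RD}_\mathrm{CEO}^\mathrm{II}$. The first inclusion is a Berger-Tung achievability for continuous alphabets under log-loss, the second is a converse in the spirit of Courtade-Weissman~\cite{CW14}, and the third---the heart of the Gaussian-specific work---relies on the de Bruijn identity together with properties of Fisher information.

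For $\mc{RD}_\mathrm{CEO}^\mathrm{II} \subseteq \mc{RD}_{\mathrm{VG}\text{-}\mathrm{CEO}}^\star$, I would fix any $(V_1,\ldots,V_K,Q')$ obeying~\eqref{equation-joint-measure-auxiliary-random-variables-continous-2-CEO-problem} and use standard Berger-Tung binning conditioned on the side information $\dv Y_0$; the constraints~\eqref{equation-rate-constraints} ensure that the decoder recovers the $V$-sequences jointly typical with its own observations. The decoder then emits the soft reconstruction $\hat{\dv x}^n(\dv x^n) = \prod_{i=1}^n \Pr\{\dv X_i = \dv x_i \mid V_{1,i},\ldots,V_{K,i},\dv Y_{0,i},Q'_i\}$, which under logarithmic loss is the Bayes-optimal soft estimator and achieves expected distortion $h(\dv X \mid V_1,\ldots,V_K,\dv Y_0,Q')$. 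Continuous-alphabet subtleties are handled by the usual quantization/discretization argument.

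For $\mc{RD}_{\mathrm{VG}\text{-}\mathrm{CEO}}^\star \subseteq \mc{RD}_\mathrm{CEO}^\mathrm{I}$, I would follow the Courtade-Weissman single-letterization of~\cite{CW14}, adapted to account for the decoder's side information $\dv Y_0^n$. The identification $U_{k,i} := (M_k, \dv Y_0^{i-1})$ together with a time-sharing index $Q$ uniform on $\{1,\ldots,n\}$ yields auxiliaries satisfying~\eqref{equation-joint-measure-auxiliary-random-variables-continous-1-CEO-problem}; combining the chain-rule manipulations $nR_k \geq H(M_k) \geq I(\dv Y_k^n;M_k \mid \dv X^n,\dv Y_0^n)$ with the soft-reconstruction lower bound $\E[d^{(n)}(\dv X^n,\hat{\dv X}^n)] \geq \tfrac{1}{n}\sum_i h(\dv X_i \mid U_{\mc S^c,i},\dv Y_{0,i},Q)$ delivers the subset inequality~\eqref{equation-continous-RD1-CEO}. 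The only change relative to the DM proof is that the source entropy on the right is a differential entropy, which remains well-defined under the Gaussian model.

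The main obstacle is the inclusion $\mc{RD}_\mathrm{CEO}^\mathrm{I} \subseteq \mc{RD}_\mathrm{CEO}^\mathrm{II}$ for the vector Gaussian model. Given arbitrary (not necessarily Gaussian) auxiliaries realizing a boundary point of $\mc{RD}^\mathrm{I}$, the goal is to exhibit Gaussian auxiliaries $V_k^G = \dv A_k \dv Y_k + \dv Z_k^G$, with an auxiliary time-sharing variable $Q'$, that satisfy the Berger-Tung constraints at the same $(R_1,\ldots,R_K,D)$. I plan to use an Ekrem-Ulukus-style maneuver~\cite{EU14}: rewrite the quantities $I(\dv Y_k;U_k \mid \dv X,\dv Y_0,Q)$ and $h(\dv X \mid U_{\mc S^c},\dv Y_0,Q)$ in terms of conditional MMSE and Fisher-information matrices, invoke the vector de Bruijn identity together with the matrix Cramér-Rao inequality to obtain semidefinite bounds on $\mathbf{\Sigma}_{\dv x|U_{\mc S^c},\dv y_0,Q}$, and then read off Gaussian covariances $\mathbf{\Sigma}_{\dv z_k^G}$ that saturate these bounds uniformly across subsets $\mc S \subseteq \mc K$. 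The delicate point, and the reason this approach succeeds here whereas it fails under quadratic distortion, is that the distortion is a full differential entropy (hence a log-determinant) rather than a trace of the MMSE matrix; the matrix inequalities therefore close consistently, and no additional slackness is introduced in passing to the Gaussian candidate. A convexifying time-sharing step over finitely many such Gaussian choices sweeps out the full region, thereby also proving that Gaussian test channels with time-sharing exhaust the Berger-Tung inner bound.
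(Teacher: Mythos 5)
Your decomposition into the chain $\mc{RD}_\mathrm{CEO}^\mathrm{II} \subseteq \mc{RD}_{\mathrm{VG}\text{-}\mathrm{CEO}}^\star \subseteq \mc{RD}_\mathrm{CEO}^\mathrm{I} \subseteq \mc{RD}_\mathrm{CEO}^\mathrm{II}$ is a legitimate skeleton, and the first two links (Berger--Tung achievability with the Bayes-optimal soft reconstruction, and a Courtade--Weissman-style converse) are sound in spirit. The genuine gap is in the third link. The equality $\mc{RD}_\mathrm{CEO}^\mathrm{I} = \mc{RD}_\mathrm{CEO}^\mathrm{II}$ has nothing to do with Gaussian optimality: both regions are unions over \emph{arbitrary} auxiliaries $(U_{\mc K},Q)$ and $(V_{\mc K},Q')$ satisfying the same Markov structure, and the inclusion is a purely combinatorial fact. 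For each fixed conditional law, the form-I inequalities~\eqref{equation-continous-RD1-CEO} define a polyhedron whose projection at each distortion level is a contra-polymatroid; its extreme points are dominated by successive Wyner--Ziv (Berger--Tung) corner points, and time-sharing over $Q'$ sweeps the rest. This is the submodularity argument of the proof of~\cite[Theorem 10]{CW14} (and is exactly how the paper argues, both in Appendix~A and in the corner-point domination proof of Proposition~\ref{proposition-continous-RD2-CEO}). By instead invoking de Bruijn, Fisher information and a matrix Cram\'er--Rao bound to ``exhibit Gaussian auxiliaries,'' you are importing the converse machinery of Theorem~\ref{theorem-Gauss-RD-CEO} into a statement that does not need it, and you are attempting to prove the strictly stronger claim of Proposition~\ref{proposition-continous-RD2-CEO} along the way. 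Worse, the step you lean on --- ``read off Gaussian covariances $\mathbf{\Sigma}_{\dv z_k^G}$ that saturate these bounds uniformly across subsets $\mc S$'' --- is precisely the hard part and is asserted rather than proved; in the paper the Fisher-information technique only yields an \emph{outer bound} on $\mc{RD}_\mathrm{CEO}^\mathrm{I}$ that happens to match the Gaussian evaluation, and the passage to Gaussian Berger--Tung codes is done separately by corner-point domination, not by saturating matrix inequalities.

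Two further remarks. First, the paper does not run a from-scratch converse for $\mc{RD}_{\mathrm{VG}\text{-}\mathrm{CEO}}^\star \subseteq \mc{RD}_\mathrm{CEO}^\mathrm{I}$: it absorbs the side information by redefining the remote source as $\tilde{\dv X}=(\dv X,\dv Y_0)$ and adding a virtual agent observing $\dv Y_0$ at infinite rate, applies~\cite[Theorem 10]{CW14} verbatim, and then transfers to continuous alphabets by rewriting the region in rate-information form (mutual informations only discretize cleanly, whereas your form with the differential entropy $h(\dv X_i\mid\cdot)$ on the right requires more care). Second, your auxiliary identification $U_{k,i}:=(M_k,\dv Y_0^{i-1})$ does not obviously satisfy the required Markov chain $U_k \mkv (\dv Y_k,Q) \mkv (\dv X,\dv Y_0)$ demanded by~\eqref{equation-joint-measure-auxiliary-random-variables-continous-1-CEO-problem}, since $\dv Y_0^{i-1}$ is not a function of $\dv Y_{k,i}$; this is exactly the kind of issue the paper's virtual-agent reduction sidesteps.
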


\begin{proof}
The proof of Theorem~\ref{theorem-continuous-RD1-CEO} is given in Appendix~\ref{proof-continous-RD1-CEO}.
\end{proof}

For convenience, we now introduce the following notation which will be instrumental in what follows. Let, for every set $\mc S \subseteq \mc K$, the set $\bar{\mc S} = \{0\} \cup \mc S^c$. Also, for $\mc S \subseteq \mc K$ and given matrices $\{\dv\Omega_k\}_{k=1}^K$ such that $\dv 0 \preceq \dv\Omega_k \preceq \dv\Sigma_k^{-1}$, let $\boldsymbol{\Lambda}_{\bar{\mc S}}$ designate the block-diagonal matrix given by
\begin{align}~\label{equation-definition-T}
\boldsymbol{\Lambda}_{\bar{\mc S}} :=
\begin{bmatrix}
\dv 0 & \dv 0 \\
\dv 0 & \mathrm{diag}(\{ \dv\Sigma_k - \dv\Sigma_k \dv\Omega_k \dv\Sigma_k \}_{k\in\mc S^c})
\end{bmatrix} \;,
\end{align}
where $\dv 0$ in the principal diagonal elements is the $n_0{\times}n_0$-all zero matrix. Besides, the notation $\dv H_{\mc S}$ is used to denote the (super) matrix obtained by concatenating vertically the matrices $\{\dv H_i\}_{i \in \mc S}$, where the indices are sorted in the ascending order, e.g, $\dv H_{\{0,2\}}=[\dv H^{\dag}_0, \dv H^{\dag}_2]^{\dag}$.

\noindent The following theorem is the main contribution of this paper, which is an explicit characterization of the rate-distortion region of the vector Gaussian CEO problem with side information under logarithmic loss measure.

\begin{theorem}~\label{theorem-Gauss-RD-CEO}
The rate-distortion region $\mc{RD}_{\mathrm{VG}\text{-}\mathrm{CEO}}^\star$ of the vector Gaussian CEO problem under logarithmic loss is given by the set of all non-negative rate-distortion tuples $(R_1,\ldots, R_K,D)$ that satisfy, for all subsets $\mc S \subseteq \mc K$,  
\begin{align*}
D + \sum_{k \in \mc S} R_k \geq  \sum_{k\in \mc S} \log \frac{1}{\left| \dv I - \dv\Omega_k \dv\Sigma_k \right|} +
\log\left| (\pi e) \left( \dv\Sigma_{\dv x}^{-1} + \dv H_{\bar{\mc S}}^\dagger \dv\Sigma_{\dv n_{\bar{\mc S}}}^{-1} \big( \dv I - \boldsymbol{\Lambda}_{\bar{\mc S}} \dv\Sigma_{\dv n_{\bar{\mc S}}}^{-1}  \big) \dv H_{\bar{\mc S}} \right)^{-1} \right| \;,
\end{align*}
for matrices $\{\dv\Omega_k\}_{k=1}^K$ such that $\dv 0 \preceq \dv\Omega_k \preceq \dv\Sigma_k^{-1}$, where $\bar{\mc S}=\{0\} \cup \mc S^c$ and $\boldsymbol{\Lambda}_{\bar{\mc S}}$ is as defined by~\eqref{equation-definition-T}.  
\end{theorem}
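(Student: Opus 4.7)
The plan is to use Theorem~\ref{theorem-continuous-RD1-CEO} as the starting point and determine which rate-distortion tuples in $\mc{RD}_\mathrm{CEO}^\mathrm{I}$ are achievable when $(\dv X, \dv Y_0, \dv Y_1, \ldots, \dv Y_K)$ is jointly Gaussian as in~\eqref{mimo-gaussian-model}--\eqref{mimo-gaussian-model-2}. This splits into two tasks: (i) an achievability calculation obtained by evaluating $\mc{RD}_\mathrm{CEO}^\mathrm{I}$ under a Gaussian test-channel choice, and (ii) a matching converse that, starting from an arbitrary tuple $(U_1,\ldots,U_K,Q)$ satisfying the factorization~\eqref{equation-joint-measure-auxiliary-random-variables-continous-1-CEO-problem}, lower-bounds the constraints~\eqref{equation-continous-RD1-CEO} by the explicit Gaussian expression of Theorem~\ref{theorem-Gauss-RD-CEO}.

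For achievability, I would take $Q$ deterministic and choose the Gaussian test channels
\[
U_k = \dv Y_k + \dv Z_k,
\]
where $\dv Z_k \sim \mathcal{CN}(\dv 0, \dv\Sigma_{Z_k})$ are mutually independent of everything else. The parameter $\dv\Omega_k$ is identified through $\dv\Omega_k = \dv\Sigma_k^{-1} - (\dv\Sigma_k + \dv\Sigma_{Z_k})^{-1}$, so the admissible set $\dv 0 \preceq \dv\Omega_k \preceq \dv\Sigma_k^{-1}$ corresponds precisely to $\dv\Sigma_{Z_k} \succeq \dv 0$. Direct computation using standard Gaussian formulas then yields $I(\dv Y_k;U_k|\dv X,\dv Y_0) = \log|\dv I - \dv\Omega_k\dv\Sigma_k|^{-1}$, while the MMSE estimate of $\dv X$ from $\{U_k\}_{k\in\mc S^c}$ and $\dv Y_0$ produces, via a matrix inversion lemma manipulation, a conditional covariance whose log-determinant matches the second term on the right-hand side of the theorem; the block-diagonal matrix $\boldsymbol{\Lambda}_{\bar{\mc S}}$ arises naturally as the residual MMSE covariance of the noises $\{\dv N_k\}_{k\in\mc S^c}$ given the corresponding auxiliaries.

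For the converse, I would adopt the Ekrem-Ulukus style argument built on the de Bruijn identity and Fisher information. Fixing $\mc S \subseteq \mc K$ and conditioning on $Q=q$, I would associate to each encoder a matrix $\dv\Omega_k(q)$ through a conditional MMSE / Fisher-information identity applied to $\dv N_k$ given $(U_k, \dv Y_0, \dv X, Q=q)$, which by construction satisfies $\dv 0 \preceq \dv\Omega_k(q) \preceq \dv\Sigma_k^{-1}$. The term $I(\dv Y_k;U_k|\dv X,\dv Y_0, Q=q)$ is then bounded below by $\log|\dv I - \dv\Omega_k(q)\dv\Sigma_k|^{-1}$ via a Gaussian-maximizes-entropy step. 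Simultaneously, the de Bruijn identity recasts $h(\dv X|U_{\mc S^c}, \dv Y_0, Q=q)$ as an integral of the Fisher information of a Gaussian-smoothed version of $\dv X$, which is controlled through the Markov structure~\eqref{markov-chain-assumption-gaussian-model} and a matrix Cram\'er-Rao / complementary-identity bound expressed in terms of the same $\dv\Omega_k(q)$'s. A concluding Jensen step over $Q$, using concavity of $\log|\cdot|$ on the PSD cone, removes the $q$-dependence and leaves a single family $\{\dv\Omega_k\}$ of matrices that realizes the stated bound.

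The main obstacle will be the converse bound on $h(\dv X|U_{\mc S^c}, \dv Y_0, Q)$. The entropy power inequality, the natural tool in the quadratic-loss setting, is known to be non-tight in the vector Gaussian regime, which is exactly why the de Bruijn / Fisher-information route is required. The delicate accounting is to show that the Fisher-information lower bound reconstructs the precise matrix $(\dv\Sigma_{\dv x}^{-1} + \dv H_{\bar{\mc S}}^\dagger \dv\Sigma_{\dv n_{\bar{\mc S}}}^{-1}(\dv I - \boldsymbol{\Lambda}_{\bar{\mc S}}\dv\Sigma_{\dv n_{\bar{\mc S}}}^{-1})\dv H_{\bar{\mc S}})^{-1}$, which hinges on careful block-matrix manipulations and on the conditional-independence structure of the agents' noises given $\dv N_0$. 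Unlike the quadratic case, where this same technique leaves a gap, the logarithmic-loss distortion makes the Fisher-information inequality tight, and the converse closes cleanly against the Gaussian-test-channel achievability.
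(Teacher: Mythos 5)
Your overall route is the same as the paper's: achievability by evaluating $\mc{RD}_\mathrm{CEO}^\mathrm{I}$ from Theorem~\ref{theorem-continuous-RD1-CEO} with Gaussian test channels and no time-sharing, and a converse in the Ekrem--Ulukus style that (i) defines per-$q$ matrices $\dv\Omega_{k,q}$ through the conditional MMSE identity $\mathrm{mmse}(\dv Y_k|\dv X,U_{k,q},\dv Y_0,q)=\dv\Sigma_k-\dv\Sigma_k\dv\Omega_{k,q}\dv\Sigma_k$, (ii) lower-bounds $I(\dv Y_k;U_k|\dv X,\dv Y_0,Q=q)$ by a Gaussian-maximizes-entropy step, (iii) lower-bounds $h(\dv X|U_{\mc S^c},\dv Y_0,Q=q)$ by $\log|(\pi e)\dv J^{-1}|$ and computes the Fisher information \emph{exactly} via the MMSE--Fisher complementary identity, the representation $\dv X=\dv G_{\bar{\mc S}}\dv Y_{\bar{\mc S}}+\dv W_{\bar{\mc S}}$, and the vanishing of the MMSE cross-terms implied by~\eqref{markov-chain-assumption-gaussian-model}, and (iv) closes with Jensen over $Q$ using concavity of $\log|\cdot|$. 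You correctly identify the Fisher-information computation as the crux.

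One concrete slip: your achievability parametrization $\dv\Omega_k=\dv\Sigma_k^{-1}-(\dv\Sigma_k+\dv\Sigma_{Z_k})^{-1}$ is the complement of the right one. With $U_k=\dv Y_k+\dv Z_k$ one has $I(\dv Y_k;U_k|\dv X,\dv Y_0)=\log\bigl(|\dv\Sigma_k+\dv\Sigma_{Z_k}|/|\dv\Sigma_{Z_k}|\bigr)$, which equals $\log|\dv I-\dv\Omega_k\dv\Sigma_k|^{-1}$ only for $\dv\Omega_k=(\dv\Sigma_k+\dv\Sigma_{Z_k})^{-1}$; under your identification it would instead give $\log\bigl(|\dv\Sigma_k+\dv\Sigma_{Z_k}|/|\dv\Sigma_k|\bigr)$. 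The choice $\dv\Omega_k=(\dv\Sigma_k+\dv\Sigma_{Z_k})^{-1}$ is also what makes $\dv\Sigma_k-\dv\Sigma_k\dv\Omega_k\dv\Sigma_k$ equal to the residual MMSE entering $\boldsymbol{\Lambda}_{\bar{\mc S}}$ in~\eqref{equation-definition-T}, so that the achievability evaluation and the converse's defining relation~\eqref{equation-outer-1-mmse} use the same parameter. This is a fixable bookkeeping error, not a flaw in the method; with that correction your argument coincides with the paper's proof.
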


\begin{proof}
The proof of the direct part of Theorem~\ref{theorem-Gauss-RD-CEO} follows simply by evaluating the region $\mc{RD}_\mathrm{CEO}^\mathrm{I}$ as described in Theorem~\ref{theorem-continuous-RD1-CEO} using Gaussian test channels and no time-sharing. Specifically, we set $Q= \emptyset$ and $p(u_k|\dv y_k,q) = \mc{CN}(\dv y_k,  \dv\Sigma_k^{1/2}(\dv\Omega_k-\dv I) \dv\Sigma_k^{1/2})$, $k\in \mc K$. The proof of the converse appears in Appendix~\ref{proof-converse-Gauss-RD-CEO}.	
\end{proof}

In the case in which the noises at the agents are independent among them and from the noise $\dv N_0$ at the CEO, the result of Theorem~\ref{theorem-Gauss-RD-CEO} takes a simpler form which is stated in the following corollary.       

\begin{corollary}~\label{corollary-Gauss-RD-CEO}
Consider the vector Gaussian CEO problem described by~\eqref{mimo-gaussian-model} and~\eqref{mimo-gaussian-model-2} with the noises $(\dv N_1,\ldots,\dv N_K)$ being independent among them and with $\dv N_0$. Under logarithmic loss, the rate-distortion region this model is given by the set of all non-negative tuples $(R_1,\ldots, R_K,D)$ that satisfy, for all subsets $\mc S \subseteq \mc K$,  
\begin{align*}
D + \sum_{k \in \mc S} R_k \geq  \sum_{k\in \mc S} \log \frac{1}{\left| \dv I - \dv\Omega_k \dv\Sigma_k \right|} + \log \bigg| (\pi e) \big( \dv\Sigma_{\dv x}^{-1} + \dv H_0^\dagger \dv\Sigma_0^{-1} \dv H_0 + \sum_{k \in \mc S^c} \dv H_k^\dagger \dv\Omega_k \dv H_k \big)^{-1} \bigg| \;,
\end{align*}
for some matrices $\{\dv\Omega_k\}_{k=1}^K$ such that $\dv 0 \preceq \dv\Omega_k \preceq \dv\Sigma_k^{-1}$. \qedblack
\end{corollary}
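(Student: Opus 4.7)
The plan is to derive Corollary 1 as a direct specialization of Theorem 2 by substituting the block-diagonal structure that the independence hypothesis imposes on the noise covariance, and then simplifying the resulting matrix expression by elementary block-matrix algebra.

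First, I would verify that Theorem 2 is applicable under the hypotheses of the corollary, i.e., that the conditional-independence assumption~\eqref{markov-chain-assumption-gaussian-model} is automatically satisfied. This is immediate: if $\dv N_0, \dv N_1, \ldots, \dv N_K$ are mutually independent and independent of $\dv X$, then conditioning on $(\dv X^n, \dv Y_0^n)$ fixes $\dv X^n$ and does not couple the $\dv N_k$'s with one another, so the observations $\dv Y_k^n$ are conditionally independent as required. Hence $\mc{RD}_{\mathrm{VG}\text{-}\mathrm{CEO}}^\star$ is given by the expression of Theorem 2, and what remains is pure algebra.

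Next, I would exploit the decoupling to simplify the covariance terms. Independence of $\dv N_0$ and $\dv N_k$ gives $\dv\Sigma_k = \dv\Sigma_{\dv n_k|\dv n_0} = \dv\Sigma_{\dv n_k}$, and the joint covariance $\dv\Sigma_{\dv n_{\bar{\mc S}}}$ becomes block-diagonal,
\begin{equation*}
\dv\Sigma_{\dv n_{\bar{\mc S}}} = \mathrm{diag}\bigl(\dv\Sigma_0, \{\dv\Sigma_k\}_{k\in \mc S^c}\bigr), \qquad \dv\Sigma_{\dv n_{\bar{\mc S}}}^{-1} = \mathrm{diag}\bigl(\dv\Sigma_0^{-1}, \{\dv\Sigma_k^{-1}\}_{k\in \mc S^c}\bigr).
\end{equation*}
Since by its definition~\eqref{equation-definition-T} the matrix $\boldsymbol{\Lambda}_{\bar{\mc S}}$ is block-diagonal with a zero block in the position corresponding to $\dv N_0$ and blocks $\dv\Sigma_k - \dv\Sigma_k \dv\Omega_k \dv\Sigma_k$ for $k\in \mc S^c$, the product preserves block-diagonality. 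Using the per-block identity $\dv\Sigma_k^{-1} - \dv\Sigma_k^{-1}(\dv\Sigma_k - \dv\Sigma_k \dv\Omega_k \dv\Sigma_k)\dv\Sigma_k^{-1} = \dv\Omega_k$, I obtain
\begin{equation*}
\dv\Sigma_{\dv n_{\bar{\mc S}}}^{-1}\bigl(\dv I - \boldsymbol{\Lambda}_{\bar{\mc S}} \dv\Sigma_{\dv n_{\bar{\mc S}}}^{-1}\bigr) = \mathrm{diag}\bigl(\dv\Sigma_0^{-1}, \{\dv\Omega_k\}_{k\in \mc S^c}\bigr).
\end{equation*}

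Finally, I would substitute this into the quadratic form appearing in Theorem 2. Recalling that $\dv H_{\bar{\mc S}}$ is the vertical concatenation of $\dv H_0$ and $\{\dv H_k\}_{k\in\mc S^c}$, the block-diagonal structure turns the quadratic form into a sum,
\begin{equation*}
\dv H_{\bar{\mc S}}^\dagger \dv\Sigma_{\dv n_{\bar{\mc S}}}^{-1}\bigl(\dv I - \boldsymbol{\Lambda}_{\bar{\mc S}} \dv\Sigma_{\dv n_{\bar{\mc S}}}^{-1}\bigr)\dv H_{\bar{\mc S}} = \dv H_0^\dagger \dv\Sigma_0^{-1} \dv H_0 + \sum_{k\in\mc S^c} \dv H_k^\dagger \dv\Omega_k \dv H_k,
\end{equation*}
which, substituted into the bound of Theorem 2, yields exactly the corollary's expression. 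There is no real obstacle in this argument; the only care required is keeping track of which agents' indices belong to $\mc S^c$ versus $\mc S$ and properly identifying the per-block structure of $\boldsymbol{\Lambda}_{\bar{\mc S}}$.
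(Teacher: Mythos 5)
Your proposal is correct and follows exactly the route the paper intends: the corollary is stated as an immediate specialization of Theorem~\ref{theorem-Gauss-RD-CEO}, and your block-diagonal reduction of $\dv\Sigma_{\dv n_{\bar{\mc S}}}^{-1}\bigl(\dv I - \boldsymbol{\Lambda}_{\bar{\mc S}} \dv\Sigma_{\dv n_{\bar{\mc S}}}^{-1}\bigr)$ to $\mathrm{diag}\bigl(\dv\Sigma_0^{-1}, \{\dv\Omega_k\}_{k\in \mc S^c}\bigr)$, together with the check that independence of the noises implies the Markov condition~\eqref{markov-chain-assumption-gaussian-model} and $\dv\Sigma_k = \dv\Sigma_{\dv n_k}$, is precisely the algebra the paper leaves implicit.
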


\begin{remark}The direct part of Theorem~\ref{theorem-Gauss-RD-CEO} shows that Gaussian test channels and no-time sharing exhaust the region. For the converse proof of Theorem~\ref{theorem-Gauss-RD-CEO}, we derive an outer bound on the region $\mc{RD}_\mathrm{CEO}^\mathrm{I}$. In doing so, we use the de Bruijn identity, a connection between differential entropy and Fisher information, along with the properties of MMSE and Fisher information. By opposition to the case of quadratic distortion measure for which the application of this technique was shown in~\cite{EU14} to result in an outer bound that is generally non-tight, Theorem~\ref{theorem-Gauss-RD-CEO} shows that the approach is successful in the case of logarithmic loss distortion measure as it yields a complete characterization of the region. On this aspect, note that in the specific case of scalar Gaussian sources, an alternate converse proof may be obtained by extending that of the scalar Gaussian many-help-one source coding problem by Oahama~\cite{O05} and Prabhakaran \textit{et al.}~\cite{PTR04} through accounting for additional side information at CEO and replacing the original mean square error distortion constraint with conditional entropy. However, such approach does not seem conclusive in the vector case, as the entropy power inequality is known to be generally non-tight in this setting~\cite{TP05,WSS06}.
\qedblack
\end{remark}

\begin{remark}
The result of Theorem~\ref{theorem-Gauss-RD-CEO} generalizes that of~\cite{TC09}, which considers the case of only one agent, i.e., the remote vector Gaussian Wyner-Ziv model under logarithmic loss, to the case of an arbitrarily number of agents. The converse proof of~\cite{TC09}, which relies on the technique of orthogonal transform to reduce the vector setting to one of parallel scalar Gaussian settings, seems insufficient to diagonalize all the noise covariance matrices simultaneously in the case of more than one agent. The result of Theorem~\ref{theorem-Gauss-RD-CEO} is also connected to recent developments on characterizing the capacity of multiple-input multiple-output (MIMO) relay channels in which the relay nodes are connected to the receiver through error-free finite-capacity links (i.e., the so-called cloud radio access networks). In particular, the reader may refer to~\cite[Theorem 4]{ZXYC16} where important progress is done, and \cite{EZSC17} where compress-and-forward with joint decompression-decoding is shown to be optimal under the constraint of oblivious relay processing. \qedblack
\end{remark}

\subsection{Gaussian Test Channels with Time-Sharing Exhaust the Berger-Tung Region}\label{section-Berger-Tung}

Theorem~\ref{theorem-continuous-RD1-CEO} shows that the union of all rate-distortion tuples that satisfy~\eqref{equation-continous-RD1-CEO} for all subsets $\mc S \subseteq \mc K$ coincides
with the Berger-Tung inner bound in which time-sharing is used. The direct part of Theorem~\ref{theorem-Gauss-RD-CEO} is obtained by evaluating~\eqref{equation-continous-RD1-CEO} using Gaussian test channels and no time-sharing, i.e., $Q=\emptyset$, not the Berger-Tung inner bound. The reader may wonder: i) whether Gaussian test channels also exhaust the Berger-Tung inner bound for the vector Gaussian CEO problem that we study here, and ii) whether time-sharing is needed with the Berger-Tung scheme. In this section, we answer both questions in the affirmative. In particular, we show that the Berger-Tung coding scheme with Gaussian test channels and time-sharing achieves distortion levels that are not larger than any other coding scheme.


\begin{proposition}~\label{proposition-continous-RD2-CEO}
The rate-distortion region for the vector Gaussian CEO problem under logarithmic loss is given by  
\begin{equation*}
\mc{RD}_{\mathrm{VG}\text{-}\mathrm{CEO}}^\star = \bigcup \mc{RD}_\mathrm{CEO}^\mathrm{II} (V_1^\mathrm{G},\ldots,V_K^\mathrm{G},Q^\prime) \;,
\end{equation*}
where $\mc{RD}_\mathrm{CEO}^\mathrm{II} (\cdot)$ is as given in Definition~\ref{defintion-continous-RD2-CEO} and the superscript $\mathrm{G}$ is used to denote that the union is taken over Gaussian distributed $V_k^{\mathrm{G}}\sim p(v_k|\dv y_k,q')$ conditionally on $(\dv Y_k, Q')$.
\end{proposition}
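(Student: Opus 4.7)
The plan is to establish the two inclusions separately. One direction, $\bigcup \mc{RD}_\mathrm{CEO}^\mathrm{II}(V_1^\mathrm{G},\ldots,V_K^\mathrm{G},Q^\prime) \subseteq \mc{RD}_{\mathrm{VG}\text{-}\mathrm{CEO}}^\star$, is immediate: any Gaussian tuple $(V_1^\mathrm{G},\ldots,V_K^\mathrm{G})$ drawn conditionally on $(\dv Y_{\mc K},Q')$ satisfies the factorization~\eqref{equation-joint-measure-auxiliary-random-variables-continous-2-CEO-problem}, so the Gaussian subset of the Berger-Tung inner bound is contained in $\mc{RD}_\mathrm{CEO}^\mathrm{II}$, which coincides with $\mc{RD}_{\mathrm{VG}\text{-}\mathrm{CEO}}^\star$ by Theorem~\ref{theorem-continuous-RD1-CEO}.

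For the reverse inclusion I would leverage the explicit characterization given by Theorem~\ref{theorem-Gauss-RD-CEO}. Fix any achievable tuple $(R_1,\ldots,R_K,D)$; Theorem~\ref{theorem-Gauss-RD-CEO} guarantees the existence of matrices $\{\dv\Omega_k\}$ with $\dv 0 \preceq \dv\Omega_k \preceq \dv\Sigma_k^{-1}$ for which the $2^K$ joint inequalities hold. Choose the same Gaussian test channels $V_k^{\mathrm{G}}$ used in the direct part of Theorem~\ref{theorem-Gauss-RD-CEO}, parameterized by these $\{\dv\Omega_k\}$. The goal is to verify that, for this Gaussian choice (possibly combined with a time-sharing random variable $Q'$), the point $(R_1,\ldots,R_K,D)$ belongs to $\mc{RD}_\mathrm{CEO}^\mathrm{II}(V_1^{\mathrm{G}},\ldots,V_K^{\mathrm{G}},Q')$, i.e., that the $2^K$ subset rate constraints~\eqref{equation-rate-constraints} together with the distortion constraint $D \geq h(\dv X|V^{\mathrm{G}}_{\mc K},\dv Y_0,Q')$ are satisfied.

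The reduction from the joint inequalities of Theorem~\ref{theorem-Gauss-RD-CEO} to the Berger-Tung constraints rests on the algebraic identity
\[
I(\dv Y_{\mc S};V_{\mc S}|V_{\mc S^c},\dv Y_0,Q') + h(\dv X|V_{\mc K},\dv Y_0,Q') = \sum_{k \in \mc S} I(\dv Y_k;V_k|\dv X,\dv Y_0,Q') + h(\dv X|V_{\mc S^c},\dv Y_0,Q'),
\]
which follows from the Markov chains $V_k - \dv Y_k - (\dv X,\dv Y_0,V_{\mc K\setminus\{k\}})$ and the conditional independence of the observations $\{\dv Y_k\}$ given $(\dv X,\dv Y_0)$; this is the same identity that underlies $\mc{RD}_\mathrm{CEO}^\mathrm{I} = \mc{RD}_\mathrm{CEO}^\mathrm{II}$ in Theorem~\ref{theorem-continuous-RD1-CEO}. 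A direct Gaussian MMSE/log-determinant evaluation for $V_k^{\mathrm{G}}$ parameterized by $\{\dv\Omega_k\}$ yields $\sum_{k\in\mc S} I(\dv Y_k;V_k^{\mathrm{G}}|\dv X,\dv Y_0) = \sum_{k\in\mc S}\log(1/|\dv I - \dv\Omega_k\dv\Sigma_k|)$ and $h(\dv X|V^{\mathrm{G}}_{\mc S^c},\dv Y_0) = \log|(\pi e)(\dv\Sigma_{\dv x}^{-1} + \dv H_{\bar{\mc S}}^\dagger \dv\Sigma_{\dv n_{\bar{\mc S}}}^{-1}(\dv I - \boldsymbol{\Lambda}_{\bar{\mc S}}\dv\Sigma_{\dv n_{\bar{\mc S}}}^{-1})\dv H_{\bar{\mc S}})^{-1}|$, so the right-hand side of the identity matches the right-hand side of the inequality in Theorem~\ref{theorem-Gauss-RD-CEO}. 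In particular, when $D$ equals $h(\dv X|V^{\mathrm{G}}_{\mc K},\dv Y_0)$ (the $\mc S = \emptyset$ instance of Theorem~\ref{theorem-Gauss-RD-CEO}), each joint inequality becomes exactly the corresponding Berger-Tung rate constraint for $V^{\mathrm{G}}$.

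For tuples with $D$ strictly larger than $h(\dv X|V^{\mathrm{G}}_{\mc K},\dv Y_0)$ for a given $\{\dv\Omega_k\}$, one either reduces each $\dv\Omega_k$ (adding noise to the test channel) to tighten the distortion, or invokes time-sharing through $Q'$ to mix two Gaussian choices and trace out dominated points of the region; this is precisely the role of $Q'$ in the Berger-Tung formulation. The main obstacle I anticipate is twofold: first, verifying in full generality (noises correlated across agents but conditionally independent given $\dv N_0$) that the Gaussian posterior covariance of $\dv X$ given $(V^{\mathrm{G}}_{\mc S^c},\dv Y_0)$ collapses to the block-diagonal form involving $\boldsymbol{\Lambda}_{\bar{\mc S}}$ of~\eqref{equation-definition-T}, which demands a careful Woodbury/block-inversion computation using $\dv\Sigma_k = \dv\Sigma_{\dv n_k|\dv n_0}$; and second, making precise the time-sharing construction so that every dominated point of $\mc{RD}_{\mathrm{VG}\text{-}\mathrm{CEO}}^\star$ is reached by at most a two-mass $Q'$. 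Once these two details are settled, the proposition follows.
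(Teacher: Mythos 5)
Your overall architecture is sound: the forward inclusion is indeed immediate, the identity
\begin{equation*}
I(\dv Y_{\mc S};V_{\mc S}|V_{\mc S^c},\dv Y_0,Q') + h(\dv X|V_{\mc K},\dv Y_0,Q') = \sum_{k \in \mc S} I(\dv Y_k;V_k|\dv X,\dv Y_0,Q') + h(\dv X|V_{\mc S^c},\dv Y_0,Q')
\end{equation*}
is correct and is the right bridge between the two region descriptions, and your observation that the Theorem~\ref{theorem-Gauss-RD-CEO} inequalities collapse to the Berger--Tung rate constraints exactly when the $\mc S=\emptyset$ inequality is tight is accurate. Also, your first anticipated ``obstacle'' (the block-diagonal/$\boldsymbol{\Lambda}_{\bar{\mc S}}$ computation) is a non-issue here: that evaluation is already part of the direct part of Theorem~\ref{theorem-Gauss-RD-CEO} and need not be redone for this proposition.

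The genuine gap is the step you defer at the end. For a fixed Gaussian choice $\{p(u_k|\dv y_k)\}$, the polytope $\mc{RD}_\mathrm{CEO}^\mathrm{I}(U^\mathrm{G}_1,\ldots,U^\mathrm{G}_K)$ strictly contains $\mc{RD}_\mathrm{CEO}^\mathrm{II}(U^\mathrm{G}_1,\ldots,U^\mathrm{G}_K)$: e.g., the point $(R_1,\ldots,R_K)=(0,\ldots,0)$ with $D$ equal to the sum-rate bound satisfies every joint inequality but violates every nontrivial Berger--Tung rate constraint for that same $U^\mathrm{G}$. So ``use the same test channels'' cannot work uniformly, and ``reduce $\dv\Omega_k$ or time-share'' is not a proof --- it is precisely the claim to be established. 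The paper closes this by an explicit extreme-point analysis (written out for $K=2$): the polytope defined by~\eqref{equation-continous-RD1-CEO} with Gaussian $U^\mathrm{G}$ has finitely many vertices $P_1,\ldots,P_5$, and each is shown to be dominated by a point of $\mc{RD}_\mathrm{CEO}^\mathrm{II}$ obtained by setting each $V_k$ either equal to $U^\mathrm{G}_k$ or to a constant (e.g., $P_1$ is dominated by $(0,0,h(\dv X|\dv Y_0))$ with $V_1=V_2=\emptyset$, and $P_2$ by $(I(\dv Y_1;U^\mathrm{G}_1|\dv Y_0),0,h(\dv X|U^\mathrm{G}_1,\dv Y_0))$ with $V_2=\emptyset$); time-sharing via $Q'$ then covers the convex hull. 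Your proof is incomplete until you carry out this domination argument (or an equivalent one); as written, the reverse inclusion is asserted rather than proved.
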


\begin{proof} For the proof of Proposition~\ref{proposition-continous-RD2-CEO},  it is sufficient to show that, for fixed Gaussian conditional distributions $\{p(u_k|\dv y_k)\}_{k=1}^K$, the extreme points of the polytopes defined by~\eqref{equation-continous-RD1-CEO} are \textit{dominated} by points that are in $\mc{RD}_\mathrm{CEO}^\mathrm{II}$ and which are achievable using Gaussian conditional distributions $\{p(v_k|\dv y_k, q')\}_{k=1}^K$. Hereafter, we give a brief outline of proof for the case $K=2$. The proof for $K \geq 2$ follows similarly; and is omitted for brevity. Consider the inequalities~\eqref{equation-continous-RD1-CEO} with $Q=\emptyset$ and $(U_1,U_2) := (U^\mathrm{G}_1,U^\mathrm{G}_2)$ chosen to be Gaussian (see Theorem~\ref{theorem-Gauss-RD-CEO}). Consider now the extreme points of the polytopes defined by the obtained inequalities:
\begin{align*}
P_1 &= (0,0,I(\dv Y_1;U^\mathrm{G}_1|\dv X,\dv Y_0) + I(\dv Y_2;U^\mathrm{G}_2|\dv X,\dv Y_0)+ h(\dv X|\dv Y_0))\\
P_2 &= (I(\dv Y_1;U^\mathrm{G}_1|\dv Y_0),0,I(U^\mathrm{G}_2;\dv Y_2|\dv X,\dv Y_0) + h(\dv X|U^\mathrm{G}_1,\dv Y_0))\\
P_3 &= (0,I(\dv Y_2;U^\mathrm{G}_2|\dv Y_0),I(U^\mathrm{G}_1;\dv Y_1|\dv X,\dv Y_0) + h(\dv X|U^\mathrm{G}_2,\dv Y_0))\\
P_4 &= (I(\dv Y_1;U^\mathrm{G}_1|\dv Y_0),I(\dv Y_2;U^\mathrm{G}_2|U^\mathrm{G}_1,\dv Y_0), h(\dv X|U^\mathrm{G}_1,U^\mathrm{G}_2,\dv Y_0))\\
P_5 &= (I(\dv Y_1;U^\mathrm{G}_1|U^\mathrm{G}_2,\dv Y_0), I(\dv Y_2;U^\mathrm{G}_2|\dv Y_0), h(\dv X|U^\mathrm{G}_1,U^\mathrm{G}_2,\dv Y_0)) \;,
\end{align*} \hspace{-0.4em}
where the point $P_j$ is a a triple $(R_1^{(j)}, R_2^{(j)}, D^{(j)})$. It is easy to see that each of these points is \textit{dominated} by a point in $\mc{RD}_\mathrm{CEO}^\mathrm{II}$, i.e., there exists $(R_1,R_2,D) \in \mc{RD}_\mathrm{CEO}^\mathrm{II}$ for which $R_1 \leq R_1^{(j)}$, $R_2\leq R_2^{(j)}$ and $D \leq D^{(j)}$. To see this, first note that $P_4$ and $P_5$ are both in $\mc{RD}_\mathrm{CEO}^\mathrm{II}$. Next, observe that the point $(0,0,h(\dv X|\dv Y_0))$ is in $\mc{RD}_\mathrm{CEO}^\mathrm{II}$, which is clearly achievable by letting $(V_1,V_2,Q') = (\emptyset,\emptyset,\emptyset)$, dominates $P_1$. Also, by using letting $(V_1,V_2,Q') = (U^\mathrm{G}_1,\emptyset,\emptyset)$,  we have that the point $(I(\dv Y_1;U_1|\dv Y_0),0, h(\dv X|U_1,\dv Y_0))$ is in $\mc{RD}_\mathrm{CEO}^\mathrm{II}$, and dominates the point $P_2$. A similar argument shows that $P_3$ is dominated by a point in $\mc{RD}_\mathrm{CEO}^\mathrm{II}$. The proof is terminated by observing that, for all above corner points, $V_k$ is set either equal $U^\mathrm{G}_k$ (which is Gaussian distributed conditionally on $\dv Y_k$) or a constant. 	
\end{proof} 

\begin{remark}
Proposition~\ref{proposition-continous-RD2-CEO} shows that for the vector Gaussian CEO problem with side information under a logarithmic loss constraint, vector Gaussian quantization codebooks with time-sharing are optimal. In the case of quadratic distortion constraint, however, a characterization of the rate-distortion region is still to be found in general, and it is not known yet whether vector Gaussian quantization codebooks (with or without time-sharing) are optimal, except in few special cases such as that of scalar Gaussian sources or the case of only one agent, i.e., the remote vector Gaussian Wyner-Ziv problem whose rate-distortion region is found in~\cite{TC09}. In~\cite{TC09}, Tian and Chen also found the rate-distortion region of the remote vector Gaussian Wyner-Ziv problem under logarithmic loss, which they showed achievable using Gaussian quantization codebooks that are different from those (also Gaussian) that are optimal in the case of quadratic distortion. As we already mentioned, our result of Theorem~\ref{theorem-Gauss-RD-CEO} generalizes that of~\cite{TC09} to the case of an arbitrary number of agents. \qedblack 
\end{remark}

\begin{remark}
One may wonder whether giving the decoder side information $\dv Y_0$ to the encoders is beneficial. Similar to the well known result in Wyner-Ziv source coding of scalar Gaussian sources, our result of Theorem~\ref{theorem-Gauss-RD-CEO} shows that encoder side information does not help. \qedblack
\end{remark}

\section{Applications}\label{secV}

In this section, we show that application of the result of Theorem~\ref{theorem-Gauss-RD-CEO} allows us to find the complete solutions of two related problems: a quadratic vector Gaussian CEO problem with determinant constraint and the vector Gaussian distributed Information Bottleneck problem. For the case of discrete data, we provide an example application to distributed pattern classification. 

\subsection{Quadratic Vector Gaussian CEO Problem with Determinant Constraint}\label{secV_subsecB}

We now turn to the case in which the distortion is measured under quadratic loss. In this case, the mean square error matrix is defined by
\begin{equation}~\label{equation-mean-square-error}
\dv D^{(n)} := \frac{1}{n} \sum_{i=1}^{n} \E [ (\dv X_i - \hat{\dv X}_i) (\dv X_i - \hat{\dv X}_i)^\dagger ] \;.
\end{equation}
Under a (general) error constraint of the form 
\begin{equation}~\label{vector-Gaussian-ceo-quadratic-measure-matrix-constraint}
\dv D^{(n)} \preceq  \dv D \;,
\end{equation}
where $\dv D$ designates here a prescribed positive definite error matrix, a complete solution is still to be found in general. In what follows, we replace the constraint~\eqref{vector-Gaussian-ceo-quadratic-measure-matrix-constraint} with one on the \textit{determinant} of the error matrix $\dv D^{(n)}$, i.e.,
\begin{equation}~\label{vector-Gaussian-ceo-quadratic-measure-det-constraint}
|\dv D^{(n)}| \leq D \;,
\end{equation}
($D$ is a scalar here). We note that since the error matrix $\dv D^{(n)}$ is minimized by choosing the decoding as  
\begin{equation}\label{equation-decoder}
\hat{\dv X}_i = \E [\dv X_i|\breve{\phi}^{(n)}_1(\dv Y_1^n),\ldots,\breve{\phi}^{(n)}_K(\dv Y_K^n),\dv Y_0^n] \;,
\end{equation}
where $\{\breve{\phi}^{(n)}_k\}_{k=1}^K$ denote the encoding functions, without loss of generality we can write~\eqref{equation-mean-square-error} as 
\begin{equation}
\dv D^{(n)} = \frac{1}{n} \sum_{i=1}^{n} \mathrm{mmse}(\dv X_i|\breve{\phi}^{(n)}_1(\dv Y_1^n),\ldots,\breve{\phi}^{(n)}_K (\dv Y_K^n),\dv Y_0^n) \;. 
\end{equation}

\begin{definition}
A rate-distortion tuple $(R_1,\ldots,R_K,D)$ is achievable for the quadratic vector Gaussian CEO problem with determinant constraint if there exist a blocklength $n$, $K$ encoding functions $\{\breve{\phi}^{(n)}_k\}^K_{k=1}$ such that
\begin{align*}
R_k &\geq \frac{1}{n}\log M^{(n)}_k, \quad\text{for}\:\: k=1,\ldots,K,\\
D &\geq \bigg|\frac{1}{n} \sum_{i=1}^{n} \mathrm{mmse}(\dv X_i|\breve{\phi}^{(n)}_1(\dv Y_1^n),\ldots,\breve{\phi}^{(n)}_K(\dv Y_K^n),\dv Y_0^n) \bigg| \;.
\end{align*} 
\noindent The rate-distortion region $\mc{RD}_{\mathrm{VG}\text{-}\mathrm{CEO}}^\mathrm{det}$ is defined as the closure of all non-negative tuples $(R_1,\ldots,R_K,D)$ that are achievable. 
\qedblack
\end{definition}

The following theorem characterizes the rate-distortion region of the quadratic vector Gaussian CEO problem with determinant constraint.

\begin{theorem}~\label{theorem-quadratic-RD-CEO}
The rate-distortion region $\mc{RD}_{\mathrm{VG}\text{-}\mathrm{CEO}}^\mathrm{det}$ of the quadratic vector Gaussian CEO problem with determinant constraint is given by the set of all non-negative tuples $(R_1,\ldots,R_K,D)$ that satisfy, for all subsets $\mc S \subseteq \mc K$, 
\begin{align*}
\log \frac{1}{D} \leq \sum_{k \in \mc S} R_k  + \log|\dv I-\dv\Omega_k \dv\Sigma_k|  
+ \log\left| \dv\Sigma_{\dv x}^{-1} + \dv H_{\bar{\mc S}}^\dagger \dv\Sigma_{\dv n_{\bar{\mc S}}}^{-1} \big( \dv I - \dv\Lambda_{\bar{\mc S}} \dv\Sigma_{\dv n_{\bar{\mc S}}}^{-1}  \big) \dv H_{\bar{\mc S}}  \right| \;,
\end{align*}
for matrices $\{\dv\Omega_k\}_{k=1}^K$ such that $\dv 0 \preceq \dv\Omega_k \preceq \dv\Sigma_k^{-1}$, where $\bar{\mc S}=\{0\} \cup \mc S^c$ and $\boldsymbol{\Lambda}_{\bar{\mc S}}$ is as defined by~\eqref{equation-definition-T}. 
\end{theorem}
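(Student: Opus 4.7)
The plan is to deduce Theorem~\ref{theorem-quadratic-RD-CEO} directly from the logarithmic-loss characterization of Theorem~\ref{theorem-Gauss-RD-CEO} by means of the standard bridge between differential entropy and the log-determinant of the MMSE matrix. Concretely, I will show that a tuple $(R_1,\ldots,R_K,D)$ is admissible for the determinant-constrained problem if and only if the tuple $(R_1,\ldots,R_K,D_{\log})$ with $D_{\log}=n_x\log(\pi e)+\log D$ is admissible for the log-loss problem of Theorem~\ref{theorem-Gauss-RD-CEO}. Plugging this identification into the log-loss bound and cancelling the common $n_x\log(\pi e)$ terms yields exactly the inequality claimed in Theorem~\ref{theorem-quadratic-RD-CEO}.

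For the converse, consider admissible encoders $\{\breve\phi_k^{(n)}\}$ for the determinant problem, set $J_k=\breve\phi_k^{(n)}(\dv Y_k^n)$, and build a soft decoder for the log-loss problem that outputs the product reconstruction $\hat{\dv X}^n=\prod_{i=1}^n P_{\dv X_i|J_{\mc K},\dv Y_0^n}$. This choice attains log-loss distortion $D_{\log}^{(n)}=\frac{1}{n}\sum_{i=1}^n h(\dv X_i|J_{\mc K},\dv Y_0^n)$. The complex-Gaussian maximum-entropy bound gives $h(\dv X_i|J_{\mc K},\dv Y_0^n)\leq\log\bigl|\pi e\,\mathrm{mmse}(\dv X_i|J_{\mc K},\dv Y_0^n)\bigr|$, and concavity of $\log\det$ on the cone of Hermitian positive-semidefinite matrices (Jensen's inequality) then implies
\[
D_{\log}^{(n)}\leq\log\bigl|\pi e\,\dv D^{(n)}\bigr|\leq n_x\log(\pi e)+\log D,
\]
where the last step invokes the determinant constraint $|\dv D^{(n)}|\leq D$. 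Hence the same encoders are admissible for the log-loss problem at distortion $D_{\log}=n_x\log(\pi e)+\log D$, and Theorem~\ref{theorem-Gauss-RD-CEO} furnishes matrices $\{\dv\Omega_k\}$ with $\dv 0\preceq\dv\Omega_k\preceq\dv\Sigma_k^{-1}$ such that the log-loss inequality holds for every $\mc S\subseteq\mc K$. Substituting $D_{\log}=n_x\log(\pi e)+\log D$ into that inequality cancels the $n_x\log(\pi e)$ contributions and rearranges into the bound stated in Theorem~\ref{theorem-quadratic-RD-CEO}.

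For achievability, reuse the same Gaussian test channels and parameters $\{\dv\Omega_k\}$ employed in the direct part of Theorem~\ref{theorem-Gauss-RD-CEO} but replace the soft decoder by the linear MMSE estimator $\hat{\dv X}_i=\E[\dv X_i|U_{\mc K,i},\dv Y_{0,i}]$. Since $(\dv X,\dv Y_0,U_{\mc K})$ are jointly Gaussian, the posterior $P_{\dv X_i|U_{\mc K,i},\dv Y_{0,i}}$ is itself Gaussian, so the maximum-entropy step and the Jensen step used in the converse both hold with equality; the resulting error matrix $\dv D^{(n)}$ satisfies $\log|\dv D^{(n)}|=D_{\log}-n_x\log(\pi e)$, where $D_{\log}$ is the logarithmic-loss distortion delivered by Theorem~\ref{theorem-Gauss-RD-CEO} for the same $\{\dv\Omega_k\}$. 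Inverting the algebraic rearrangement of the converse then verifies that every rate tuple inside the region of Theorem~\ref{theorem-quadratic-RD-CEO} is attained by Berger--Tung coding with Gaussian quantizers followed by linear MMSE estimation.

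The main obstacle is the single-letterisation inside the converse: the determinant constraint is imposed on the block-averaged error matrix $\dv D^{(n)}=\frac{1}{n}\sum_i\mathrm{mmse}(\dv X_i|\cdot)$, while the entropy--MMSE bridge is per-letter. The concavity of $\log\det$ on the PSD cone supplies Jensen's inequality in the direction needed to collapse the per-letter bounds on $h(\dv X_i|\cdot)$ into a single bound on $D_{\log}^{(n)}$ expressed via $\log|\dv D^{(n)}|$, after which the determinant constraint turns the whole chain into a scalar inequality that feeds cleanly into Theorem~\ref{theorem-Gauss-RD-CEO}. Once this step is in place, the remainder is an algebraic rearrangement matching the logarithmic-loss region with the determinant region up to the additive constant $n_x\log(\pi e)$.
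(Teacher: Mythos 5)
Your proposal follows essentially the same route as the paper's own proof: your converse is exactly Lemma~\ref{lemma-connection} (map the determinant-constrained code to a log-loss code, then chain the maximum-entropy bound, Jensen's inequality for $\log\det$, and the constraint $|\dv D^{(n)}|\leq D$ to land in the region of Theorem~\ref{theorem-Gauss-RD-CEO} at distortion $\log(\pi e)^{n_x}D$), and your achievability is the paper's observation that for Gaussian test channels the log-loss distortion equals $\log\big((\pi e)^{n_x}|\mathrm{mmse}(\dv X|V^{\mathrm{G}}_{\mc K},\dv Y_0)|\big)$, so the same quantizers followed by MMSE estimation meet the determinant constraint. The one step you elide is that reaching \emph{all} points of the resulting polytope with an actual code requires passing to the Berger--Tung form with Gaussian test channels and time-sharing, i.e., Proposition~\ref{proposition-continous-RD2-CEO}, which the paper invokes explicitly; also note that your opening ``if and only if'' between the two operational regions is an overstatement (log-loss admissibility does not in general imply determinant admissibility, since $\tfrac{1}{n}h(\dv X^n|\cdot)$ only lower-bounds $\log|(\pi e)\dv D^{(n)}|$), though your argument never actually uses that direction.
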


\begin{proof} 
The proof of Theorem~\ref{theorem-quadratic-RD-CEO} is given in Appendix~\ref{proof-quadratic-RD-CEO}.	
\end{proof}

\begin{remark}
It is believed that the approach of this section, which connects the quadratic vector Gaussian CEO problem to that under logarithmic loss, can also be exploited to possibly infer other new results on the quadratic vector Gaussian CEO problem. Alternatively, it can also be used to derive new converses on the quadratic vector Gaussian CEO problem. For example, in the case of scalar sources Theorem~\ref{theorem-quadratic-RD-CEO} and Lemma~\ref{lemma-connection} readily provide an alternate converse proof to those of~\cite{O05, PTR04} for this model. Similar connections were made in~\cite{C15,C18} where it was observed that the results of~\cite{CW14} can be used to recover known results on the scalar Gaussian CEO problem (such as the sum rate-distortion region of~\cite{WTV08}) and the scalar Gaussian two-encoder distributed source coding problem. We also point out that similar information constraints have been applied to log-determinant reproduction constraints previously in~\cite{CJ14}.
\qedblack 
\end{remark}

\subsection{Distributed Vector Gaussian Information Bottleneck}\label{secV_subsecC}

Consider now the vector Gaussian CEO problem with side information of Section~\ref{section-vector-CEO}, and let the logarithmic loss distortion constraint be replaced by the mutual information constraint 
\begin{equation}
I\left(\dv X^n; \psi^{(n)}\left(\phi^{(n)}_1(\dv Y^n_1),\ldots,\phi^{(n)}_K(\dv Y^n_K), \dv Y^n_0\right)\right) \geq n \Delta \;.
\end{equation}
In this case, the region of optimal tuples $(R_1,\ldots,R_K,\Delta)$ generalizes the \textit{Gaussian Information Bottleneck Function} of~\cite{CGTW05,WM14,WFM14} to the setting in which the decoder observes correlated side information $\dv Y_0$ and the inference is done in a distributed manner by $K$ learners. This region can be obtained readily from Theorem~\ref{theorem-Gauss-RD-CEO} by substituting therein $\Delta := h(\dv X) - D$. The following corollary states the result.

\begin{corollary}~\label{corollary-distributed-Gaussian-IB}
For the problem of distributed Gaussian Information Bottleneck with side information at the predictor, the complexity-relevance region is given by the union of all non-negative tuples $(R_1,\ldots,R_K,\Delta)$ that satisfy, for every $\mc{S} \subseteq \mc{K}$,
\begin{align*}
\Delta &\leq \sum_{k \in \mc S} \big(R_k + \log \left| \dv I - \dv\Omega_k \dv\Sigma_k \right| \big) + \log \big| \dv I + \dv\Sigma_{\dv x} \dv H_{\bar{\mc S}}^\dagger \dv\Sigma_{\dv n_{\bar{\mc S}}}^{-1}\big( \dv I - \boldsymbol{\Lambda}_{\bar{\mc S}} \dv\Sigma_{\dv n_{\bar{\mc S}}}^{-1}\big)\dv H_{\bar{\mc S}}\big| \;,
\end{align*}
for matrices $\{\dv\Omega_k\}_{k=1}^K$ such that $\dv 0 \preceq \dv\Omega_k \preceq \dv\Sigma_k^{-1}$, where $\bar{\mc S}=\{0\} \cup \mc S^c$ and $\boldsymbol{\Lambda}_{\bar{\mc S}}$ is given by~\eqref{equation-definition-T}. 
\qedblack
\end{corollary}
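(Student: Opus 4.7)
The plan is to reduce the distributed Gaussian Information Bottleneck problem to the vector Gaussian CEO problem under logarithmic loss (already solved in Theorem~\ref{theorem-Gauss-RD-CEO}) via the standard equivalence between the mutual-information constraint and the log-loss distortion constraint, and then to simplify the resulting inequality using $h(\dv X)=\log|(\pi e)\dv\Sigma_{\dv x}|$ together with elementary determinant identities.

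First, I would establish the equivalence rigorously. Fix any rate-distortion code $(\phi_1^{(n)},\ldots,\phi_K^{(n)},\psi^{(n)})$ with messages $M_k=\phi_k^{(n)}(\dv Y_k^n)$. For the IB problem the objective is $I(\dv X^n;\psi^{(n)}(M_1,\ldots,M_K,\dv Y_0^n))$, while for the log-loss CEO problem the objective is $\E[d^{(n)}(\dv X^n,\hat{\dv X}^n)]$ with $\hat{\dv X}^n\in\mc P(\mathbb{C}^{n\times n_x})$. The well-known observation (see~\cite{HT07} and the discussion in~\cite{CW14}) is that the log-loss distortion is minimized by the posterior soft estimate $\hat{\dv X}^n(\cdot)=P_{\dv X^n|M_1,\ldots,M_K,\dv Y_0^n}(\cdot)$, giving minimum distortion $D=\frac{1}{n}h(\dv X^n|M_1,\ldots,M_K,\dv Y_0^n)$. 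Under this choice one has $I(\dv X^n;\hat{\dv X}^n)=h(\dv X^n)-h(\dv X^n|M_1,\ldots,M_K,\dv Y_0^n)=n\,(h(\dv X)-D)$, whereas for any other reconstruction $\psi^{(n)}$ the data-processing inequality gives $I(\dv X^n;\psi^{(n)}(\cdots))\leq n\,(h(\dv X)-D)$ for the same $D$. Consequently, a tuple $(R_1,\ldots,R_K,\Delta)$ is achievable for the distributed IB problem if and only if $(R_1,\ldots,R_K,h(\dv X)-\Delta)$ is achievable for the vector Gaussian CEO problem under logarithmic loss.

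Next, I would apply Theorem~\ref{theorem-Gauss-RD-CEO} with the substitution $D=h(\dv X)-\Delta$. For the Gaussian source $\dv X\sim\mc{CN}(\dv 0,\dv\Sigma_{\dv x})$ one has $h(\dv X)=\log|(\pi e)\dv\Sigma_{\dv x}|$, so the CEO inequality becomes, for every $\mc S\subseteq\mc K$,
\begin{align*}
\log|(\pi e)\dv\Sigma_{\dv x}|-\Delta+\sum_{k\in\mc S} R_k
\geq\sum_{k\in\mc S}\log\frac{1}{|\dv I-\dv\Omega_k\dv\Sigma_k|}
+\log\bigl|(\pi e)\bigl(\dv\Sigma_{\dv x}^{-1}+\dv H_{\bar{\mc S}}^{\dag}\dv\Sigma_{\dv n_{\bar{\mc S}}}^{-1}(\dv I-\boldsymbol{\Lambda}_{\bar{\mc S}}\dv\Sigma_{\dv n_{\bar{\mc S}}}^{-1})\dv H_{\bar{\mc S}}\bigr)^{-1}\bigr|.
\end{align*}
Since the two determinants inside the $(\pi e)$ factors are both $n_x{\times}n_x$, the factors $(\pi e)^{n_x}$ cancel. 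Moving $\Delta$ to the other side and invoking the identity $|\dv\Sigma_{\dv x}|\cdot|\dv\Sigma_{\dv x}^{-1}+\dv M|=|\dv I+\dv\Sigma_{\dv x}\dv M|$ collapses $\log|\dv\Sigma_{\dv x}|$ together with the second determinant into $\log|\dv I+\dv\Sigma_{\dv x}\dv H_{\bar{\mc S}}^{\dag}\dv\Sigma_{\dv n_{\bar{\mc S}}}^{-1}(\dv I-\boldsymbol{\Lambda}_{\bar{\mc S}}\dv\Sigma_{\dv n_{\bar{\mc S}}}^{-1})\dv H_{\bar{\mc S}}|$, recovering the claimed bound on $\Delta$.

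The conceptually delicate step is the first one: justifying that the IB objective and the log-loss distortion are linked by the exact relation $\Delta=h(\dv X)-D$ despite the fact that Theorem~\ref{theorem-Gauss-RD-CEO} is stated in terms of the \emph{expected} per-letter log-loss rather than a single-letter mutual information. This is resolved by the two-sided argument above (posterior-optimality for one direction and data-processing for the other), which exchanges the two constraints without any loss. Once this equivalence is in place, the remainder of the derivation is a routine substitution and a single determinant simplification, so no further technical obstacle arises.
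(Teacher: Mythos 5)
Your proposal is correct and follows essentially the same route as the paper: the paper obtains the corollary by substituting $\Delta := h(\dv X) - D$ into Theorem~\ref{theorem-Gauss-RD-CEO}, with the underlying equivalence between the mutual-information and log-loss formulations provided by Lemma~\ref{lemma-relation-RD-RI} (whose proof the paper omits and which you supply via the posterior-optimality/data-processing argument). Your determinant simplification using $|\dv\Sigma_{\dv x}|\cdot|\dv\Sigma_{\dv x}^{-1}+\dv M|=|\dv I+\dv\Sigma_{\dv x}\dv M|$ and the cancellation of the $(\pi e)^{n_x}$ factors is exactly the intended computation.
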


\noindent In particular, if $K=1$ and $\dv Y_0=\emptyset$, with the substitutions $\dv Y := \dv Y_1$, $R := R_1$, $\dv H := \dv H_1$, $\dv\Sigma := \dv\Sigma_1$, and $\dv\Omega_1 := \dv\Omega$, the region of Corollary~\ref{corollary-distributed-Gaussian-IB} reduces to the set of pairs $(R,\Delta)$  that satisfy 
\begin{subequations}~\label{equivalent-form-rate-distortion-vector-Gaussian-information-bottleneck-method}	
\begin{align}
\Delta &\leq  \log \big|\dv I + \dv\Sigma_{\dv x} \dv H^\dagger \dv\Omega \dv H \big|\\
\Delta  &\leq R + \log \big| \dv I - \dv\Omega \dv\Sigma \big| \;, 
\end{align} 
\end{subequations}
for some matrix $\dv\Omega$ such that $\dv 0 \preceq \dv\Omega \preceq \dv\Sigma^{-1}$.
		
\noindent Expression~\eqref{equivalent-form-rate-distortion-vector-Gaussian-information-bottleneck-method} is known as the \textit{Gaussian Information Bottleneck Function}~\cite{CGTW05,WM14,WFM14}, which is the solution of the Information Bottleneck method of~\cite{TPB99} in the case of jointly Gaussian variables. More precisely, using the terminology of~\cite{TPB99}, the inequalities~\eqref{equivalent-form-rate-distortion-vector-Gaussian-information-bottleneck-method} describe the optimal trade-off between the complexity (or rate) $R$ and the relevance (or accuracy) $\Delta$. The concept of Information Bottleneck was found useful in various learning applications, such as for data clustering~\cite{ST01}, feature selection~\cite{BE-YL04} and others, indluding in distributed settings~\cite{EZ18,EZ20}.

\noindent Furthermore, if in~\eqref{mimo-gaussian-model} and~\eqref{mimo-gaussian-model-2} the noises are independent among them and from $\dv N_0$, the relevance-complexity region of Corollary~\ref{corollary-distributed-Gaussian-IB} reduces to the union of all non-negative tuples $(R_1,\ldots,R_K,\Delta)$ that satisfy, for every $\mc{S} \subseteq \mc{K}$,
\begin{align*}
\Delta &\leq \sum_{k \in \mc S} \big(R_k + \log \left| \dv I - \dv\Omega_k \dv\Sigma_k \right| \big) + \log \big| \dv I + \dv\Sigma_{\dv x} \big(\dv H_0^\dagger \dv\Sigma_0^{-1} \dv H_0 + \sum_{k \in \mc S^c} \dv H_k^\dagger \dv\Omega_k \dv H_k\big) \big| \;,
\end{align*}
for some matrices $\{\dv\Omega_k\}_{k=1}^K$ such that $\dv 0 \preceq \dv\Omega_k \preceq \dv\Sigma_k^{-1}$.

\vspace{0.5em}
\begin{figure}[h!]
	\centering
	\includegraphics[width=0.65\linewidth]{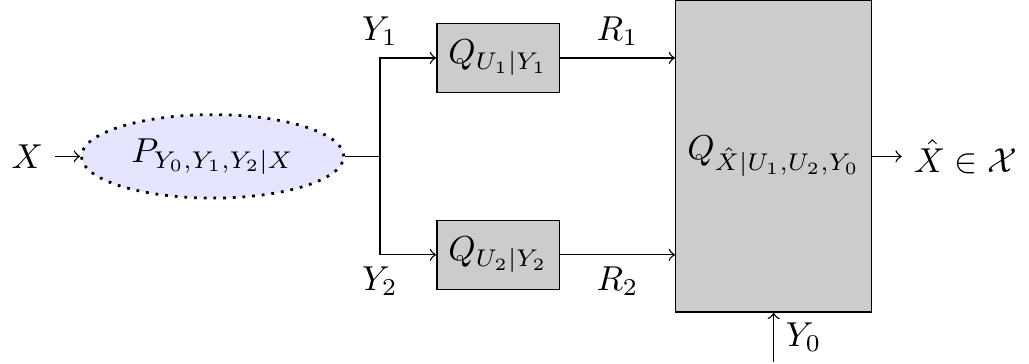}
	\caption{An example of distributed pattern classification.} 
	\squeezeup
	\label{fig-distributed-pattern-classification}
\end{figure}

\subsection{Distributed Pattern Classification}\label{section-classification} 

Consider the problem of distributed pattern classification shown in Figure~\ref{fig-distributed-pattern-classification}. In this example, the decoder is a predictor whose role is to guess the unknown class $X \in \mc X$ of a measurable pair $(Y_1,Y_2) \in \mc Y_1 \times \mc Y_2$ on the basis of inputs from two learners as well as its own observation about the target class, in the form of some correlated  $Y_0 \in \mc Y_0$. It is assumed that $ Y_1 \mkv (X,Y_0) \mkv Y_2$. The first learner produces its input based only on $Y_1 \in \mc Y_1$; and the second learner produces its input based only on $Y_2 \in \mc Y_2$. For the sake of a smaller \textit{generalization gap}\footnote{The generalization gap, defined as the difference between the empirical risk (average risk over a finite training sample) and the population risk  (average risk over the true joint distribution), can be upper bounded using the mutual information between the learner's inputs and outputs, see, e.g., \cite{RZ15, XR17} and the recent~\cite{AAV18}, which provides a fundamental justification of the use of the \textit{minimum description length} (MDL) constraint on the learners mappings as a regularizer term.}, the inputs of the learners are restricted to have description lengths that are no more than $R_1$ and $R_2$ bits per sample, respectively. Let $Q_{U_1|Y_1}\: :\:  \mc Y_1 \longrightarrow \mc P(\mc U_1)$ and  $Q_{U_2|Y_2}\: :\:  \mc Y_2 \longrightarrow \mc P(\mc U_2)$ be two (stochastic) such learners. Also, let $Q_{\hat{X}|U_1,U_2,Y_0}\: :\: \mc U_1 \times \mc U_2 \times \mc Y_0 \longrightarrow \mc P(\mc X)$ be a soft-decoder or predictor that maps the pair of representations $(U_1,U_2)$ and $Y_0$ to a probability distribution on the label space $\mc X$. The pair of learners and predictor induce a classifier 
\begin{align}
Q_{\hat{X}|Y_0,Y_1,Y_2}(x|y_0,y_1,y_2) &= \sum_{u_1 \in \mc U_1} Q_{U_1|Y_1}(u_1|y_1) \sum_{u_2 \in \mc U_2} Q_{U_2|Y_2}(u_2|y_2) Q_{\hat{X}|U_1,U_2,Y_0}(x|u_1,u_2,y_0) \nonumber\\ 
 &= \mathbb{E}_{Q_{U_1|Y_1}} \mathbb{E}_{Q_{U_2|Y_2}} [Q_{\hat{X}|U_1,U_2,Y_0}(x|U_1,U_2,y_0)]  \;,
\label{definition-equivalent-classifier}
\end{align} 
whose probability of classification error is defined as
\begin{equation}~\label{definition-probability-classification-error}
P_{\mathcal E}(Q_{\hat{X}|Y_0,Y_1,Y_2}) = 1 - \mathbb{E}_{P_{X,Y_0,Y_1,Y_2}} [Q_{\hat{X}|Y_0,Y_1,Y_2}(X|Y_0,Y_1,Y_2)] \;.
\end{equation}
Let $\mc{RD}_\mathrm{CEO}^\star$ be the rate-distortion region of the associated two-encoder DM CEO problem with side information as given by Theorem~\ref{theorem-continuous-RD1-CEO}. The following proposition shows that there exists a classifier $Q^{\star}_{\hat{X}|Y_0,Y_1,Y_2}$ for which the probability of misclassification can be upper bounded in terms of the minimal average logarithmic loss distortion that is achievable for the rate pair $(R_1,R_2)$ in $\mc{RD}_\mathrm{CEO}^\star$.

\begin{proposition}~\label{proposition-example-distributed-classification}
For the problem of distributed pattern classification of Figure~\ref{fig-distributed-pattern-classification}, there exists a classifier  $Q^{\star}_{\hat{X}|Y_0,Y_1,Y_2}$ for which the probability of classification error satisfies
\begin{equation*}
P_{\mathcal E}(Q^{\star}_{\hat{X}|Y_0,Y_1,Y_2})  \leq 1 - \exp\left( - \inf\{ D \: : \: (R_1,R_2, D) \in \mc{RD}_\mathrm{CEO}^\star \} \right) \;,
\end{equation*}
where $\mc{RD}_\mathrm{CEO}^\star$ is the rate-distortion region of the associated two-encoder DM CEO problem with side information as given by Theorem~\ref{theorem-continuous-RD1-CEO}.
\end{proposition}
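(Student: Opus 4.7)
The plan is to construct the classifier explicitly from the auxiliaries that attain a given point of $\mc{RD}_\mathrm{CEO}^\star$, and then invoke Jensen's inequality twice to convert the logarithmic-loss guarantee into a bound on the misclassification probability. First I would fix any triple $(R_1, R_2, D) \in \mc{RD}_\mathrm{CEO}^\star$ and appeal to Theorem~\ref{theorem-continuous-RD1-CEO}: there exist auxiliaries $(U_1, U_2, Q)$ obeying the factorization of Definition~\ref{defintion-continous-RD1-CEO} for which $H(X|U_1, U_2, Y_0, Q) \leq D$. I would then use the kernels $Q_{U_k|Y_k, Q}$, $k=1,2$, as the randomized learners (absorbing the time-sharing variable $Q$ into their shared randomness), and choose the soft predictor to be the Bayes posterior $Q^{\star}_{\hat X|U_1, U_2, Y_0, Q} \equiv P_{X|U_1, U_2, Y_0, Q}$ under the induced joint law. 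The corresponding classifier $Q^{\star}_{\hat X|Y_0, Y_1, Y_2}$ is then obtained by marginalizing the randomness of $(U_1, U_2, Q)$ exactly as in~\eqref{definition-equivalent-classifier}.

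Next I would apply Jensen's inequality twice. By concavity of $\log$ applied inside the expression~\eqref{definition-equivalent-classifier}, for every realization of $(X, Y_0, Y_1, Y_2)$,
\[
\log Q^{\star}_{\hat X|Y_0,Y_1,Y_2}(X|Y_0,Y_1,Y_2) \geq \mathbb{E}\!\left[\log P_{X|U_1,U_2,Y_0,Q}(X|U_1,U_2,Y_0,Q) \,\middle|\, X, Y_0, Y_1, Y_2\right],
\]
and taking outer expectation, together with the Bayes choice of the predictor, gives
\[
\mathbb{E}\!\left[\log Q^{\star}_{\hat X|Y_0,Y_1,Y_2}(X|Y_0,Y_1,Y_2)\right] \geq -H(X|U_1,U_2,Y_0,Q) \geq -D.
\]
A second application of Jensen's inequality, this time using the convexity of $\exp$, yields
\[
\mathbb{E}\!\left[Q^{\star}_{\hat X|Y_0,Y_1,Y_2}(X|Y_0,Y_1,Y_2)\right] \geq \exp\!\left(\mathbb{E}\!\left[\log Q^{\star}_{\hat X|Y_0,Y_1,Y_2}(X|Y_0,Y_1,Y_2)\right]\right) \geq \exp(-D).
\]
Substituting into~\eqref{definition-probability-classification-error} gives $P_{\mathcal E}(Q^{\star}_{\hat X|Y_0,Y_1,Y_2}) \leq 1 - \exp(-D)$; and since this bound holds for every $D$ with $(R_1, R_2, D) \in \mc{RD}_\mathrm{CEO}^\star$, I would close by taking the infimum over such $D$.

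The main obstacle here is really only a matter of compatibility rather than mathematical depth: I must ensure that the joint law of $(X, Y_0, Y_1, Y_2, U_1, U_2, Q)$ induced by the randomized learners is precisely the one assumed in the factorization of Definition~\ref{defintion-continous-RD1-CEO}, so that the Bayes posterior used as the soft predictor genuinely attains the single-letter logarithmic loss $H(X|U_1, U_2, Y_0, Q)$. The hypothesis $Y_1 \mkv (X, Y_0) \mkv Y_2$ built into the classification setup is exactly the Markov condition needed for this compatibility, after which the two Jensen steps finish the argument without further work.
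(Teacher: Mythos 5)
Your proposal is correct and follows essentially the same route as the paper's proof: both arguments rest on the identity $P_{\mathcal E}=1-\mathbb{E}[Q_{\hat X|Y_0,Y_1,Y_2}(X|Y_0,Y_1,Y_2)]$, two applications of Jensen's inequality (one to pull the logarithm inside the marginalization over $(U_1,U_2)$, one to pass from $\mathbb{E}[\log(\cdot)]$ to $\log\mathbb{E}[\cdot]$), and the identification of the resulting cross-entropy risk with the conditional entropy $H(X|U_1,U_2,Y_0,Q)$, which Theorem~\ref{theorem-continuous-RD1-CEO} ties to the distortion coordinate of $\mc{RD}_\mathrm{CEO}^\star$. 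You merely run the chain in the opposite direction (construct the Bayes predictor from the optimizing auxiliaries, then bound) rather than first bounding a generic classifier and then minimizing the risk, and your explicit remark about absorbing the time-sharing variable $Q$ into the learners' shared randomness addresses a point the paper's proof leaves implicit.
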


\begin{proof}
The proof of Proposition~\ref{proposition-example-distributed-classification} is given in Appendix~\ref{proof-proposition-example-distributed-classification}.
\end{proof}

\vspace{1em}
To make the above example more concrete, consider the following scenario where $Y_0$ plays the role of information about the sub-class of the label class $X \in \{0,1,2,3\}$.  More specifically, let $S$ be a random variable that is uniformly distributed over $\{1,2\}$. Also, let $X_1$ and $X_2$ be two random variables that are independent between them and from $S$, distributed uniformly over $\{1,3\}$ and $\{0,2\}$ respectively. The state $S$ acts as a random switch that connects $X_1$ or $X_2$ to $X$, i.e.,
\begin{equation}~\label{example-distributed-classification-label-variable}
X = X_S \;.
\end{equation}
That is, if $S=1$ then $X=X_1$, and if $S=2$ then $X=X_2$. Thus, the value of $S$ indicates whether $X$ is odd- or even-valued (i.e., the sub-class of $X$). Also, let  
\begin{subequations}~\label{example-distributed-classification-attributes}
\begin{align}
 Y_0 &= S \\
 Y_1 &= X_S \oplus Z_1 \\
 Y_2 &= X_S \oplus Z_2 \;,
\end{align} 
\end{subequations}
where $Z_1$ and $Z_2$ are Bernoulli-$(p)$ random variables, $p\in(0,1)$, that are independent between them, and from $(S,X_1,X_2)$, and the addition is modulo $4$. For simplification, we let $R_1=R_2=R$. We numerically approximate the set of $(R,D)$ pairs such that $(R,R,D)$ is in the rate-distortion region $\mc{RD}_\mathrm{CEO}^\star$ corresponding to the CEO network of this example. The algorithm that we use for the computation will be described in detail in Section~\ref{section-BA-algortihm-DM}. The lower convex envelope of these $(R,D)$ pairs is plotted in Figure~\ref{fig-bound-on-classification-error-subfiga} for $p \in \{0.01, 0.1, 0.25, 0.5\}$. Continuing our example, we also compute the upper bound on the probability of classification error according to Proposition~\ref{proposition-example-distributed-classification}. The result is given in Figure~\ref{fig-bound-on-classification-error-subfigb}. Observe that if $Y_1$ and $Y_2$ are high-quality estimates of $X$ (e.g., $p=0.01$), then a small increase in the \textit{complexity} $R$ results in a large relative improvement of the (bound on) the probability of classification error. On the other hand, if $Y_1$ and $Y_2$ are low-quality estimates of $X$ (e.g., $p=0.25$) then we require a large increase of $R$ in order to obtain an appreciable reduction in the error probability. Recalling that larger $R$ implies lesser generalization capability~\cite{RZ15, XR17, AAV18}, these numerical results are consistent with the fact that classifiers should strike a good balance between accuracy and their ability to generalize well to unseen data. Figure~\ref{fig-bound-on-classification-error-subfigc} quantifies the value of side information $S$ given to both learners and predictor, none of them, or only the predictor, for $p=0.25$.

\begin{figure*}[!h]
	\begin{center}
		\subfloat[]{ 
			\resizebox{.321\linewidth}{!}{\includegraphics{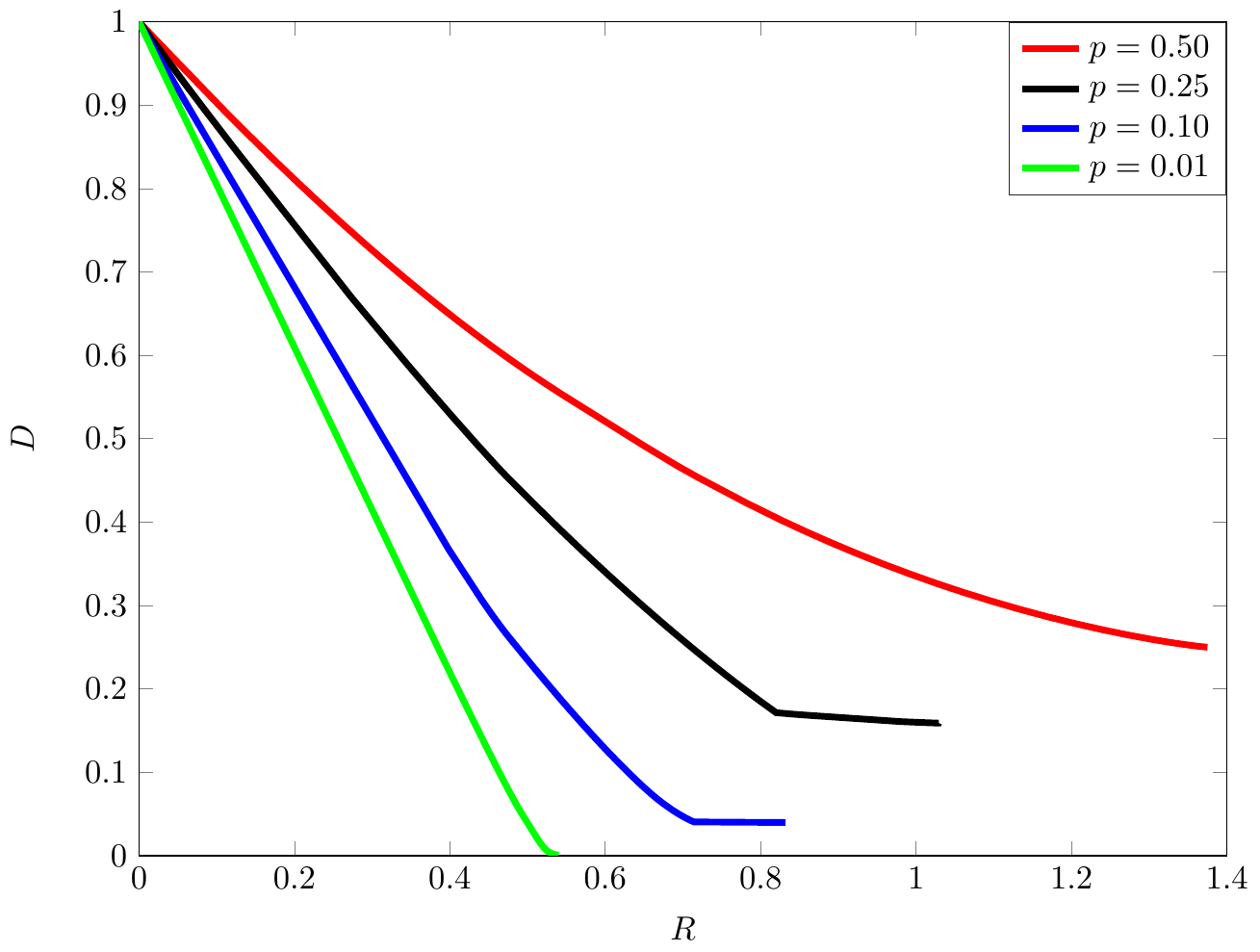}}
			\label{fig-bound-on-classification-error-subfiga} 
		} 
		\subfloat[]{ 
			\resizebox{.321\linewidth}{!}{\includegraphics{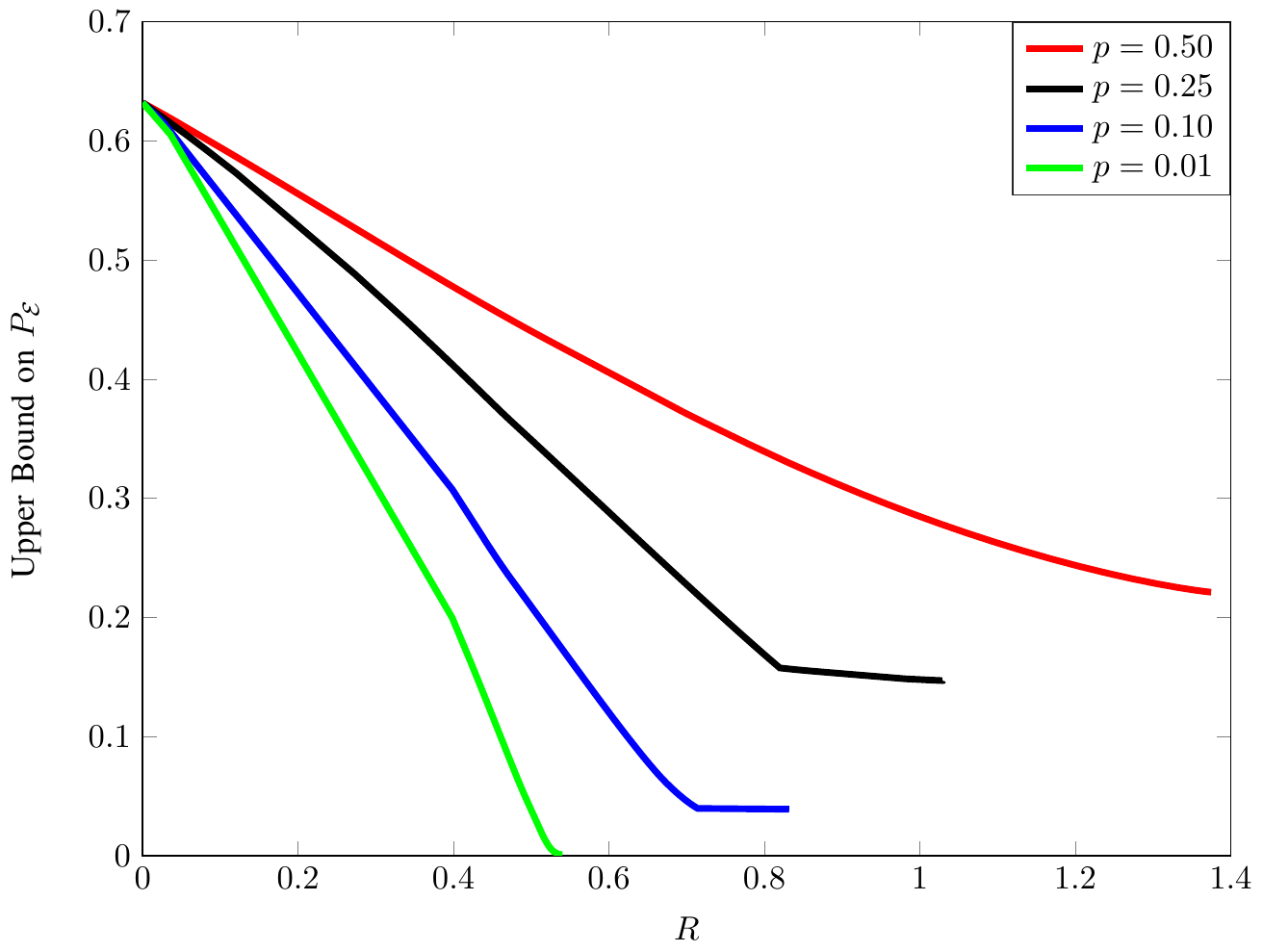}}
			\label{fig-bound-on-classification-error-subfigb} 
		} 
		\subfloat[]{
			\resizebox{.321\linewidth}{!}{\includegraphics{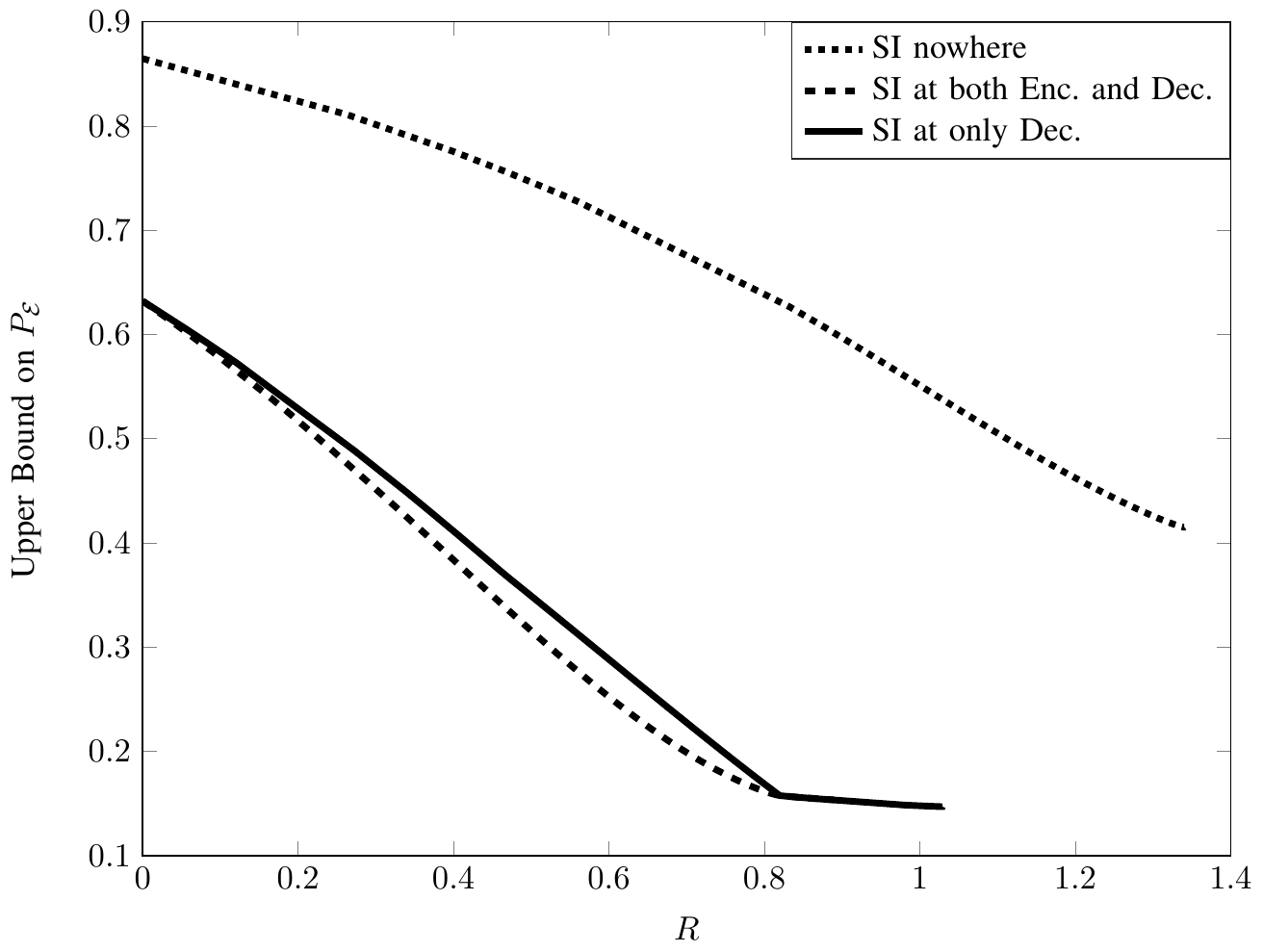}}
			\label{fig-bound-on-classification-error-subfigc}
		}
		\vspace{-0.5em}
		\caption{Illustration of the bound on the probability of classification error of Proposition~\ref{proposition-example-distributed-classification}  for the example described by~\eqref{example-distributed-classification-label-variable} and~\eqref{example-distributed-classification-attributes}. (a) Distortion-rate function of the network of Figure~\ref{fig-distributed-pattern-classification} computed for $p \in \{0.01, 0.1, 0.25, 0.5\}$. (b) Upper bound on the probability of classification error computed according to Proposition~\ref{proposition-example-distributed-classification}. (c) Effect of side information (SI) $Y_0$ when given to both learners and the predictor, only the predictor or none of them.}
		\label{fig-bound-on-classification-error} 
		\vspace{-1em}
	\end{center}
\end{figure*}

\section{Blahut-Arimoto Type Algorithms}\label{section-BA-algorithm}

In this section, we develop iterative algorithms that allow to compute the rate-distortion regions of the DM and vector Gaussian CEO problems numerically. We illustrate the efficiency of our algorithms through some numerical examples.

\subsection{Discrete Case}\label{section-BA-algortihm-DM}

Here we develop a BA-type algorithm that allows to compute the convex region $\mc{RD}_\mathrm{CEO}^\star$ for general discrete memoryless sources. To develop the algorithm, we use the Berger-Tung form of the region given in Definition~\ref{defintion-continous-RD2-CEO} for $K=2$. The outline of the proposed method is as follows. First, we rewrite the rate-distortion region $\mc{RD}_\mathrm{CEO}^\star$ in terms of the union of two simpler regions in Proposition~\ref{proposition-region-convex-hull}. The tuples lying on the boundary of each region are parametrically given in Theorem~\ref{theorem-CEO-parametrization}. Then, the boundary points of each simpler region are computed numerically via an alternating minimization method derived in Section~\ref{section-BA-computation-region1} and detailed in Algorithm~\ref{algo-CEO}. Finally, the original rate-distortion region is obtained as the convex hull of the union of the tuples obtained for the two simple regions.

\vspace{0.5em}
\subsubsection{Equivalent Parametrization}~\label{section-BA-equivalent-parametrization}

Define the two regions $\mc{RD}_\mathrm{CEO}^k$, $k = 1, 2,$ as  
\begin{equation}~\label{constituent-rate-distortion-regions}
\mathcal{RD}_\mathrm{CEO}^{k}=\{ (R_{1},R_{2},D) \::\: D\geq D_\mathrm{CEO}^k (R_{1},R_{2}) \} \;,
\end{equation}
with
\begin{align}
D_\mathrm{CEO}^k (R_1,R_2) := & \argmin \; H(X|U_1,U_2,Y_0) \label{equation-CEO-distortion-rate-function}\\
&\text{s.t.} \quad 
R_k \geq I(Y_k;U_k|U_\kbar,Y_0) \quad \text{and} \quad R_\kbar \geq I(X_\kbar;U_\kbar|Y_0) \;,  \nonumber
\end{align}
and the minimization is over set of joint measures $P_{U_1,U_2,X,Y_0,Y_1,Y_2}$ that satisfy $U_1 \mkv Y_1 \mkv (X, Y_0) \mkv Y_2 \mkv U_2$. (We define $\kbar:=k \Mod{2} + 1$ for $k=1,2$.)

As stated in the following proposition, the region $\mc{RD}_\mathrm{CEO}^\star$ of Theorem~\ref{theorem-continuous-RD1-CEO} coincides with the convex hull of the union of the two regions $\mc{RD}_\mathrm{CEO}^1$ and  $\mc{RD}_\mathrm{CEO}^2$. 

\begin{proposition}~\label{proposition-region-convex-hull}
The region $\mc{RD}_\mathrm{CEO}^\star$ is given by
\begin{align}~\label{equation-region-convex-hull}
\mc{RD}_\mathrm{CEO}^\star = \mathrm{conv} ( \mc{RD}_\mathrm{CEO}^1 \cup \mc{RD}_\mathrm{CEO}^2 ) \;.
\end{align}
\end{proposition}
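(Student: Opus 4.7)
The proof is by two inclusions, with the forward one easy and the reverse one requiring a corner-point argument. I will work with the Berger-Tung form $\mc{RD}_\mathrm{CEO}^\mathrm{II}$ of $\mc{RD}_\mathrm{CEO}^\star$ guaranteed by Theorem~\ref{theorem-continuous-RD1-CEO}, since it exposes the contra-polymatroid structure that makes the constituent decomposition natural.

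For $\mathrm{conv}(\mc{RD}_\mathrm{CEO}^1 \cup \mc{RD}_\mathrm{CEO}^2) \subseteq \mc{RD}_\mathrm{CEO}^\star$, it suffices to show $\mc{RD}_\mathrm{CEO}^k \subseteq \mc{RD}_\mathrm{CEO}^\star$ for each $k$, since $\mc{RD}_\mathrm{CEO}^\star$ is already convex (via the built-in time-sharing $Q'$). Taking $k=1$, the defining constraints $R_1 \geq I(Y_1; U_1 | U_2, Y_0)$ and $R_2 \geq I(Y_2; U_2 | Y_0)$ together with the chain rule, the Markov condition $Y_1 - (X, Y_0) - Y_2$, and the test-channel Markov chains $U_k - Y_k - (X, Y_0, U_{\bar k}, Y_{\bar k})$ imply
\begin{equation*}
R_1 + R_2 \geq I(Y_1; U_1 | U_2, Y_0) + I(Y_2; U_2 | Y_0) = I(Y_1, Y_2; U_1, U_2 | Y_0),
\end{equation*}
and hence all three Berger-Tung inequalities in Definition~\ref{defintion-continous-RD2-CEO} with $Q' = \emptyset$; the distortion constraint matches trivially. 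A symmetric argument handles $k=2$.

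For the reverse inclusion, pick any $(R_1, R_2, D) \in \mc{RD}_\mathrm{CEO}^\star$ with realizing test channels $(V_1, V_2, Q')$. The key structural observation is that, for each realization $Q' = q'$, the rate polytope defined by the three Berger-Tung inequalities is a two-dimensional contra-polymatroid with exactly two corner points,
\begin{equation*}
P_A(q') = \bigl(I(Y_1; V_1 | Y_0, q'),\, I(Y_2; V_2 | V_1, Y_0, q')\bigr),\qquad P_B(q') = \bigl(I(Y_1; V_1 | V_2, Y_0, q'),\, I(Y_2; V_2 | Y_0, q')\bigr),
\end{equation*}
both achieving the common distortion $H(X | V_1, V_2, Y_0, q')$. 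Choosing $(U_1, U_2) := (V_1, V_2) \mid Q' = q'$ (a valid choice precisely because $\mc{RD}_\mathrm{CEO}^k$ imposes no time-sharing), $P_A(q')$ paired with the common distortion lies in $\mc{RD}_\mathrm{CEO}^2$ and $P_B(q')$ in $\mc{RD}_\mathrm{CEO}^1$. Any Pareto-optimal rate pair on the dominant face is a convex combination of $P_A(q')$ and $P_B(q')$, so the conditional triple $(R_1(q'), R_2(q'), D(q'))$ lies in $\mathrm{conv}(\mc{RD}_\mathrm{CEO}^1 \cup \mc{RD}_\mathrm{CEO}^2)$. Averaging over $Q'$ and invoking closure of the convex hull under such averaging yields $(R_1, R_2, D) \in \mathrm{conv}(\mc{RD}_\mathrm{CEO}^1 \cup \mc{RD}_\mathrm{CEO}^2)$.

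The delicate point, and the place where care is needed, is the last step: reconciling the time-sharing variable $Q'$ intrinsic to $\mc{RD}_\mathrm{CEO}^\star$ with the absence of time-sharing in the definition of the two constituent regions. A clean way is to fold the pair $(Q', \lambda(Q'))$ into a single enlarged time-sharing random variable $\tilde{Q}$ and then use Caratheodory's theorem to collapse it into a binary mixture of a single point from $\mc{RD}_\mathrm{CEO}^1$ and a single point from $\mc{RD}_\mathrm{CEO}^2$. Tracking that the distortion remains equal to the common conditional entropy throughout the averaging is what makes the mixing argument close.
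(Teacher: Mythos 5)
Your proof is correct and follows essentially the same route as the paper's: identify the two corner points of the contra-polymatroidal Berger--Tung rate polytope, observe that each corner paired with the common distortion $H(X|V_1,V_2,Y_0,Q')$ lies in one of the two constituent regions $\mc{RD}_\mathrm{CEO}^1$, $\mc{RD}_\mathrm{CEO}^2$, and then convexify/average over the time-sharing variable. The paper's outline leaves the forward inclusion and the reconciliation of $Q'$ implicit; you fill both in correctly, so the only difference is level of detail, not method.
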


\begin{proof}
An outline of the proof is as follows. Let $P_{U_1,U_2,X,Y_0,Y_1,Y_2}$ and $P_Q$ be such that $(R_1,R_2,D) \in \mc{RD}_\mathrm{CEO}^\star$. The polytope defined by the rate constraints~\eqref{equation-rate-constraints}, denoted by $\mc V$, forms a contra-polymatroid with $2!$ extreme points (vertices)~\cite{CW14,CB08}. Given a permutation $\pi$ on $\{1,2\}$, the tuple
\begin{equation*}
\tilde{R}_{\pi(1)} = I(Y_{\pi(1)};U_{\pi(1)}|Y_0) \;, \quad \tilde{R}_{\pi(2)} = I(Y_{\pi(2)};U_{\pi(2)}|U_{\pi(1)},Y_0) \;,
\end{equation*}
defines an extreme point  of $\mc V$ for each permutation. As shown in \cite{CW14}, for every extreme point $(\tilde{R}_1,\tilde{R}_2)$ of $\mc V$, the point $(\tilde{R}_1,\tilde{R}_2,D)$ is achieved by time-sharing two successive Wyner-Ziv (WZ) strategies. The set of  achievable tuples with such successive WZ scheme is characterized by the convex hull of $\mc{RD}_\mathrm{CEO}^{\pi(1)}$. Convexifying the union of both regions as in~\eqref{equation-region-convex-hull}, we obtain the full rate-distortion region $\mc{RD}^{\star}_\mathrm{CEO}$.
\end{proof}

The main advantage of Proposition~\ref{proposition-region-convex-hull} is that it reduces the computation of region $\mc{RD}_\mathrm{CEO}^\star$ to the computation of the two regions $\mc{RD}_\mathrm{CEO}^k$, $k=1,2$, whose boundary can be efficiently parametrized, leading to an efficient computational method. In what follows, we concentrate on  $\mc{RD}_\mathrm{CEO}^1$. The computation of $\mc{RD}_\mathrm{CEO}^2$ follows similarly, and is omitted for brevity. Next theorem provides a parametrization of the boundary tuples of the region  $\mc{RD}_\mathrm{CEO}^1$ in terms, each of them, of an optimization problem over the pmfs $\dv P := \{ P_{U_1|Y_1}, P_{U_2|Y_2} \}$.

\begin{theorem}~\label{theorem-CEO-parametrization}
For each $\dv s := [s_1, s_2]$, $s_1 > 0$, $s_2 > 0$, define a tuple $(R_{1,\dv s}, R_{2,\dv s} ,D_{\dv s})$ parametrically given by 
\begin{align}
&D_{\dv s} = -s_1 R_{1,\dv s}-s_2R_{2, \dv s}+ \min_{\dv P}F_{\dv s}(\dv P) \;, \label{equation-CEO-parametrization-D}\\  
&R_{1, \dv s} = I(Y_1;U_1^\star|U_2^\star,Y_0) \;, \quad\quad R_{2, \dv s} = I(Y_2;U_2^\star|Y_0) \;, \label{equation-CEO-parametrization-R}
\end{align}
where $F_{\dv s}(\dv P)$ is given as follows 
\begin{equation}~\label{equation-CEO-objective1}
F_{\dv s}(\dv P) := H(X|U_1,U_2,Y_0) + s_1 I(Y_1;U_1|U_2,Y_0) + s_2 I(Y_2;U_2|Y_0) \;,
\end{equation}
and; $\dv P^\star$ are the conditional pmfs yielding the minimum in \eqref{equation-CEO-parametrization-D} and $U_1^\star, U_2^\star$ are the auxiliary variables induced by $\dv P^\star$. Then, we have: 
\begin{enumerate}
\item Each value of  $\dv s$ leads to a tuple $(R_{1,\dv s}, R_{2,\dv s}, D_{\dv s})$ on the distortion-rate curve $D_{\dv s} = D_\mathrm{CEO}^1 (R_{1,\dv s}, R_{2,\dv s})$.
\item For every point on the distortion-rate curve, there is an $\dv s$ for which \eqref{equation-CEO-parametrization-D} and \eqref{equation-CEO-parametrization-R} hold.
\end{enumerate}
\end{theorem}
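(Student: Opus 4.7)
The plan is to treat Theorem~\ref{theorem-CEO-parametrization} as a standard Lagrangian parametrization of the boundary of the convex surface $D = D_\mathrm{CEO}^1(R_1,R_2)$. The function $F_{\dv s}(\dv P)$ in~\eqref{equation-CEO-objective1} is precisely the unconstrained Lagrangian of the constrained program~\eqref{equation-CEO-distortion-rate-function}, with $s_1, s_2 > 0$ playing the role of multipliers attached to the two rate constraints. At any minimizer $\dv P^\star$ of $F_{\dv s}$, the definition of $D_{\dv s}$ in~\eqref{equation-CEO-parametrization-D} rearranges to the identity $D_{\dv s} = H(X|U_1^\star, U_2^\star, Y_0)$, which anchors both directions of the proof.

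For Part 1 (direct), I will first note that $\dv P^\star$ is itself feasible in the constrained program at rate levels $(R_{1,\dv s}, R_{2,\dv s})$ with equality in the constraints, so $D_\mathrm{CEO}^1(R_{1,\dv s}, R_{2,\dv s}) \leq H(X|U_1^\star, U_2^\star, Y_0) = D_{\dv s}$. For the matching lower bound, I will take any feasible $\dv P$ at these rate levels and write
\begin{align*}
H(X|U_1,U_2,Y_0) &= F_{\dv s}(\dv P) - s_1 I(Y_1;U_1|U_2,Y_0) - s_2 I(Y_2;U_2|Y_0)\\
&\geq F_{\dv s}(\dv P^\star) - s_1 R_{1,\dv s} - s_2 R_{2,\dv s} = D_{\dv s},
\end{align*}
where the inequality combines $s_1, s_2 > 0$ with the feasibility constraints $I(Y_1;U_1|U_2,Y_0) \leq R_{1,\dv s}$ and $I(Y_2;U_2|Y_0) \leq R_{2,\dv s}$, together with the optimality of $\dv P^\star$ for $F_{\dv s}$. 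Minimizing over feasible $\dv P$ yields $D_\mathrm{CEO}^1(R_{1,\dv s}, R_{2,\dv s}) \geq D_{\dv s}$, completing this direction.

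For Part 2 (converse), I will first establish convexity of the epigraph of $(R_1, R_2) \mapsto D_\mathrm{CEO}^1(R_1, R_2)$ via the usual time-sharing argument on pairs of admissible conditional pmfs. Together with the monotonicity of $D_\mathrm{CEO}^1$ in each rate, this guarantees that any boundary point $(R_1^\circ, R_2^\circ, D^\circ)$ admits a supporting hyperplane with normal $(s_1, s_2, 1)$ for some $s_1, s_2 \geq 0$, and in the strictly decreasing interior regime of the boundary both multipliers are strictly positive. Along that hyperplane the minimum of $D + s_1 R_1 + s_2 R_2$ over the epigraph equals $D^\circ + s_1 R_1^\circ + s_2 R_2^\circ$. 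Because the rate constraints are tight at the optimum whenever $s_k > 0$, this quantity also equals $\min_{\dv P} F_{\dv s}(\dv P)$, so a corresponding minimizer $\dv P^\star$ realizes $(R_1^\circ, R_2^\circ, D^\circ)$ through~\eqref{equation-CEO-parametrization-D}--\eqref{equation-CEO-parametrization-R}.

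The main obstacle is the convexity step in Part 2: one must argue that convex combinations of admissible pmfs $\dv P', \dv P''$ are realized by a single admissible $\dv P$ respecting the Markov structure $U_1 \mkv Y_1 \mkv (X,Y_0) \mkv Y_2 \mkv U_2$, which is standard via an auxiliary time-sharing variable but warrants care since the theorem is stated without an explicit $Q$ in the test channels. I expect the question of how to actually \emph{find} the minimizer of the non-convex functional $F_{\dv s}(\dv P)$ to be deferred entirely to the Blahut-Arimoto type alternating scheme developed in Algorithm~\ref{algo-CEO}; only existence of a minimizer and the Lagrangian identity at the minimizer are required for the parametrization itself.
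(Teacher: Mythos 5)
Your proof is correct and follows essentially the same route as the paper: statement 1 rests on the identical pair of inequalities (feasibility of the Lagrangian minimizer $\dv P^\star$ in the constrained program gives $D_{\dv s} \geq D^1_{\mathrm{CEO}}(R_{1,\dv s},R_{2,\dv s})$, while optimality of $\dv P^\star$ for $F_{\dv s}$ together with $s_1,s_2>0$ and the rate constraints gives the reverse bound). For statement 2 you are in fact more explicit than the paper, which asserts the conclusion directly from the supporting-hyperplane inequality without spelling out the convexity/time-sharing step that you correctly identify as the delicate point.
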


\begin{proof}
Suppose that $\dv P^\star$ yields the minimum in \eqref{equation-CEO-parametrization-D}. For this $\dv P$, we have $ I(Y_1;U_1|U_2,Y_0) = R_{1, \dv s}$ and $I(Y_2;U_2|Y_0) = R_{2, \dv s}$. Then, we have 
\begin{align}
D_{\dv s} &= -s_1 R_{1,\dv s} - s_2 R_{2, \dv s} + F_{\dv s}(\dv P^\star) \nonumber\\
& = -s_1 R_{1,\dv s} - s_2 R_{2, \dv s} + [ H(X|U_1^\star,U_2^\star,Y_0) + s_1 R_{1,\dv s} + s_2 R_{2, \dv s}] \nonumber\\
&= H(X|U_1^\star,U_2^\star,Y_0) \geq D_\mathrm{CEO}^1(R_{1,\dv s}, R_{2,\dv s}) \;. \label{equation-CEO-parametrization-proof-1}
\end{align}
Conversely, if $\dv P^\star$ is the solution to the minimization in \eqref{equation-CEO-distortion-rate-function}, then $I(Y_1;U_1^\star|U_2^\star,Y_0)\leq R_{1}$ and $I(Y_2;U_2^\star|Y_0)\leq R_{2}$ and for any $\dv s$, 
\begin{align*}
D_\mathrm{CEO}^{1}(R_1,R_2) &= H(X|U_1^\star,U_2^\star,Y_0) \\
&\geq H(X|U_1^\star,U_2^\star,Y_0) + s_1 ( I(Y_1;U_1^\star|U_2^\star,Y_0) - R_1 ) + s_2 ( I(Y_2;U_2^\star|Y_0) - R_2 )\\
&= D_{\dv s} + s_1 ( R_{1, \dv s} - R_{1} ) + s_2 ( R_{2, \dv s} - R_{2} ) \;.
\end{align*}
Given $\dv s$, and hence $(R_{1,\dv s},R_{2,\dv s}, D_{\dv s})$, letting $(R_1,R_2) = (R_{1,\dv s}, R_{2,\dv s})$ yields $D_\mathrm{CEO}^{1}(R_{1,\dv s}, R_{2,\dv s}) \geq D_{\dv s}$, which proves, together with \eqref{equation-CEO-parametrization-proof-1},  statement 1) and 2).
\end{proof}
                                                                               
Next, we show that it is sufficient to run the algorithm for $s_1 \in (0,1]$.

\begin{lemma}~\label{lemma-range}
The range of the parameter $s_1$ can be restricted to $(0,1]$.
\end{lemma}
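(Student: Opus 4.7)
The plan is to show that for any $s_1 > 1$, the boundary point $(R_{1,\dv s}, R_{2,\dv s}, D_{\dv s})$ parametrized in Theorem~\ref{theorem-CEO-parametrization} coincides with one already obtained at $\dv s' = [1, s_2]$; hence the range $s_1 \in (1,\infty)$ is redundant. The argument rests on a single manipulation of $F_{\dv s}(\dv P)$ together with a data processing inequality.

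First I would split the conditional entropy using the chain rule $H(X|U_1, U_2, Y_0) = H(X|U_2, Y_0) - I(X;U_1|U_2,Y_0)$, and rewrite
\begin{equation*}
F_{\dv s}(\dv P) = H(X|U_2,Y_0) + \bigl[\, s_1 I(Y_1;U_1|U_2,Y_0) - I(X;U_1|U_2,Y_0)\, \bigr] + s_2 I(Y_2;U_2|Y_0).
\end{equation*}
The Markov structure imposed in Theorem~\ref{theorem-CEO-parametrization}, namely $U_1 \mkv Y_1 \mkv (X,Y_0) \mkv Y_2 \mkv U_2$, implies $U_1 \mkv Y_1 \mkv (X,U_2,Y_0)$. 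Expanding $I(X,Y_1;U_1|U_2,Y_0)$ via the chain rule in two ways and using the vanishing of $I(X;U_1|Y_1,U_2,Y_0)$, this Markov chain yields the data processing inequality
\begin{equation*}
I(X;U_1|U_2,Y_0) \leq I(Y_1;U_1|U_2,Y_0).
\end{equation*}

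Consequently, for any $s_1 \geq 1$ the bracketed term satisfies
\begin{equation*}
s_1 I(Y_1;U_1|U_2,Y_0) - I(X;U_1|U_2,Y_0) \geq (s_1 - 1)\, I(Y_1;U_1|U_2,Y_0) \geq 0,
\end{equation*}
with equality whenever $I(Y_1;U_1|U_2,Y_0)=0$, e.g., by choosing $U_1$ to be a constant. Therefore, for $s_1 > 1$ the minimizer of $F_{\dv s}$ must satisfy $I(Y_1;U_1^\star|U_2^\star,Y_0) = 0$, giving $R_{1,\dv s}=0$, and the minimum value collapses to $\min_{P_{U_2|Y_2}}\bigl[H(X|U_2,Y_0) + s_2 I(Y_2;U_2|Y_0)\bigr]$. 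But at $\dv s' = [1, s_2]$, the same value and the same triple $(0, R_{2,\dv s}, D_{\dv s})$ are attained by the same trivial choice of $U_1$, so the boundary point obtained with $s_1 > 1$ is already produced by $s_1 = 1$.

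I do not anticipate a genuine obstacle: the only subtlety is to apply the data processing inequality in the correct direction, which requires verifying the conditional Markov chain $U_1 \mkv Y_1 \mkv (X,U_2,Y_0)$ from the joint distribution stipulated in Theorem~\ref{theorem-CEO-parametrization}. Once that is in place, the remainder is a short algebraic comparison, and the conclusion $s_1 \in (0,1]$ follows immediately.
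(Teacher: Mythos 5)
Your proof is correct and takes essentially the same route as the paper's: both rest on the Markov-chain identity $I(Y_1;U_1|U_2,Y_0) = I(X;U_1|U_2,Y_0) + I(Y_1;U_1|X,Y_0)$ to conclude that for $s_1>1$ the objective is bounded below by $H(X|U_2,Y_0)+s_2 I(Y_2;U_2|Y_0)$, with equality for trivial $U_1$, so the minimum no longer depends on $s_1$. Your data-processing step is just an algebraic rearrangement of the paper's inequalities $(a)$ (dropping $I(Y_1;U_1|X,Y_0)\geq 0$) and $(b)$ (conditioning reduces entropy with the negative coefficient $1-s_1$).
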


\begin{proof}
Let $F^\star = \min_{\dv P} F_{\dv s }(\dv P)$. If we set $U_1 = \emptyset$, then we have the relation $F^\star \leq H(X|U_2,Y_0) + s_2 I(Y_2;U_2|Y_0)$. For $s_1 > 1$, we have
\begin{equation*}
F_{\dv s }(\dv P) 
\stackrel{(a)}{\geq} (1-s_1) H(X|U_1,U_2,Y_0) + s_1 H(X|U_2,Y_0) + s_2 I(Y_2;U_2|Y_0)
\stackrel{(b)}{\geq} H(X|U_2,Y_0) + s_2 I(Y_2;U_2|Y_0) \;,
\end{equation*}
where $(a)$ follows since mutual information is always positive, i.e., $I(Y_1;U_1|X,Y_0) \geq 0$; $(b)$ holds since conditioning reduces entropy and $1 - s_1 <0$. Then $F^\star = H(X|U_2,Y_0) + s_2 I(Y_2;U_2|Y_0)$ for $s_1 > 1$. Hence we can restrict the range of $s_1$ to $s_1 \in (0,1]$. 
\end{proof}

\subsubsection{Computation of $\mc{RD}_\mathrm{CEO}^1$}~\label{section-BA-computation-region1}

In this section, we derive an algorithm to solve~\eqref{equation-CEO-parametrization-D} for a given parameter value $\dv s$. To that end, we define a variational bound on $F_{\dv s}(\dv P)$, and optimize it instead of~\eqref{equation-CEO-parametrization-D}. Let $\dv Q$ be a set of some auxiliary pmfs defined as
\begin{equation}
\dv Q := \{ Q_{U_1} \;, \; Q_{U_2} \;, \; Q_{X|U_1,U_2,Y_0} \;, \; Q_{X|U_1,Y_0} \;, \; Q_{X|U_2,Y_0} \;, \; Q_{Y_0|U_1} \;, \; Q_{Y_0|U_2} \} \;.
\end{equation}
	
In the following we define the variational cost function $F_{\dv s }(\dv P, \dv Q)$ 
\begin{align}~\label{equation-CEO-objective2}
F_{\dv s}(\dv P,\dv Q) :=& - s_1 H(X|Y_0) - (s_1 + s_2) H(Y_0) \nonumber\\
&+ \E_{P_{X,Y_0,Y_1,Y_2}} \Big[(1-s_1) \E_{P_{U_1| Y_1}}\E_{P_{U_2| Y_2}}[- \log Q_{X|U_1,U_2,Y_0}] + s_1 \E_{P_{U_1| Y_1}}[- \log Q_{X|U_1,Y_0}] \nonumber\\
&\hspace{6.5em}  + s_1 \E_{P_{U_2| Y_2}}[- \log Q_{X|U_2,Y_0}]  + s_1 D_\mathrm{KL}(P_{U_1|Y_1}\|Q_{U_1}) + s_2 D_\mathrm{KL}(P_{U_2|Y_2}\|Q_{U_2}) \nonumber\\   
&\hspace{6.5em} + s_1 \E_{P_{U_1| Y_1}}[- \log Q_{Y_0|U_1}] + s_2 \E_{P_{U_2| Y_2}}[- \log Q_{Y_0|U_2}] \Big] \;.
\end{align}
	
The following lemma states that $F_s(\dv P, \dv Q)$ is an upper bound on $F_s(\dv P)$ for all distributions $\dv Q$. 
	
\begin{lemma}~\label{lemma-CEO-fixed-P}
For fixed $\dv P$, we have
\begin{equation*}
F_s(\dv P, \dv Q) \geq F_s(\dv P) \;, \quad \text{for all } \dv Q \;.
\end{equation*} 
In addition, there exists a $\dv Q$ that achieves the minimum $\argmin_{\dv Q} F_{\dv s}(\dv P, \dv Q) = F_{\dv s}(\dv P)$, given by
\begin{equation}~\label{equation-CEO-optimal-Q}
\begin{aligned}
&Q_{U_k} = P_{U_k} \;, \quad Q_{X|U_k,Y_0} = P_{X|U_k,Y_0} \;, \quad Q_{Y_0|U_k} = P_{Y_0|U_k} \;, \quad\text{for }\: k=1, 2 \;,\\
&Q_{X|U_1,U_2,Y_0} = P_{X|U_1,U_2,Y_0} \;. 
\end{aligned}
\end{equation}
\end{lemma}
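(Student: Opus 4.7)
The plan is to establish the variational bound by decomposing each logarithmic expectation in $F_{\dv s}(\dv P, \dv Q)$ into the corresponding entropy plus a non-negative Kullback--Leibler gap, and then to show that, after collecting all entropic contributions, the ``gap-free'' part matches $F_{\dv s}(\dv P)$ exactly while the remaining pieces are a non-negative combination of KL divergences that vanishes at $\dv Q^\star$.

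First, I would apply Gibbs' identity $\E_P[-\log Q] = H(P) + D_{\mathrm{KL}}(P\|Q)$ to each of the five terms involving $\log Q_{X|U_1,U_2,Y_0}$, $\log Q_{X|U_k,Y_0}$, and $\log Q_{Y_0|U_k}$ in~\eqref{equation-CEO-objective2}, and use the elementary identity $\E_{P_{Y_k}} D_{\mathrm{KL}}(P_{U_k|Y_k}\|Q_{U_k}) = I(Y_k;U_k) + D_{\mathrm{KL}}(P_{U_k}\|Q_{U_k})$ to split each of the two remaining KL terms into a data-dependent piece $I(Y_k;U_k)$ and a piece $D_{\mathrm{KL}}(P_{U_k}\|Q_{U_k})$ that only depends on the choice of $Q_{U_k}$. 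This rewrites $F_{\dv s}(\dv P,\dv Q)$ as an entropic expression (independent of $\dv Q$) plus a weighted sum of seven non-negative KL divergences, with weights $(1-s_1)$, $s_1$, $s_1$, $s_1$, $s_2$, $s_1$, and $s_2$ attached to the seven elements of $\dv Q$.

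Second, I would simplify the $\dv Q$-free entropic part using the chain rule of mutual information together with the Markov structure $U_1 \to Y_1 \to (X,Y_0) \to Y_2 \to U_2$ imposed on every joint law in the region. The two key identities are $I(Y_1;U_1) = I(U_2,Y_0;U_1) + I(Y_1;U_1|U_2,Y_0)$, which follows from $I(U_2,Y_0;U_1|Y_1) = 0$, and $I(X;U_1|U_2,Y_0) = I(X;U_1|Y_0) - I(U_2;U_1|Y_0)$, which uses $I(U_2;U_1|X,Y_0) = 0$; the analogous reduction on the $U_2$ side collapses the $s_2$-weighted block via $I(Y_0;U_2|Y_2)=0$. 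Grouping $s_1$- and $s_2$-weighted contributions separately, and noting that the additive constants $-s_1 H(X|Y_0)$ and $-(s_1+s_2)H(Y_0)$ exactly cancel the ``surplus'' entropies produced by the $\log Q_{X|U_k,Y_0}$ and $\log Q_{Y_0|U_k}$ expansions, the entropic part collapses to $H(X|U_1,U_2,Y_0) + s_1 I(Y_1;U_1|U_2,Y_0) + s_2 I(Y_2;U_2|Y_0)$, which is precisely $F_{\dv s}(\dv P)$ in~\eqref{equation-CEO-objective1}.

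Third, invoking $s_1\in(0,1]$ from Lemma~\ref{lemma-range} together with $s_1,s_2>0$, all seven KL weights are non-negative, so the decomposition yields $F_{\dv s}(\dv P,\dv Q) \geq F_{\dv s}(\dv P)$ for every $\dv Q$. Equality requires every one of the seven KL divergences to vanish, which forces term by term exactly the choices listed in~\eqref{equation-CEO-optimal-Q}: $Q_{X|U_1,U_2,Y_0}=P_{X|U_1,U_2,Y_0}$, $Q_{X|U_k,Y_0}=P_{X|U_k,Y_0}$, $Q_{Y_0|U_k}=P_{Y_0|U_k}$, and $Q_{U_k}=P_{U_k}$. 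The main obstacle is the algebraic bookkeeping in the second step: matching a weighted combination of several conditional entropies and mutual informations to the compact target $F_{\dv s}(\dv P)$ requires systematic use of Markov-chain cancellations, and it is easy to miss a $(U_2,Y_0;U_1)$-type cross term if one does not separate the $s_1$- and $s_2$-contributions (the latter of which is essentially immediate).
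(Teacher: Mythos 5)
Your proposal is correct and follows essentially the same route as the paper's proof: both establish the identity $F_{\dv s}(\dv P,\dv Q)=F_{\dv s}(\dv P)+(1-s_1)\,\E[D_{\mathrm{KL}}(P_{X|U_1,U_2,Y_0}\|Q_{X|U_1,U_2,Y_0})]+\cdots$ with the same seven KL gap terms and weights, using the Markov chain $U_1\mkv Y_1\mkv (X,Y_0)\mkv Y_2\mkv U_2$ for the entropy bookkeeping and Lemma~\ref{lemma-range} for $1-s_1\geq 0$. The only difference is direction of presentation (you expand $F_{\dv s}(\dv P,\dv Q)$ via Gibbs' identity and collapse the entropic residue to $F_{\dv s}(\dv P)$, whereas the paper expands $F_{\dv s}(\dv P)$ into entropies and recognizes it as $F_{\dv s}(\dv P,\dv Q)$ at $\dv Q=\dv Q^\star$), which is immaterial.
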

	
\begin{proof}
The proof of Lemma~\ref{lemma-CEO-fixed-P} is given in Appendix~\ref{proof-lemma-CEO-fixed-P}.	
\end{proof}

Using the lemma above, the minimization in~\eqref{equation-CEO-parametrization-D} can be written in terms of the variational cost function as follows 
\begin{equation}~\label{equation-BA-double-minimization}
\argmin_{\dv P} F_{\dv s}(\dv P) = \argmin_{\dv P} \argmin_{\dv Q} F_{\dv s}(\dv P, \dv Q) \;. 
\end{equation}

Motivated by the BA algorithm \cite{B72,A72}, we propose an alternate optimization procedure over the set of pmfs $\dv P$ and $\dv Q$ as stated in Algorithm~\ref{algo-CEO}. The main idea is that at iteration $t$, for fixed $\dv P^{(t-1)}$ the optimal $\dv Q^{(t)}$ minimizing $F_{\dv s}(\dv P, \dv Q)$ can be found analytically; next, for given $\dv Q^{(t)}$ the optimal $\dv P^{(t)}$ that minimizes $F_{\dv s}(\dv P, \dv Q)$ has also a closed form. So, starting with a random initialization $\dv P^{(0)}$, the algorithm iterates over distributions $\dv Q$ and $\dv P$ minimizing $F_{\dv s}(\dv P, \dv Q)$ until the convergence, as stated below   
\begin{equation*}
\dv P^{(0)} \rightarrow \dv Q^{(1)} \rightarrow \dv P^{(1)} \rightarrow \ldots \rightarrow \dv P^{(t-1)} \rightarrow \dv Q^{(t)} \rightarrow \ldots \rightarrow \dv P^\star \rightarrow \dv Q^\star \;.
\end{equation*}

	\setlength{\textfloatsep}{10pt} 
	\begin{algorithm}[!]
		\caption{BA-type algorithm to compute $\mc{RD}_\mathrm{CEO}^1$}
		\label{algo-CEO}
		{\fontsize{10}{10}\selectfont
			\begin{algorithmic}[1]
				\smallskip
				\Inputs{pmf $P_{X,Y_0,Y_1,Y_2}$, parameters $1 \geq s_1 > 0$, $s_2 > 0$.}
				\Outputs{Optimal $P_{U_1|Y_1}^\star$, $P_{U_2|Y_2}^\star$; triple $(R_{1,\dv s}, R_{2,\dv s}, D_{\dv s})$.}
				\Initialize{Set $t=0$. Set $\dv P^{(0)}$ randomly.}
				\Repeat 
				\State\label{algo-CEO-step7}Update the following pmfs for $k = 1, 2$  
				\begin{equation*}
				\footnotesize
				\begin{aligned}
				p^{(t+1)}(u_k) &= \sum\nolimits_{y_k} p^{(t)}(u_k|y_k) p(y_k), \\
				p^{(t+1)}(u_k|y_0) &= \sum\nolimits_{y_k} p^{(t)}(u_k|y_k) p(y_k|y_0), \\
				p^{(t+1)}(u_k|x,y_0) &= \sum\nolimits_{y_k} p^{(t)}(u_k|y_k) p(y_k|x,y_0), \\ 
				p^{(t+1)}(x|u_1,u_2,y_0) &= \frac{p^{(t+1)}(u_1|x,y_0) p^{(t+1)}(u_2|x,y_0) p(x,y_0)}{\sum_{x} p^{(t+1)}(u_1|x,y_0) p^{(t+1)}(u_2|x,y_0) p(x,y_0)}.  
				\end{aligned}
				\end{equation*}	
				\State\label{algo-CEO-step8}Update $\dv Q^{(t+1)}$ by using \eqref{equation-CEO-optimal-Q}.
				\State\label{algo-CEO-step6}Update $\dv P^{(t+1)}$ by using \eqref{equation-CEO-optimal-P}.
				\State $t \leftarrow t+1$.
				\Until{convergence.}
		\end{algorithmic} }
	\end{algorithm}

\noindent
At each iteration, the optimal values of $\dv P$ and $\dv Q$ are found by solving a convex optimization problems. We have the following lemma.

\begin{lemma}~\label{lemma-CEO-convex}
$F_{\dv s}(\dv P, \dv Q)$ is convex in $\dv P$ and convex in $\dv Q$.
\end{lemma}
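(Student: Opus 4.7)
The plan is to decompose $F_{\dv s}(\dv P, \dv Q)$ term by term and verify convexity of each piece separately. The additive constants $-s_1 H(X|Y_0)$ and $-(s_1+s_2) H(Y_0)$ depend on neither $\dv P$ nor $\dv Q$ and can be discarded. The remaining pieces fall into two families: expectations of the form $\E_{\dv P}[-\log Q(\cdot)]$ for various conditional components of $\dv Q$, and the two KL-divergence terms $D_{\mathrm{KL}}(P_{U_k|Y_k}\|Q_{U_k})$.

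For convexity in $\dv Q$ with $\dv P$ fixed, I would invoke convexity of $-\log$ on $(0,\infty)$. Every term of the form $\E_{\dv P}[-\log Q(\cdot)]$ is a non-negative linear combination of the values of $-\log Q(\cdot)$ at individual points, hence convex in the corresponding component of $\dv Q$; the coefficient $(1-s_1)$ multiplying the joint-posterior term is non-negative thanks to Lemma~\ref{lemma-range}, which restricts $s_1$ to $(0,1]$. The KL divergences are convex in $Q_{U_k}$ by the same log-convexity argument (equivalently, by joint convexity of KL in its second argument). Since the seven components of $\dv Q$ appear in disjoint summands of the cost, joint convexity over $\dv Q$ follows by summation.

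For convexity in $\dv P$ with $\dv Q$ fixed, the bookkeeping is slightly more delicate because of the cross-term $\E_{P_{U_1|Y_1}}\E_{P_{U_2|Y_2}}[-\log Q_{X|U_1,U_2,Y_0}]$, which is bilinear in the pair $(P_{U_1|Y_1},P_{U_2|Y_2})$. Freezing either kernel, however, collapses this term to an affine functional of the other, hence trivially convex. The remaining expectation terms are each affine in a single $P_{U_k|Y_k}$. The KL term $D_{\mathrm{KL}}(P_{U_k|Y_k}\|Q_{U_k})$ is convex in $P_{U_k|Y_k}$ since, for each $y_k$ and each $u_k$, the summand $P_{U_k|Y_k}(u_k|y_k) \log \frac{P_{U_k|Y_k}(u_k|y_k)}{Q_{U_k}(u_k)}$ has the form $x \log(x/c)$ with $c = Q_{U_k}(u_k)>0$ fixed, and this is a convex function of $x$.

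The only mild obstacle is interpretive: the bilinear cross-term prevents joint convexity in the pair $(P_{U_1|Y_1},P_{U_2|Y_2})$, so the statement ``$F_{\dv s}(\dv P,\dv Q)$ is convex in $\dv P$'' is to be read as biconvex in its two components. This is nonetheless precisely what is needed for the alternating-minimization scheme of Algorithm~\ref{algo-CEO}: each subproblem updates a single component of $\dv P$ (or of $\dv Q$) while the others are held fixed, and each such subproblem is convex with the closed-form solutions displayed in steps~\ref{algo-CEO-step7}--\ref{algo-CEO-step6}, ensuring monotonic decrease of $F_{\dv s}$.
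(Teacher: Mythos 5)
Your argument is correct, and it is essentially a fleshed-out version of the paper's proof, which consists of the single sentence ``follows from the log-sum inequality'': the convexity of $-\log$ that you use for the $\E_{\dv P}[-\log Q(\cdot)]$ terms and the convexity of $x\mapsto x\log(x/c)$ that you use for the KL terms are precisely the facts the log-sum inequality packages, and your observation that the seven components of $\dv Q$ sit in disjoint summands is what upgrades the component-wise statements to convexity over all of $\dv Q$.

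The caveat you raise at the end is more than ``mild,'' and you are right to flag it. The term $(1-s_1)\,\E_{P_{U_1|Y_1}}\E_{P_{U_2|Y_2}}[-\log Q_{X|U_1,U_2,Y_0}]$ in \eqref{equation-CEO-objective2} is bilinear in the pair $(P_{U_1|Y_1},P_{U_2|Y_2})$, and a bilinear form with nonzero coefficients has an indefinite Hessian; so for $s_1<1$ the functional is only biconvex in $\dv P$, not jointly convex. The paper states the lemma without this qualification and, in the proof of Lemma~\ref{lemma-CEO-fixed-Q}, invokes it to assert that the KKT conditions for $\min_{\dv P}F_{\dv s}(\dv P,\dv Q)$ are ``necessary and sufficient for optimality.'' Under biconvexity alone that sufficiency claim is not justified, and the characterization \eqref{equation-CEO-optimal-P} — which is in any case a coupled fixed-point system, since $\psi_k$ in \eqref{equation-CEO-psi} depends on $P_{U_{\kbar}|Y_{\kbar}}$ — should be read as a stationarity (necessary) condition rather than a certificate of global optimality of the inner minimization. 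Your reading, namely that each single-block subproblem is convex and that this is what the alternating scheme and the SUM-type convergence-to-stationarity argument actually consume, is the defensible one; just be aware that it proves a slightly weaker statement than the lemma as literally written.
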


\begin{proof}
The proof of Lemma~\ref{lemma-CEO-convex} follows from the log-sum inequality.   
\end{proof}	

For fixed $\dv P^{(t-1)}$, the optimal $\dv Q^{(t)}$ minimizing the variational bound in~\eqref{equation-CEO-objective2} can be found from Lemma~\ref{lemma-CEO-fixed-P} and given by~\eqref{equation-CEO-optimal-Q}. For fixed $\dv Q^{(t)}$, the optimal $\dv P^{(t)}$ minimizing~\eqref{equation-CEO-objective2} can be found by using the next lemma.

\begin{lemma}~\label{lemma-CEO-fixed-Q}
For fixed $\dv Q$, there exists a $\dv P$ that achieves the minimum $\argmin_{\dv P} F_{\dv s}(\dv P, \dv Q)$, where $P_{U_k|Y_k}$ is given by
\begin{equation}~\label{equation-CEO-optimal-P}		
p(u_k|y_k) = q(u_k) \frac {\exp[-\psi_k(u_k,y_k)]} {\sum_{u_k} q(u_k)\exp[-\psi_k(u_k,y_k)]} \;, \quad\text{for}\:\: k=1,2 \;,
\end{equation}
where $\psi_k(u_k,y_k)$, $k=1,2$, are defined as follows
\begin{equation}~\label{equation-CEO-psi} 
\psi_k(u_k,y_k) \!\! := \!\! \frac{1-s_1}{s_k} \E_{U_\kbar,Y_0|y_k} \! D(P_{X|y_k,U_\kbar,Y_0}\|Q_{X|u_k,U_\kbar,Y_0})  
+ \frac{s_1}{s_k} \! \E_{Y_0|y_k} \! D(P_{X|y_k,Y_0}\|Q_{X|u_k,Y_0})  + \! D(P_{Y_0|y_k}\|Q_{Y_0|u_k}) .
\end{equation}
\end{lemma}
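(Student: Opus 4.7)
The plan is to derive~\eqref{equation-CEO-optimal-P} as the first-order stationarity condition of $F_{\dv s}(\dv P, \dv Q)$ in $\dv P$, at fixed $\dv Q$. For each $k \in \{1,2\}$, holding $P_{U_\kbar|Y_\kbar}$ fixed, all $u_k$-dependent contributions in~\eqref{equation-CEO-objective2} are affine in $P_{U_k|Y_k}$ except for the KL term $s_k \E_{Y_k}[D_\mathrm{KL}(P_{U_k|Y_k}\|Q_{U_k})]$, which is strictly convex; subject to $\sum_{u_k} p(u_k|y_k) = 1$, the block minimizer is thus unique and is characterized by its KKT conditions, which by Lemma~\ref{lemma-CEO-convex} are sufficient.

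I would introduce Lagrange multipliers $\lambda(y_k)$ for normalization, differentiate the Lagrangian with respect to $p(u_k|y_k)$ at fixed $(u_k,y_k)$, factor out the common marginal $p(y_k)$ that appears in every contribution, and divide through by $s_k$, obtaining
\begin{align*}
\log \frac{p(u_k|y_k)}{q(u_k)} =\, & -\frac{1-s_1}{s_k}\,\E_{U_\kbar,Y_0,X|y_k}[-\log Q_{X|u_k,U_\kbar,Y_0}] - \frac{s_1}{s_k}\,\E_{Y_0,X|y_k}[-\log Q_{X|u_k,Y_0}] \\
& - \E_{Y_0|y_k}[-\log Q_{Y_0|u_k}] + c(y_k),
\end{align*}
where $c(y_k)$ collects the constant $-1$ and the rescaled Lagrange multiplier. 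The decisive algebraic move is to split each cross-entropy against $Q$ as a $u_k$-independent conditional entropy under $P$ plus a KL divergence, for instance
\begin{equation*}
\E_{U_\kbar,Y_0,X|y_k}[-\log Q_{X|u_k,U_\kbar,Y_0}] = \E_{U_\kbar,Y_0|y_k}[H(X|y_k,U_\kbar,Y_0)] + \E_{U_\kbar,Y_0|y_k}[D(P_{X|y_k,U_\kbar,Y_0}\,\|\,Q_{X|u_k,U_\kbar,Y_0})],
\end{equation*}
and analogously for the marginal-reconstruction and side-information contributions. The three resulting $u_k$-independent entropy pieces merge with $c(y_k)$ into a single $y_k$-dependent normalization constant, leaving exactly $-\psi_k(u_k,y_k)$ with $\psi_k$ as in~\eqref{equation-CEO-psi}. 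Exponentiating and enforcing $\sum_{u_k}p(u_k|y_k)=1$ yields~\eqref{equation-CEO-optimal-P}.

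The main obstacle is purely combinatorial bookkeeping: verifying that the asymmetric weights in~\eqref{equation-CEO-objective2} produce the precise coefficients $(1-s_1)/s_k$, $s_1/s_k$, and $1$ appearing in $\psi_k$. In particular, the side-information term carries coefficient $s_k$ in~\eqref{equation-CEO-objective2}, which cancels exactly against the $1/s_k$ normalization so that no $s$-dependent scaling survives; the middle cross-entropy has coefficient $s_1$ for both $k=1$ and $k=2$, producing the asymmetric ratio $s_1/s_k$. Existence of a minimizer follows from continuity of $F_{\dv s}(\cdot,\dv Q)$ over the compact simplex of conditional pmfs, and block-convexity then ensures that any $\dv P$ simultaneously satisfying~\eqref{equation-CEO-optimal-P} for $k=1,2$ is a block-coordinate minimizer, as claimed.
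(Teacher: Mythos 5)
Your proposal is correct and follows essentially the same route as the paper's proof in Appendix~F: a Lagrangian/KKT stationarity condition for the block in $P_{U_k|Y_k}$ (justified by the convexity from Lemma~\ref{lemma-CEO-convex}), followed by splitting each cross-entropy against $\dv Q$ into a $u_k$-independent conditional entropy under $\dv P$ plus a KL divergence, absorbing the entropy pieces into the per-$y_k$ normalization, and exponentiating. The coefficient bookkeeping you describe ($(1-s_1)/s_k$, $s_1/s_k$, and $1$) matches the paper's computation exactly.
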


\begin{proof}
The proof of Lemma~\ref{lemma-CEO-fixed-Q} is given in Appendix~\ref{proof-lemma-CEO-fixed-Q}.
\end{proof}

At each iteration of Algorithm~\ref{algo-CEO}, $F_{\dv s}(\dv P^{(t)}, \dv Q^{(t)})$ decreases until eventually it converges. However, since $F_{\dv s}(\dv P, \dv Q)$ is convex in each argument but not necessarily jointly convex, Algorithm~\ref{algo-CEO} does not necessarily converge to the global optimum. In particular, next proposition shows that Algorithm~\ref{algo-CEO} converges to a stationary solution of the minimization in~\eqref{equation-CEO-parametrization-D}. 

\begin{proposition}
Every limit point of $\dv P^{(t)}$ generated by Algorithm~\ref{algo-CEO} converges to a stationary solution of \eqref{equation-CEO-parametrization-D}.
\end{proposition}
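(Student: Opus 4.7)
The plan is to follow the standard route for alternating minimization: first show monotone descent of the cost, then use compactness to produce limit points, and finally verify that the coordinate-wise stationarity captured by the fixed-point equations coincides with stationarity of the original objective $F_{\dv s}(\dv P)$.

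First I would establish monotone descent. By Lemma~\ref{lemma-CEO-fixed-P}, the $\dv Q$-update at Steps~\ref{algo-CEO-step7}--\ref{algo-CEO-step8} yields $F_{\dv s}(\dv P^{(t)},\dv Q^{(t+1)}) = F_{\dv s}(\dv P^{(t)})$; by Lemma~\ref{lemma-CEO-fixed-Q} and convexity in $\dv P$ (Lemma~\ref{lemma-CEO-convex}), the $\dv P$-update at Step~\ref{algo-CEO-step6} gives $F_{\dv s}(\dv P^{(t+1)},\dv Q^{(t+1)}) \leq F_{\dv s}(\dv P^{(t)},\dv Q^{(t+1)})$. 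Applying Lemma~\ref{lemma-CEO-fixed-P} again to the new $\dv P$ produces $F_{\dv s}(\dv P^{(t+1)}) \leq F_{\dv s}(\dv P^{(t+1)},\dv Q^{(t+1)}) \leq F_{\dv s}(\dv P^{(t)})$. Thus $\{F_{\dv s}(\dv P^{(t)})\}$ is non-increasing, and since it is bounded below (for instance by $0$ once one notes that conditional entropies and KL terms are non-negative after regrouping, or simply by $\inf_{\dv P} F_{\dv s}(\dv P) > -\infty$), the sequence converges.

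Next, since the feasible set is a product of probability simplices, it is compact, so $\{\dv P^{(t)}\}$ admits a convergent subsequence $\dv P^{(t_j)} \to \dv P^\star$. The corresponding $\dv Q^{(t_j+1)}$ is defined by the continuous closed-form expression \eqref{equation-CEO-optimal-Q} in $\dv P$, hence $\dv Q^{(t_j+1)} \to \dv Q^\star$ with $\dv Q^\star$ the unique $\dv Q$-minimizer at $\dv P^\star$. Similarly, because the $\dv P$-update \eqref{equation-CEO-optimal-P} is a continuous function of $\dv Q$ (the RHS involves continuous expectations and a strictly positive normalizer), the next iterate $\dv P^{(t_j+1)}$ converges to $\argmin_{\dv P} F_{\dv s}(\dv P,\dv Q^\star)$. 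But by monotone descent and the convergence of $F_{\dv s}(\dv P^{(t)})$, this limit must again be $\dv P^\star$. Hence the limit point $(\dv P^\star,\dv Q^\star)$ satisfies the fixed-point equations
\[
\dv Q^\star = \argmin_{\dv Q} F_{\dv s}(\dv P^\star,\dv Q), \qquad \dv P^\star = \argmin_{\dv P} F_{\dv s}(\dv P,\dv Q^\star).
\]

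Finally, I would translate coordinate-wise stationarity into stationarity of $F_{\dv s}(\dv P)$. Define $G(\dv P) := \min_{\dv Q} F_{\dv s}(\dv P,\dv Q) = F_{\dv s}(\dv P)$ by Lemma~\ref{lemma-CEO-fixed-P}. Because the inner minimizer $\dv Q^\star(\dv P)$ is attained at the interior closed-form solution \eqref{equation-CEO-optimal-Q} and is smooth in $\dv P$, one may apply the envelope theorem to obtain $\nabla_{\dv P} G(\dv P)\big|_{\dv P^\star} = \nabla_{\dv P} F_{\dv s}(\dv P,\dv Q^\star)\big|_{\dv P = \dv P^\star}$. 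The second fixed-point equation and convexity of $F_{\dv s}(\cdot,\dv Q^\star)$ (Lemma~\ref{lemma-CEO-convex}) imply that $\dv P^\star$ satisfies the KKT conditions for $\min_{\dv P} F_{\dv s}(\dv P,\dv Q^\star)$ on the simplex constraints, and these KKT conditions transfer to $G$ via the envelope identity. Hence $\dv P^\star$ is a stationary point of \eqref{equation-CEO-parametrization-D}.

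The main obstacle I anticipate is the last step: transferring stationarity from the variational cost $F_{\dv s}(\dv P,\dv Q)$ to the reduced cost $F_{\dv s}(\dv P)$. The cleanest way is the envelope-theorem argument above, but it requires justifying smooth dependence of the inner minimizer on $\dv P$, which in turn requires verifying that the denominators in \eqref{equation-CEO-optimal-Q} stay bounded away from zero along the iterates (a genericity condition on the initialization that one may ensure by a small perturbation), or equivalently handling the boundary of the simplex by working with natural parameters.
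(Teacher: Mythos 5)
Your argument is essentially correct, but it takes a different route from the paper: you re-derive the convergence result from first principles (monotone descent, compactness, fixed-point characterization of limit points, then an envelope/Danskin argument to transfer stationarity from $F_{\dv s}(\dv P,\dv Q)$ to $F_{\dv s}(\dv P)$), whereas the paper simply observes that Lemma~\ref{lemma-CEO-fixed-P} makes $F_{\dv s}(\dv P,\dv Q^\star(\dv P'))$ a \emph{globally tight upper bound} on $F_{\dv s}(\dv P)$, so that Algorithm~\ref{algo-CEO} is an instance of the ``Successive Upper-bound Minimization'' framework of Razaviyayn--Hong--Luo, and convergence to a stationary point follows from their Theorem~1 once conditions (A1)--(A4) are checked. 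The trade-off is clear: the paper's proof is two lines but delegates all the delicate work to the cited framework; your proof is self-contained and makes visible exactly where the difficulty lies (your last paragraph correctly identifies the transfer of stationarity as the crux --- this is precisely the content of \cite[Proposition 1]{RHL13}). Two points in your sketch deserve tightening. First, the claim that the limit of $\dv P^{(t_j+1)}$ ``must again be $\dv P^\star$'' does not follow as stated, since $\argmin_{\dv P} F_{\dv s}(\dv P,\dv Q^\star)$ need not be a singleton; the standard fix is to pass to a further subsequence $\dv P^{(t_j+1)}\to\dv P^{\star\star}$ and use the sandwich $F_{\dv s}(\dv P^\star)=F_{\dv s}(\dv P^{\star\star})\leq F_{\dv s}(\dv P^{\star\star},\dv Q^\star)\leq F_{\dv s}(\dv P^\star,\dv Q^\star)=F_{\dv s}(\dv P^\star)$, which already yields $\dv P^\star\in\argmin_{\dv P}F_{\dv s}(\dv P,\dv Q^\star)$. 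Second, the envelope-theorem step should be phrased in terms of one-sided directional derivatives along feasible directions of the simplex rather than gradients, which avoids the boundary/vanishing-denominator issue you flag; this is exactly how the SUM framework handles it, so your anticipated obstacle is real but surmountable without any genericity assumption on the initialization.
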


\begin{proof}
Algorithm~\ref{algo-CEO} falls into the class of so-called ``Successive Upper-bound Minimization" (SUM) algorithms~\cite{RHL13}, in which $F_{\dv s}(\dv P, \dv Q)$ acts as a globally tight upper bound on $F_{\dv s}(\dv P)$. Let $\dv Q^\star(\dv P):= \arg\min_{\dv Q} F_{\dv s}(\dv P, \dv Q)$. From Lemma~\ref{lemma-CEO-fixed-P},  $F_{\dv s}(\dv P, \dv Q^\star(\dv P'))\geq F_{\dv s}(\dv P, \dv Q^\star(\dv P)) = F_{\dv s}(\dv P) $ for $\dv P'\neq \dv P$.  It follows that $F_{\dv s}(\dv P)$ and $F_{\dv s}(\dv P, \dv Q^\star(\dv P'))$ satisfy \cite[Proposition 1]{RHL13}  and thus $F_{\dv s}(\dv P, \dv Q^\star(\dv P'))$  satisfies (A1)--(A4) in \cite{RHL13}. Convergence to a stationary point of~\eqref{equation-CEO-parametrization-D} follows from \cite[Theorem 1]{RHL13}.
\end{proof}

\begin{remark}
Algorithm~\ref{algo-CEO} generates a sequence that is non-increasing. Since this sequence is lower bounded, convergence to a stationary point is guaranteed. This \textit{per-se}, however, does not necessarily imply that such a point is a stationary solution of the original problem described by~\eqref{equation-CEO-parametrization-D}. Instead, this is guaranteed here by showing that the Algorithm~\ref{algo-CEO} is of SUM-type with the function $F_{\dv s}(\dv P, \dv Q)$ satisfying the necessary conditions~\cite[(A1)--(A4)]{RHL13}. \vspace{0.2em}\qedblack \vspace{-0.2em}              	
\end{remark}

\setlength{\textfloatsep}{10pt} 
\begin{algorithm}[]
	\caption{BA-type algorithm for the Gaussian vector CEO}
	\label{algo-CEO-Gauss}
	{\fontsize{10}{10}\selectfont
		\begin{algorithmic}[1]
			\smallskip
			\Inputs{Covariance ${\dv \Sigma}_{(\dv x, \dv y_0, \dv y_1, \dv y_2)}$, parameters$1 \geq s_1 > 0$, $s_2 > 0$.}
			\Outputs{Optimal pairs $(\dv A_k^\star,\dv \Sigma_{\dv z_k^\star} )$, $k=1,2$.}
			\Initialize{Set $t=0$. Set randomly $\dv A_k^{0}$ and $\dv \Sigma_{\dv z_k^{0}} \succeq 0$ for $k=1,2$.}
			\Repeat 
			\State For  $k=1, 2$, update the following
			\begin{align*}
			\dv\Sigma_{\dv u_k^t} &= \dv A_k^t \dv\Sigma_{\dv y_k} {\dv A_k^t}^\dagger + \dv\Sigma_{\dv z_k^t}, \\
			\dv\Sigma_{\dv u_k^t|(\dv x, \dv y)} &= \dv A_k^t \dv\Sigma_k {\dv A_k^t}^\dagger + \dv\Sigma_{\dv z_k^t},
			\end{align*}                                                                     
			and update $\dv\Sigma_{\dv u_k^t|(\dv u_\kbar^t, \dv y)}$, $\dv\Sigma_{\dv u_2^t|\dv y}$ and $\dv\Sigma_{\dv y_k^t|(\dv u_\kbar^t, \dv y)}$ from their definitions by using the following 
			\begin{align*}
			\dv\Sigma_{\dv u_1^t, \dv u_2^t} &= \dv A_1^t \dv H_1 \dv\Sigma_{\dv x} \dv H_2^\dagger \dv A_2^{t^\dagger}, \\
			\dv\Sigma_{\dv u_k^t, \dv y} &= \dv A_k^t \dv H_k \dv\Sigma_{\dv x} \dv H_0^\dagger, \\
			\dv\Sigma_{\dv y_k, \dv u_\kbar^t} &= \dv H_k \dv\Sigma_{\dv x} \dv H_\kbar^\dagger {\dv A_\kbar^t}^\dagger.         
			\end{align*}
			\State Compute $\dv \Sigma_{\dv z_k^{t+1}}$ as in \eqref{equation-CEO-Gauss-Sigma-update} for $k=1,2$.
			\State Compute $\dv A_k^{t+1}$ as \eqref{equation-CEO-Gauss-A-update} for $k=1,2$.
			\State $t \leftarrow t+1$.
			\Until{convergence.}
	\end{algorithmic} }
\end{algorithm}

\subsection{Vector Gaussian Case}\label{section-BA-algorithm-Gaussian}

Computing the rate-distortion region $\mc{RD}_{\mathrm{VG}\text{-}\mathrm{CEO}}^\star$ of the vector Gaussian CEO problem as given by Theorem~\ref{theorem-Gauss-RD-CEO} is a convex optimization problem on $\{\dv\Omega_k\}_{k=1}^K$ which can be solved using, e.g., the popular generic optimization tool CVX~\cite{cvx}. Alternatively, the region can be computed using an extension of Algorithm~\ref{algo-CEO} to memoryless Gaussian sources as given in the rest of this section. 

\noindent For discrete sources with (small) alphabets, the updating rules of $\dv Q^{(t+1)}$ and $\dv P^{(t+1)}$ of Algorithm~\ref{algo-CEO} are relatively easy computationally. However, they become computationally unfeasible for continuous alphabet sources. Here, we leverage on the optimality of Gaussian test channels as shown by Theorem~\ref{theorem-Gauss-RD-CEO} to restrict the optimization of $\dv P$ to Gaussian distributions, which allows to reduce the search of update rules to those of the associated parameters, namely covariance matrices. In particular, we show that if $P_{\dv U_k|\dv Y_k}^{(t)}$, $k= 1, 2$, is Gaussian and such that
\begin{equation}~\label{eq:testChan}
\dv U_k^t = \dv A_k^t\dv Y_k +\dv Z_k^t \;,
\end{equation}
where $\dv Z_k^t\sim\mc{CN}(\dv 0,\dv\Sigma_{\dv z_k^t})$ then $P_{\dv U_k|\dv Y_k}^{(t+1)}$ too is Gaussian, with
\begin{equation}
\dv U_k^{t+1} = \dv A_k^{t+1} \dv Y_k + \dv Z_k^{t+1} \;,
\end{equation}
where $\dv Z_k^{t+1} \sim\mc{CN}(\dv 0, \dv\Sigma_{\dv z_k^{t+1}})$ and the parameters $\dv A_k^{t+1}$ and $\dv\Sigma_{\dv z_k^{t+1}}$ are given by
\begin{subequations}~\label{equation-CEO-Gauss-Sigma-and-A-update}
\begin{align}
\label{equation-CEO-Gauss-Sigma-update} 
\dv\Sigma_{\dv z_k^{t+1}} =& \: \left( \frac{1}{s_k} \dv\Sigma_{\dv u_k^t|(\dv x, \dv y_0)}^{-1} - \frac{1-s_1}{s_k} \dv\Sigma_{\dv u_k^t|(\dv u_\kbar^t,\dv y_0)}^{-1} + \frac{s_k-s_1}{s_k} \dv\Sigma_{\dv u_k^t|\dv y_0}^{-1} \right)^{-1} \\[0.5em]
\dv A_k^{t+1} =& \:\: \dv\Sigma_{\dv z_k^{t+1}} \left( \frac{1}{s_k} \dv\Sigma_{\dv u_k^t|(\dv x,\dv y_0)}^{-1} \dv A_k^t  (\dv I - \dv\Sigma_{\dv y_k|(\dv x,\dv y_0)} \dv\Sigma_{\dv y_k}^{-1} ) \right) \nonumber\\
&\:  - \dv\Sigma_{\dv z_k^{t+1}} \left( \frac{1-s_1}{s_k} \dv\Sigma_{\dv u_k^t|(\dv u_\kbar^t,\dv y_0)}^{-1} \dv A_k^t (\dv I - \dv\Sigma_{\dv y_k|(\dv u_\kbar^t,\dv y_0)} \dv\Sigma_{\dv y_k}^{-1}) - \frac{s_k-s_1}{s_k} \dv\Sigma_{\dv u_k^t|\dv y_0}^{-1} \dv A_k^t (\dv I - \dv\Sigma_{\dv y_k|\dv y_0} \dv\Sigma_{\dv y_k}^{-1}) \right). \label{equation-CEO-Gauss-A-update}
\end{align}
\end{subequations}
The updating steps are provided in Algorithm~\ref{algo-CEO-Gauss}. The proof of~\eqref{equation-CEO-Gauss-Sigma-and-A-update} can be found in Appendix~\ref{proof-CEO-Gauss-algorithm}.

\vspace{0.5em}
\subsection{Numerical Examples}\label{section-BA-numerical-examples}  

In this section, we discuss two examples, a binary CEO example and a vector Gaussian CEO example.  

\begin{figure*}[!ht]
\begin{center}
\subfloat[]{ 
\resizebox{.49\linewidth}{!}{\includegraphics{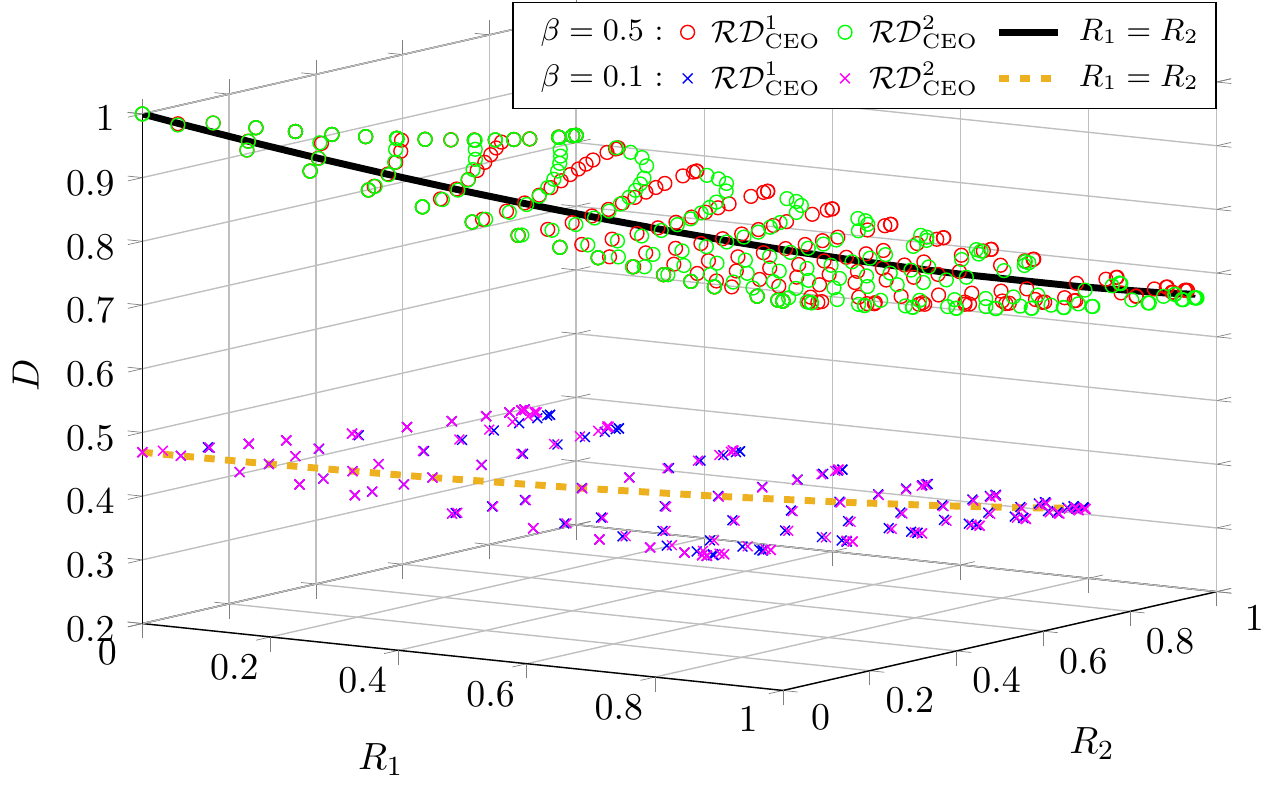}}
\label{fig-3D-DM} 
} 
\subfloat[]{
\resizebox{.49\linewidth}{!}{\includegraphics{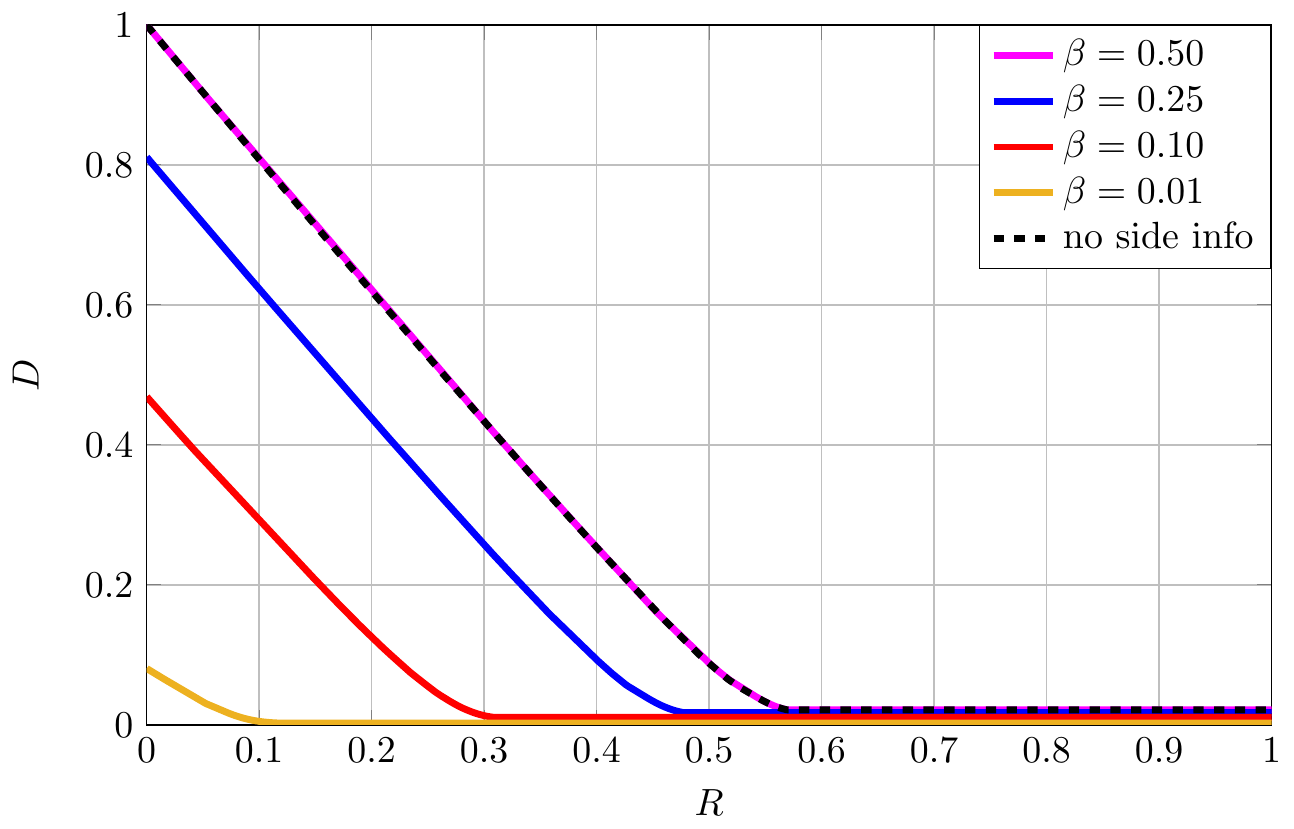}}
\label{fig-2D-DM}
}
\caption{Rate-distortion region of the binary CEO network of Example~\ref{binary-ceo-example}, computed using Algorithm~\ref{algo-CEO}. (a): set of $(R_1,R_2,D)$ triples such $(R_1,R_2,D) \in \mc{RD}_\mathrm{CEO}^1 \cup \mc{RD}_\mathrm{CEO}^2$, for $\alpha_1=\alpha_2=0.25$ and $\beta \in \{0.1, 0.25\}$. (b): set of $(R,D)$ pairs such $(R,R,D) \in \mc{RD}_\mathrm{CEO}^1 \cup \mc{RD}_\mathrm{CEO}^2$, for $\alpha_1=\alpha_2=0.01$ and $\beta \in \{0.01, 0.1, 0.25, 0.5\}$.}
\label{fig-rate-distortion-binary-ceo-example} 
\end{center}
\end{figure*}

\vspace{-0.7em}
\begin{example}~\label{binary-ceo-example}
Consider the following binary CEO problem. A memoryless binary source $X$, modeled as a Bernoulli-$(1/2)$ random variable, i.e., $X \sim \operatorname{Bern}(1/2)$, is observed remotely at two agents who communicate with a central unit decoder over error-free rate-limited links of capacity $R_1$ and $R_2$, respectively. The decoder wants to estimate the remote source $X$ to within some average fidelity level $D$, where the distortion is measured under the logarithmic loss criterion. The noisy observation $Y_1$ at Agent $1$ is modeled as the output of a binary symmetric channel (BSC) with crossover probability $\alpha_1 \in [0,1]$, whose input is $X$, i.e., $Y_1 = X \oplus S_1$ with $S_1 \sim \operatorname{Bern}(\alpha_1)$. Similarly, the noisy observation $Y_2$ at Agent $2$ is modeled as the output of a $\text{BSC}(\alpha_2)$ channel, $\alpha_2 \in [0,1]$, whose has input $X$, i.e., $Y_2 = X \oplus S_2$ with $S_2 \sim \operatorname{Bern}(\alpha_2)$. Also, the central unit decoder observes its own side information $Y_0$ in the form of the output of a $\text{BSC}(\beta)$ channel, $\beta \in [0,1]$, whose input is $X$, i.e., $Y_0= X \oplus S_0$ with $S_0 \sim \operatorname{Bern}(\beta)$. It is assumed that the binary noises $S_0$, $S_1$ and $S_2$ are independent between them and with the remote source $X$. 

\noindent We use Algorithm~\ref{algo-CEO} to numerically approximate\footnote{We remind the reader that, as already mentioned, Algorithm~\ref{algo-CEO} only converges to stationary points of the rate-distortion region.} the set of $(R_1,R_2,D)$ triples such that $(R_1,R_2,D)$ is in the union of the achievable regions $\mc{RD}_\mathrm{CEO}^1$ and $\mc{RD}_\mathrm{CEO}^2$ as given by~\eqref{constituent-rate-distortion-regions}. The regions are depicted in Figure~\ref{fig-3D-DM} for the values $\alpha_1=\alpha_2=0.25$ and $\beta \in \{0.1, 0.25\}$.  Note that for both values of $\beta$, an approximation of the rate-distortion region $\mc{RD}_\mathrm{CEO}$ is easily found as the convex hull of the union of the shown two regions. For simplicity, Figure~\ref{fig-2D-DM} shows achievable rate-distortion pairs $(R,D)$ in the case in which the rates of the two encoders are constrained to be at most $R$ bits per channel use each, i.e., $R_1=R_2=R$,  higher quality agents' observations $(Y_1,Y_2)$ corresponding to $\alpha_1=\alpha_2=0.01$ and $\beta \in \{0.01, 0.1, 0.25, 0.5\}$. In this figure, observe that, as expected, smaller values of $\beta$ correspond to higher quality estimate side information $Y_0$ at the decoder; and lead to smaller distortion values for given rate $R$. The choice $\beta = 0.5$ corresponds to the case of no or independent side information at decoder; and it is easy to check that the associated $(R,D)$ curve coincides with the one obtained through exhaustive search in~\cite[Figure 3]{CW14}. \qedblack
\end{example}

\vspace{-0.3em}
\begin{figure*}[!ht]
\begin{center}
\subfloat[]{ 
\resizebox{.49\linewidth}{!}{\includegraphics{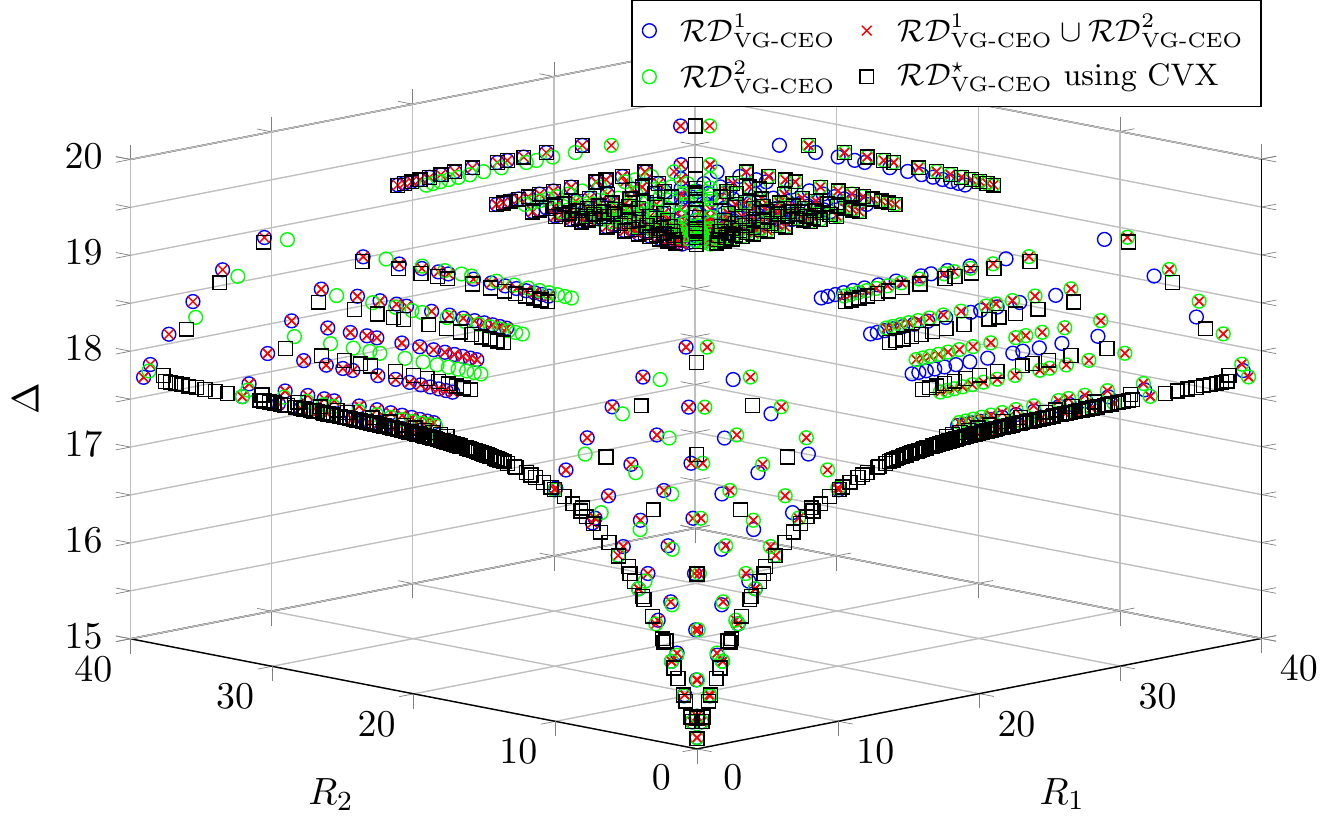}}
\label{fig-3D-Gauss} 
} 
\subfloat[]{
\resizebox{.49\linewidth}{!}{\includegraphics{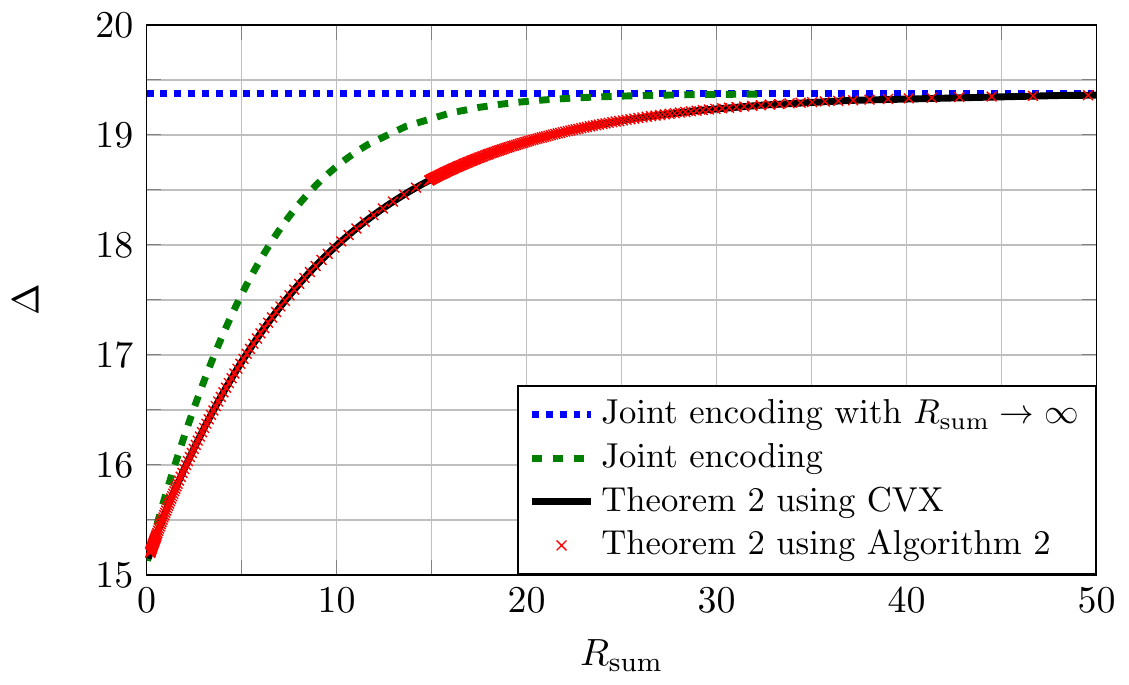}}
\label{fig-2D-Gauss}
}
\caption{Rate-information region of the vector Gaussian CEO network of Example~\ref{vector-Gaussian-ceo-example}. Numerical values are $n_x =3$ and $n_0 = n_1 = n_2 =4$. (a): set of $(R_1,R_2,\Delta)$ triples such $(R_1,R_2,h(\dv X) - \Delta) \in \mc{RD}_{\mathrm{VG}\text{-}\mathrm{CEO}}^1 \cup \mc{RD}_{\mathrm{VG}\text{-}\mathrm{CEO}}^2$, computed using Algorithm~\ref{algo-CEO-Gauss}. (b): set of $(R_\mathrm{sum}, \Delta)$ pairs such $R_\mathrm{sum}=R_1+R_2$ for some $(R_1,R_2)$ for which $(R_1,R_2, h(\dv X)- \Delta) \in \mc{RD}_{\mathrm{VG}\text{-}\mathrm{CEO}}^1 \cup \mc{RD}_{\mathrm{VG}\text{-}\mathrm{CEO}}^2$.}
\label{fig-rate-distortion-vector-Gaussian-ceo-example} 
\end{center}
\end{figure*}

\vspace{-0.7em}
\begin{example}~\label{vector-Gaussian-ceo-example}
Consider an instance of the memoryless vector Gaussian CEO problem as described by~\eqref{mimo-gaussian-model} and~\eqref{mimo-gaussian-model-2} obtained by setting $K=2$, $n_x=3$ and $n_0 = n_1 = n_2 =4$.  We use Algorithm~\ref{algo-CEO-Gauss} to numerically approximate the set of $(R_1,R_2,\Delta)$ triples such $(R_1,R_2,h(\dv X) - \Delta)$ is in the union of the achievable regions $\mc{RD}_{\mathrm{VG}\text{-}\mathrm{CEO}}^1$ and $\mc{RD}_{\mathrm{VG}\text{-}\mathrm{CEO}}^2$. The result is depicted in Figure~\ref{fig-3D-Gauss}. The figure also shows the set of $(R_1,R_2,\Delta)$ triples such $(R_1,R_2,h(\dv X) - \Delta)$ lies in the region given by Theorem~\ref{theorem-Gauss-RD-CEO} evaluated for the example at hand. Figure~\ref{fig-2D-Gauss} shows the set of $(R_\mathrm{sum}, \Delta)$ pairs such $R_\mathrm{sum}:=R_1+R_2$ for some $(R_1,R_2)$ for which $(R_1,R_2, h(\dv X)- \Delta)$ is in the union of  $\mc{RD}_{\mathrm{VG}\text{-}\mathrm{CEO}}^1$ and $\mc{RD}_{\mathrm{VG}\text{-}\mathrm{CEO}}^2$. The region is computed using two different approaches: i) using Algorithm~\ref{algo-CEO-Gauss} and ii) by directly evaluating the region obtained from Theorem~\ref{theorem-Gauss-RD-CEO} using the CVX optimization tool to find the maximizing covariances matrices $(\dv\Omega_1, \dv\Omega_2)$ (note that this problem is convex and so CVX finds the optimal solution). It is worth-noting that Algorithm~\ref{algo-CEO-Gauss} converges to the optimal solution for the studied vector Gaussian CEO example, as is visible from the figure. For comparisons reasons, the figure also shows the performance of \textit{centralized} or \textit{joint} encoding, i.e., the case both agents observe both $\dv Y_1$ and $\dv Y_2$, 
\begin{equation}~\label{information-rate-region-centralized-encoding}
\Delta(R_{\mathrm{sum}}) = \max_{P_{U|\dv Y_1,\dv Y_2} \: : \: I(U;\dv Y_1,\dv Y_2| \dv Y_0) \leq R_\mathrm{sum}} I(U,\dv Y_0; \dv X) \;.
\end{equation}  
Finally, we note that the information/sum-rate function~\eqref{information-rate-region-centralized-encoding} can be seen an extension of Chechik \textit{et al.} Gaussian Information Bottleneck~\cite{CGTW05} to the case of side information $\dv Y_0$ at the decoder. Figure~\ref{fig-2D-Gauss} shows the loss in terms of information/sum-rate that is incurred by restricting the encoders to operate separately, i.e., distributed Information Bottleneck with side information at decoder. \qedblack
\end{example}

\appendices

\section{Proof of Theorem~\ref{theorem-continuous-RD1-CEO}}\label{proof-continous-RD1-CEO}

 For convenience, consider first the DM version of the CEO problem with decoder side information under logarithmic loss of Figure~\ref{fig-system-model-CEO}. It is assumed that for all $\mc S \subseteq \mc K := \{1,\ldots,K\}$, 
\begin{equation}~\label{eq:MKChain_pmf}
\dv Y_{\mc S} \mkv  (\dv X, \dv Y_0)  \mkv \dv Y_{{\mc S}^c} \;,
\end{equation}
forms a Markov chain in that order. The definitions for this model are similar to Definition~\ref{definition-encoder} and Definition~\ref{definition-rate-distortion-region} and are omitted for brevity. The rate-distortion region of this problem can be obtained readily by applying~\cite[Theorem 10]{CW14}, which provides the rate-distortion region of the model without side information at decoder, to the modified setting in which the remote source is $\tilde{\dv X}=(\dv X,\dv Y_0)$, another agent (agent $K+1$) observes $\dv Y_{K+1}=\dv Y_0$ and communicates at large rate $R_{K+1}=\infty$ with the CEO, which wishes to estimates $\tilde{\dv X}$ to within average logarithmic distortion $D$ and has no own side information stream\footnote{Note that for the modified CEO setting the agents' observations are conditionally independent given the remote source $\tilde{\dv X}$.}. More specifically, it is given by the union of the set of all non-negative tuples $(R_1,\ldots, R_K,D)$ that satisfy, for all subsets $\mc S \subseteq \mc K$,
\begin{equation}~\label{rate-distortion-region-DM-CEO-model}
D + \sum_{k \in \mc S} R_k \geq \sum_{k \in \mc S} I(\dv Y_k;U_k|\dv X,\dv Y_0,Q) + H(\dv X | U_{\mc S^c},\dv Y_0,Q) \;,
\end{equation}
for some joint measure of the form $P_{\dv Y_0,\dv Y_{\mc K},\dv X}(\dv y_0,\dv y_{\mc K},\dv x)P_Q(q) \prod_{k=1}^{K} P_{U_k|\dv Y_k,Q}(u_k|\dv y_k,q)$.

\noindent Also, let us define for this model the rate-information region $\mc{RI}_\mathrm{CEO}^\star$ as the closure of all rate-information tuples $(R_1,\ldots,R_K,\Delta)$ for which there exist a blocklength $n$, encoding functions $\{\phi^{(n)}_k\}^K_{k=1}$ and a decoding function $\psi^{(n)}$ such that  
\begin{align*}
R_k &\geq \frac{1}{n}\log M^{(n)}_k \;, \quad\text{for}\:\: k=1,\ldots,K \;, \\
\Delta &\leq \frac{1}{n} I(\dv X^n;\psi^{(n)}(\phi_1^{(n)}(\dv Y_1^n),\ldots,\phi_K^{(n)}(\dv Y_K^n),\dv Y_0^n)) \;.
\end{align*}  
It is easy to see that a characterization of $\mc{RI}_\mathrm{CEO}^\star$ can be obtained by using~\eqref{rate-distortion-region-DM-CEO-model} and substituting distortion levels $D$ therein with $(\Delta: = H(\dv X) - D)$. 
\begin{proposition}~\label{proposition-DM-RI-CEO}
The rate-information region $\mc{RI}_\mathrm{CEO}^\star$ of the vector DM CEO problem under logarithmic loss is given by the set of all non-negative tuples $(R_1,\ldots, R_K,\Delta)$ that satisfy, for all subsets $\mc S \subseteq \mc K$,
\begin{equation*}
\sum_{k \in \mc S} R_k \geq \sum_{k \in \mc S} I(\dv Y_k;U_k|\dv X,\dv Y_0,Q) - I(\dv X;U_{\mc S^c},\dv Y_0,Q) + \Delta \;,
\end{equation*}
for some joint measure of the form $P_{\dv Y_0,\dv Y_{\mc K},\dv X}(\dv y_0,\dv y_{\mc K},\dv x)P_Q(q) \prod_{k=1}^{K} P_{U_k|\dv Y_k,Q}(u_k|\dv y_k,q)$. \qedblack
\end{proposition}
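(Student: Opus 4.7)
The plan is to derive the rate-information region by exhibiting a bijection between achievable rate-distortion tuples and rate-information tuples via the substitution $\Delta = H(\dv X) - D$, and then mechanically rewrite the inequality~\eqref{rate-distortion-region-DM-CEO-model} under this substitution. The key enabling observation is the well-known property of logarithmic loss: given encoded messages $M_k = \phi_k^{(n)}(\dv Y_k^n)$, $k \in \mc K$, the soft reconstruction $\psi^{(n)}$ that minimises the expected log-loss distortion is the posterior $P_{\dv X^n|M_1,\ldots,M_K,\dv Y_0^n}$, for which
\begin{equation*}
\mathbb{E}\big[d^{(n)}(\dv X^n,\hat{\dv X}^n)\big] = \tfrac{1}{n} H\big(\dv X^n \,\big|\, M_1,\ldots,M_K,\dv Y_0^n\big).
\end{equation*}
Thus without loss of optimality one can restrict attention to posterior reconstructions.

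In the first step I would use this property to formally establish the set identity $\mathcal{RI}_\mathrm{CEO}^\star = \{(R_1,\ldots,R_K, H(\dv X) - D) : (R_1,\ldots,R_K,D) \in \mathcal{RD}_\mathrm{CEO}^\star\}$. Indeed, for any collection of encoders $\{\phi_k^{(n)}\}$, the minimum achievable log-loss distortion equals $\tfrac{1}{n} H(\dv X^n|M_1,\ldots,M_K,\dv Y_0^n)$ whereas the maximum achievable mutual information is $\tfrac{1}{n} I(\dv X^n;M_1,\ldots,M_K,\dv Y_0^n) = H(\dv X) - \tfrac{1}{n} H(\dv X^n|M_1,\ldots,M_K,\dv Y_0^n)$. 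Hence the same encoding scheme achieves $(R_1,\ldots,R_K,D)$ if and only if it achieves $(R_1,\ldots,R_K, H(\dv X) - D)$ with the inequality reversed for the fidelity axis.

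In the second step I would simply plug $D = H(\dv X) - \Delta$ into~\eqref{rate-distortion-region-DM-CEO-model} and use the identity
\begin{equation*}
H(\dv X | U_{\mc S^c},\dv Y_0,Q) = H(\dv X) - I(\dv X; U_{\mc S^c},\dv Y_0,Q)
\end{equation*}
to obtain, after cancellation of $H(\dv X)$ on both sides,
\begin{equation*}
\sum_{k \in \mc S} R_k \geq \sum_{k \in \mc S} I(\dv Y_k;U_k|\dv X,\dv Y_0,Q) - I(\dv X; U_{\mc S^c},\dv Y_0,Q) + \Delta,
\end{equation*}
which is the claimed characterization. The admissible joint distributions of $(\dv X,\dv Y_0,\dv Y_{\mc K},U_{\mc K},Q)$ are unchanged under the substitution because neither the alphabets nor the encoding/decoding structure has been modified, only the fidelity axis relabelled.

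I do not foresee any real obstacle: the main content is packaged in~\cite[Theorem 10]{CW14} and the auxiliary reduction to the side-information setting described in the paragraph preceding~\eqref{rate-distortion-region-DM-CEO-model}; Proposition~\ref{proposition-DM-RI-CEO} itself is essentially a cosmetic restatement exploiting that under log-loss, distortion and information are complementary via $H(\dv X)$. The only minor point worth spelling out cleanly is the optimality of the posterior reconstruction, which justifies treating $D$ and $\Delta$ as affinely related operationally, not just through the single-letter bound.
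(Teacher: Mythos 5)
Your proposal is correct and follows essentially the same route as the paper: the paper justifies this proposition in one sentence by invoking the substitution $\Delta := H(\dv X) - D$ in~\eqref{rate-distortion-region-DM-CEO-model}, which is exactly your second step, and your first step merely spells out the operational fact (optimality of the posterior reconstruction under log-loss, so that minimal distortion and maximal information are complementary through $H(\dv X)$) that the paper leaves implicit as ``easy to see''. No gaps.
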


\noindent The region $\mc{RI}_\mathrm{CEO}^\star$ involves mutual information terms only (not entropies); and, so, using a standard discretization argument, it can be easily shown that a characterization of this region in the case of continuous alphabets is also given by Proposition~\ref{proposition-DM-RI-CEO}. It is well-known that the rate region of the DM CEO problem under logarithmic loss can also be used  in the case of continuous alphabets (e.g., Gaussian sources~\cite{CJW14, C12}).

\noindent Let us now return to the vector Gaussian CEO problem under logarithmic loss that we study in this section. First, we state the following lemma, whose proof is easy and is omitted for brevity.

\begin{lemma}~\label{lemma-relation-RD-RI}
$(R_1,\ldots,R_K,D) \in \mc{RD}_{\mathrm{VG}\text{-}\mathrm{CEO}}^\star$ if and only if $(R_1,\ldots,R_K,h(\dv X)-D) \in \mc{RI}_\mathrm{CEO}^\star$. \qedblack
\end{lemma}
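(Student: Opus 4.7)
The plan is to leverage the standard duality between logarithmic loss distortion and mutual information. The key observation is that for any random variable $J := (\phi_1^{(n)}(\dv Y_1^n), \ldots, \phi_K^{(n)}(\dv Y_K^n), \dv Y_0^n)$, i.e., the information available at the decoder, the minimum average log-loss distortion over all soft reconstructions of $\dv X^n$ from $J$ equals $\frac{1}{n} h(\dv X^n | J)$, attained by the conditional density $\psi^{(n)}(J) = P_{\dv X^n | J}(\cdot | J)$. This will follow from Gibbs' inequality: for any probability density $q$ on $\mathbb{C}^{n \times n_x}$,
\[
E[-\log q(\dv X^n)] \;\geq\; h(\dv X^n | J),
\]
with equality when $q$ is chosen to be $P_{\dv X^n | J}$. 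Because $\dv X^n$ consists of $n$ i.i.d. copies of $\dv X$, one has $h(\dv X^n | J) = n h(\dv X) - I(\dv X^n; J)$, so the minimum per-letter log-loss distortion is exactly $h(\dv X) - \frac{1}{n} I(\dv X^n; J)$.

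With this identity in hand, I would translate each direction as follows. For the forward direction, given a rate-distortion code $(\{\phi_k^{(n)}\}, \psi^{(n)})$ achieving $(R_1, \ldots, R_K, D)$, the lower bound above on the distortion yields $\frac{1}{n} I(\dv X^n; J) \geq h(\dv X) - D$. Taking the identity map as the rate-information decoder, $\psi^{(n)}(J) = J$, then attains this mutual information and proves $(R_1, \ldots, R_K, h(\dv X) - D) \in \mc{RI}_\mathrm{CEO}^\star$. For the reverse direction, a rate-information code achieving $(R_1, \ldots, R_K, h(\dv X) - D)$ satisfies $\frac{1}{n} I(\dv X^n; \psi^{(n)}(J)) \geq h(\dv X) - D$, which by the data-processing inequality implies $\frac{1}{n} I(\dv X^n; J) \geq h(\dv X) - D$ for the full message $J$. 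Reusing the same encoders and replacing the rate-information decoder by the soft reconstruction $P_{\dv X^n | J}$ then produces a rate-distortion code whose distortion equals $h(\dv X) - \frac{1}{n} I(\dv X^n; J) \leq D$, as required.

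I do not expect a serious obstacle: the argument is essentially a reformulation of the fact that log-loss and mutual information are Legendre-dual. The only technical point worth verifying is that the conditional density $P_{\dv X^n | J}$ is well-defined and lies in the soft-reconstruction space $\mc P(\mathbb{C}^{n \times n_x})$, but this is immediate because $(\dv X, \dv Y_0, \dv Y_1, \ldots, \dv Y_K)$ is jointly Gaussian and $J$ is a (measurable) function of $(\dv Y_0^n, \dv Y_{\mc K}^n)$, so every relevant conditional distribution admits a density.
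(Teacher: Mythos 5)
Your proof is correct and is precisely the standard log-loss/conditional-entropy duality argument that the paper invokes when it declares this lemma ``easy'' and omits the proof: the optimal soft decoder is the posterior $P_{\dv X^n|J}$, giving minimum per-letter distortion $\tfrac{1}{n}h(\dv X^n|J)=h(\dv X)-\tfrac{1}{n}I(\dv X^n;J)$, and the two directions follow from Gibbs' inequality and data processing exactly as you describe. The only cosmetic mismatch is that $\mc{RI}_\mathrm{CEO}^\star$ is defined as a closure while $\mc{RD}_{\mathrm{VG}\text{-}\mathrm{CEO}}^\star$ is defined as a union, but this does not affect the argument since both regions are closed under the non-strict inequality constraints in their definitions.
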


\noindent Summarizing, using Proposition~\ref{proposition-DM-RI-CEO} and Lemma~\ref{lemma-relation-RD-RI} it follows that $\mc{RD}_{\mathrm{VG}\text{-}\mathrm{CEO}}^\star = \mc{RD}_\mathrm{CEO}^\mathrm{I}$. To complete the proof of Theorem~\ref{theorem-continuous-RD1-CEO}, it remains to show that $\mc{RD}_\mathrm{CEO}^\mathrm{I} = \mc{RD}_\mathrm{CEO}^\mathrm{II}$; and this follows by reasoning along the submodularity arguments of the proof of~\cite[Theorem 10]{CW14}.

\section{Proof of Converse of Theorem~\ref{theorem-Gauss-RD-CEO}}\label{proof-converse-Gauss-RD-CEO}

The proof of Theorem~\ref{theorem-Gauss-RD-CEO} relies on deriving an outer bound on the region $\mc{RD}_\mathrm{CEO}^\mathrm{I}$ given by Theorem~\ref{theorem-continuous-RD1-CEO}. In doing so, we use the technique of~\cite[Theorem 8]{EU14} which relies on the de Bruijn identity and the properties of Fisher information; and extend the argument to account for the time-sharing variable $Q$ and side information $\dv Y_0$.

We first state the following lemma.

\begin{lemma}{\cite{DCT91,EU14}}~\label{lemma-fisher}
Let $(\mathbf{X,Y})$  be a pair of random vectors with pmf $p(\mathbf{x},\mathbf{y})$. We have
\begin{equation*}
\log|(\pi e) \dv J^{-1}(\dv X|\dv Y)| \leq h(\dv X|\dv Y) \leq \log|(\pi e) \mathrm{mmse}(\dv X|\dv Y)| \;,
\end{equation*}
where the conditional Fisher information matrix is defined as
\begin{equation*}
\dv J(\dv X|\dv Y) := \E[\nabla\log p(\dv X|\dv Y) \nabla\log p(\dv X|\dv Y)^\dagger] \;,
\end{equation*}
and the minimum mean squared error (MMSE) matrix is 
\begin{equation*}
\mathrm{mmse}(\dv X|\dv Y) := \E[(\dv X-\E[\dv X|\dv Y])(\dv X-\E[\dv X|\dv Y])^\dagger] \;. \bqed 
\end{equation*}
\end{lemma}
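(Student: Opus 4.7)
The plan is to establish the two halves of the sandwich by independent arguments: the upper bound follows from the maximum-entropy principle, while the lower bound is a Fisher-information inequality proved via a de Bruijn perturbation argument.

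For the upper bound $h(\dv X|\dv Y)\leq \log|(\pi e)\,\mathrm{mmse}(\dv X|\dv Y)|$, I would argue pointwise in $\dv y$. Fix a realization $\dv Y=\dv y$ and let $\Sigma_{\dv x|\dv y}$ denote the covariance of $\dv X$ under the conditional law $p(\dv x|\dv y)$. Since the circularly-symmetric complex Gaussian uniquely maximizes differential entropy among all distributions with prescribed covariance, $h(\dv X|\dv Y=\dv y)\leq \log|(\pi e)\,\Sigma_{\dv x|\dv y}|$. Taking expectation over $\dv Y$, invoking concavity of $\dv A\mapsto \log|\dv A|$ on the Hermitian positive-definite cone (Jensen), and using the identity $\mathbb{E}_{\dv Y}[\Sigma_{\dv x|\dv y}]=\mathrm{mmse}(\dv X|\dv Y)$ yields the claim.

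For the lower bound $h(\dv X|\dv Y)\geq \log|(\pi e)\,\dv J^{-1}(\dv X|\dv Y)|$, I would introduce an independent standard complex Gaussian $\dv Z\sim \mathcal{CN}(\dv 0,\dv I)$ and set $\dv X_t:=\dv X+\sqrt{t}\,\dv Z$ for $t\geq 0$. Conditional de Bruijn gives
\begin{equation*}
\frac{d}{dt}\,h(\dv X_t|\dv Y)=\mathrm{tr}\bigl(\dv J(\dv X_t|\dv Y)\bigr).
\end{equation*}
Stam's matrix Fisher information inequality, applied conditionally on $\dv Y$ with an additive Gaussian perturbation of covariance $t\dv I$, furnishes
\begin{equation*}
\dv J^{-1}(\dv X_t|\dv Y)\;\succeq\; \dv J^{-1}(\dv X|\dv Y)+t\dv I,
\end{equation*}
hence $\mathrm{tr}\bigl(\dv J(\dv X_t|\dv Y)\bigr)\leq \mathrm{tr}\bigl((\dv J^{-1}(\dv X|\dv Y)+t\dv I)^{-1}\bigr)=\frac{d}{dt}\log|\dv J^{-1}(\dv X|\dv Y)+t\dv I|$. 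Integrating from $t=0$ to $t=T$ produces
\begin{equation*}
h(\dv X_T|\dv Y)-h(\dv X|\dv Y)\;\leq\;\log|\dv J^{-1}(\dv X|\dv Y)+T\dv I|-\log|\dv J^{-1}(\dv X|\dv Y)|.
\end{equation*}
Letting $T\to\infty$ and using that $\dv X_T/\sqrt{T}$ converges in distribution to $\dv Z$ (so that $h(\dv X_T|\dv Y)=n_x\log(\pi e T)+o(1)$ while $\log|\dv J^{-1}(\dv X|\dv Y)+T\dv I|=n_x\log T+o(1)$) isolates the constant $n_x\log(\pi e)$ and rearranges into $\log|(\pi e)\,\dv J^{-1}(\dv X|\dv Y)|\leq h(\dv X|\dv Y)$.

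The main obstacle is rigorously establishing the three ingredients of the lower bound in the complex, conditional, vector setting: the de Bruijn identity, Stam's Fisher-information inequality, and the asymptotic normality that controls the divergent boundary terms at $T=\infty$. All three appear in the cited references (notably \cite{DCT91} and the conditional extensions in \cite{EU14}), so in the write-up I would invoke those rather than rederive them, after verifying the mild regularity conditions (existence of densities, finite conditional Fisher information, differentiability of $t\mapsto h(\dv X_t|\dv Y)$) that make the interchange of differentiation and expectation legitimate.
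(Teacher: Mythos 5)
The paper does not actually prove Lemma~\ref{lemma-fisher}: it imports it from \cite{DCT91,EU14}, and your argument is essentially the standard proof found in those references --- the conditional maximum-entropy bound plus Jensen's inequality (concavity of $\log\det$) giving the MMSE upper bound, and the conditional de Bruijn identity combined with the matrix Stam Fisher-information inequality, integrated along the Gaussian perturbation, giving the lower bound --- so your proposal is correct and follows the intended route. The one step worth tightening is the $T\to\infty$ limit: convergence in distribution of $\dv X_T/\sqrt{T}$ to $\dv Z$ does not by itself yield convergence of differential entropies, but the required estimate $h(\dv X_T|\dv Y)=n_x\log(\pi e T)+o(1)$ follows from the sandwich $h(\sqrt{T}\dv Z)\leq h(\dv X_T|\dv Y)\leq \log\big|(\pi e)\big(\mathrm{mmse}(\dv X|\dv Y)+T\dv I\big)\big|$, whose upper half is exactly the bound you already established.
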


Now, we derive an outer bound on~\eqref{equation-continous-RD1-CEO} as follows. For each $ q\in \mc{Q}$ and fixed pmf $\prod_{k=1}^K p(u_k|\dv y_k,q)$, choose $\{\dv\Omega_{k,q}\}_{k=1}^K$ satisfying $\dv 0 \preceq \dv\Omega_{k,q} \preceq \dv\Sigma_k^{-1}$ such that 
\begin{equation}~\label{equation-outer-1-mmse}
\mathrm{mmse}(\dv Y_k|\dv X, U_{k,q}, \dv Y_0, q) = \dv\Sigma_k - \dv\Sigma_k \dv\Omega_{k,q} \dv\Sigma_k \;.
\end{equation}
Such $\dv\Omega_{k,q}$ always exists since, for all $q\in \mc Q$, $k\in \mc K$, we have
\begin{equation*}~\label{equation-outer-2-covariance-noise}
\dv 0 \preceq \mathrm{mmse}(\dv Y_k|\dv X,U_{k,q},\dv Y_0,q) \preceq \dv\Sigma_{\dv y_k|(\dv x, \dv y_0)} = \dv\Sigma_{\dv n_k| \dv n_0} = \dv\Sigma_k \;.
\end{equation*} 

\noindent Then, for $k\in \mc K$ and $q\in \mc Q$, we have
\begin{align}
I(\dv Y_k; U_k|\dv X, \dv Y_0, Q=q) &= \log|(\pi e)\dv\Sigma_k| - h(\dv Y_k|\dv X, U_{k,q}, \dv Y_0, Q=q) \nonumber\\
&\stackrel{(a)}{\geq} \log|\dv\Sigma_k| - \log|\mathrm{mmse}(\dv Y_k|\dv X, U_{k,q}, \dv Y_0, Q=q)| \nonumber\\
&\stackrel{(b)}{=} -\log|\dv I- \dv\Omega_{k,q}\dv\Sigma_k| \label{equation-Gausss-CEO-first-inequality-q}  \;,
\end{align}
where $(a)$ is due to Lemma~\ref{lemma-fisher}; and $(b)$ is due to~\eqref{equation-outer-1-mmse}.

\noindent For convenience, the matrix $\dv\Lambda_{\bar{\mc S},q}$ is defined as follows 
\begin{align}~\label{equation-definition-Tq}
\dv\Lambda_{\bar{\mc S},q} :=
\begin{bmatrix}
\dv 0 & \dv 0 \\
\dv 0 & \mathrm{diag}(\{ \dv\Sigma_k - \dv\Sigma_k \dv\Omega_{k,q} \dv\Sigma_k \}_{k\in\mc S^c})
\end{bmatrix}. 
\end{align}
Then, for $q\in \mc Q$ and $\mc {S}\subseteq \mc K$, we have 
\begin{align}
h(\dv X|U_{S^c,q}, \dv Y_0, Q=q) 
&\stackrel{(a)}{\geq} \log|(\pi e) \dv J^{-1}(\dv X| \dv U_{S^c,q}, \dv Y_0, q)| \nonumber\\
&\stackrel{(b)}{=} \log\left| (\pi e) \left( \dv\Sigma_{\dv x}^{-1} + \dv H_{\bar{\mc S}}^\dagger \dv\Sigma_{\dv n_{\bar{\mc S}}}^{-1} \big( \dv I - \dv\Lambda_{\bar{\mc S},q} \dv\Sigma_{\dv n_{\bar{\mc S}}}^{-1} \big) \dv H_{\bar{\mc S}} \right)^{-1} \right| \;, \label{equation-Gausss-CEO-second-inequality-q}
\end{align}
where $(a)$ follows from Lemma~\ref{lemma-fisher}; and for $(b)$, we use the connection of the MMSE and the Fisher information to show the following equality
\begin{equation}~\label{equation-Fisher-equality}
\dv J(\dv X|U_{S^c,q}, \dv Y_0, q) = \dv\Sigma_{\dv x}^{-1} + \dv H_{\bar{\mc S}}^\dagger \dv\Sigma_{\dv n_{\bar{\mc S}}}^{-1} \big( \dv I - \dv\Lambda_{\bar{\mc S},q} \dv\Sigma_{\dv n_{\bar{\mc S}}}^{-1} \big) \dv H_{\bar{\mc S}} \;.
\end{equation}
In order to proof~\eqref{equation-Fisher-equality}, we use de Brujin identity to relate the Fisher information with the MMSE as given in the following lemma. 

\begin{lemma}{\cite{EU14,PV06}}~\label{lemma-Brujin}
Let $(\dv V_1,\dv V_2)$ be a random vector with finite second moments and $\dv Z\sim\mc{CN}(\dv 0, \dv\Sigma_{\dv z})$ independent of $(\dv V_1,\dv V_2)$. Then
\begin{equation*}
\mathrm{mmse}(\dv V_2|\dv V_1,\dv V_2+\dv Z) = \dv\Sigma_{\dv z} - \dv\Sigma_{\dv z} \dv J(\dv V_2+\dv Z|\dv V_1) \dv\Sigma_{\dv z} \;. \bqed 
\end{equation*}
\end{lemma}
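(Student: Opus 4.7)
My plan is to prove the identity by identifying the conditional mean of $\dv V_2$ given $(\dv V_1,\dv Y)$ with $\dv Y:=\dv V_2+\dv Z$ via a Tweedie-type formula, and then expanding the resulting error covariance. First, conditionally on $\dv V_1$, the independence of $\dv Z$ from $(\dv V_1,\dv V_2)$ gives the factorization $p(\dv v_2,\dv y\mid\dv v_1)=p(\dv v_2\mid\dv v_1)\,p_{\dv Z}(\dv y-\dv v_2)$. Using the explicit Gaussian form $\nabla_{\dv y}p_{\dv Z}(\dv y-\dv v_2)=-\dv\Sigma_{\dv z}^{-1}(\dv y-\dv v_2)\,p_{\dv Z}(\dv y-\dv v_2)$ and differentiating under the integral sign in $p(\dv y\mid\dv v_1)=\int p(\dv v_2\mid\dv v_1)\,p_{\dv Z}(\dv y-\dv v_2)\,d\dv v_2$, I would derive the complex-Gaussian analogue of Tweedie's formula,
\[
\E[\dv V_2\mid\dv V_1,\dv Y]\;=\;\dv Y+\dv\Sigma_{\dv z}\,\nabla_{\dv y}\log p(\dv Y\mid\dv V_1).
\]

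Next, I would write the estimation error as $\dv V_2-\E[\dv V_2\mid\dv V_1,\dv Y]=-\dv Z-\dv\Sigma_{\dv z}\,\nabla_{\dv y}\log p(\dv Y\mid\dv V_1)$ and expand the outer product to obtain four pieces: $\E[\dv Z\dv Z^{\dagger}]=\dv\Sigma_{\dv z}$, two cross terms of the form $\dv\Sigma_{\dv z}\,\E[\nabla\log p(\dv Y\mid\dv V_1)\,\dv Z^{\dagger}]$ and its conjugate transpose, and the pure Fisher piece $\dv\Sigma_{\dv z}\,\dv J(\dv Y\mid\dv V_1)\,\dv\Sigma_{\dv z}$. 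The cross terms collapse using Tweedie again: since $\dv Z=\dv Y-\dv V_2$ and the formula above gives $\E[\dv Z\mid\dv V_1,\dv Y]=-\dv\Sigma_{\dv z}\,\nabla_{\dv y}\log p(\dv Y\mid\dv V_1)$, the tower property yields
\[
\E[\dv Z\,(\nabla\log p(\dv Y\mid\dv V_1))^{\dagger}]\;=\;-\dv\Sigma_{\dv z}\,\dv J(\dv Y\mid\dv V_1).
\]
Substituting, two of the $\dv\Sigma_{\dv z}\dv J(\dv Y\mid\dv V_1)\dv\Sigma_{\dv z}$ contributions enter with a minus sign and one with a plus sign, and the net identity becomes $\mathrm{mmse}(\dv V_2\mid\dv V_1,\dv Y)=\dv\Sigma_{\dv z}-\dv\Sigma_{\dv z}\,\dv J(\dv Y\mid\dv V_1)\,\dv\Sigma_{\dv z}$, as desired.

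The main technical care will be in justifying the Tweedie-type formula rigorously in the complex-vector setting: specifically, interchanging differentiation and integration in the convolution $p(\dv y\mid\dv v_1)=\int p(\dv v_2\mid\dv v_1)\,p_{\dv Z}(\dv y-\dv v_2)\,d\dv v_2$, and handling Wirtinger (complex) derivatives consistently with the conjugate-transpose convention used in the definition of $\dv J(\cdot\mid\cdot)$ in Lemma~\ref{lemma-fisher}. Both are standard under the finite second-moment hypothesis on $(\dv V_1,\dv V_2)$, because the Gaussian density $p_{\dv Z}$ has rapidly decaying derivatives, so dominated convergence applies and the complex identity reduces to the classical real de Bruijn argument used in~\cite{EU14,PV06} applied blockwise to real and imaginary parts.
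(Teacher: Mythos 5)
Your argument is correct: the Tweedie-type identity $\E[\dv V_2\mid\dv V_1,\dv Y]=\dv Y+\dv\Sigma_{\dv z}\nabla_{\dv y}\log p(\dv Y\mid\dv V_1)$, the expansion of the error outer product, and the collapse of the cross terms via $\E[\dv Z\mid\dv V_1,\dv Y]=-\dv\Sigma_{\dv z}\nabla_{\dv y}\log p(\dv Y\mid\dv V_1)$ all check out, including the sign bookkeeping that leaves $\dv\Sigma_{\dv z}-\dv\Sigma_{\dv z}\dv J(\dv Y\mid\dv V_1)\dv\Sigma_{\dv z}$. The paper itself gives no proof of this lemma (it defers to~\cite{EU14,PV06}), and your derivation is essentially the same score-function argument used there, so nothing further is needed beyond the regularity and Wirtinger-calculus caveats you already flag.
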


From MMSE estimation of Gaussian random vectors, for $\mc {S}\subseteq \mc K$, we have  
\begin{equation}~\label{equation-outer-3}
\dv X = \E[\dv X|\dv Y_{\bar{\mc S}}] + \dv W_{\bar{\mc S}} = \dv G_{\bar{\mc S}} \dv Y_{\bar{\mc S}} + \dv W_{\bar{\mc S}} \;,
\end{equation}
where $\dv G_{\bar{\mc S}} := \dv\Sigma_{\dv w_{\bar{\mc S}}} \dv H_{\bar{\mc S}}^\dagger \dv\Sigma_{\dv n_{\bar{\mc S}}}^{-1}$, and $\dv W_{\bar{\mc S}} \sim \mathcal{CN}(\dv 0, \dv\Sigma_{\dv w_{\bar{\mc S}}})$ is a Gaussian vector that is independent of $\dv Y_{\bar{\mc S}}$ and  
\begin{equation}~\label{equation-covariance-w}
\dv\Sigma_{\dv w_{\bar{\mc S}}}^{-1} :=  \dv\Sigma_{\dv x}^{-1} + \dv H_{\bar{\mc S}}^\dagger \dv\Sigma_{\dv n_{\bar{\mc S}}}^{-1} \dv H_{\bar{\mc S}} \;.
\end{equation}

\noindent Now we show that the cross-terms of $\mathrm{mmse}\left(\dv Y_{\mc S^c}|\dv X,U_{\mc S^{c},q},\dv Y_0,q \right)$ are zero (similarly to~\cite[Appendix V]{EU14}). For $i\in\mc S^c$ and $j\neq i$, we have 
\begin{align}
\E&\big[(Y_i-\E[Y_i|\dv X,U_{\mc S^{c},q},\dv Y_0,q])(Y_j-\E[Y_j|\dv X,U_{\mc S^{c},q},\dv Y_0,q])^\dagger\big] \nonumber\\
&\stackrel{(a)}{=}  \E\bigg[ \E\big[ (Y_i-\E[Y_i|\dv X,U_{\mc S^{c},q},\dv Y_0,q])(Y_j-\E[Y_j|\dv X,U_{\mc S^{c},q},\dv Y_0,q])^\dagger|\dv X,\dv Y_0 \big] \bigg] \nonumber\\
&\stackrel{(b)}{=}  \E\bigg[ \E\big[ (Y_i-\E[Y_i|\dv X,U_{\mc S^{c},q},\dv Y_0,q])|\dv X,\dv Y_0 \big] \E\big[(Y_j-\E[Y_j|\dv X,U_{\mc S^{c},q},\dv Y_0,q])^\dagger|\dv X,\dv Y_0 \big] \bigg] 
= \dv 0 \;, \label{equation-cross-terms}
\end{align}
where $(a)$ is due to the law of total expectation; $(b)$ is due to the Markov chain $\dv Y_k \mkv (\dv X,\dv Y_0) \mkv \dv Y_{\mc K \setminus k}$. 

\noindent Then, for $k\in \mc K$ and $q\in \mc Q$, we have
\begin{align}
\mathrm{mmse}\big(\dv G_{\bar{\mc S}} \dv Y_{\bar{\mc S}} \big|\dv X,U_{\mc S^c,q},\dv Y_0,q \big) 
&= \dv G_{\bar{\mc S}}  \: \mathrm{mmse}\left(\dv Y_{\bar{\mc S}}|\dv X,U_{\mc S^c,q},\dv Y_0,q \right) \dv G_{\bar{\mc S}}^\dagger \nonumber\\
&\stackrel{(a)}{=} 
\dv G_{\bar{\mc S}}
\begin{bmatrix}
\dv 0 & \dv 0 \\
\dv 0 & \mathrm{diag}(\{\mathrm{mmse}(\dv Y_k|\dv X,U_{\mc S^c,q},\dv Y_0,q)\}_{k\in\mc S^c})
\end{bmatrix}
\dv G_{\bar{\mc S}}^\dagger \nonumber\\ 
&\stackrel{(b)}{=} \dv G_{\bar{\mc S}} \dv\Lambda_{\bar{\mc S},q} \dv G_{\bar{\mc S}}^\dagger \;, \label{equation-outer-4}
\end{align}
where $(a)$ follows since the cross-terms are zero as shown in~\eqref{equation-cross-terms}; and $(b)$ follows due to~\eqref{equation-outer-1-mmse} and the definition of $\dv\Lambda_{\bar{\mc S},q}$ given in~\eqref{equation-definition-Tq}.

\noindent Finally, we obtain the equality~\eqref{equation-Fisher-equality} by applying Lemma~\ref{lemma-Brujin} and noting \eqref{equation-outer-3} as follows   
\begin{align*}    
\dv J(\dv X|U_{S^c,q},\dv Y_0,q) 
&\stackrel{(a)}{=} \dv\Sigma_{\dv w_{\bar{\mc S}}}^{-1} - \dv\Sigma_{\dv w_{\bar{\mc S}}}^{-1} \: \mathrm{mmse} \big( \dv G_{\bar{\mc S}} \dv Y_{\bar{\mc S}} \big| \dv X,U_{\mc S^{c},q},\dv Y_0,q \big) \dv\Sigma_{\dv w_{\bar{\mc S}}}^{-1} \\[0.1em]
&\stackrel{(b)}{=} \dv\Sigma_{\dv w_{\bar{\mc S}}}^{-1} - \dv\Sigma_{\dv w_{\bar{\mc S}}}^{-1} \dv G_{\bar{\mc S}} \dv\Lambda_{\bar{\mc S},q} \dv G_{\bar{\mc S}}^\dagger \dv\Sigma_{\dv w_{\bar{\mc S}}}^{-1} \\[0.1em]
&\stackrel{(c)}{=} \dv\Sigma_{\dv x}^{-1} + \dv H_{\bar{\mc S}}^\dagger \dv\Sigma_{\dv n_{\bar{\mc S}}}^{-1} \dv H_{\bar{\mc S}} -  \dv H_{\bar{\mc S}}^\dagger \dv\Sigma_{\dv n_{\bar{\mc S}}}^{-1} \dv\Lambda_{\bar{\mc S},q} \dv\Sigma_{\dv n_{\bar{\mc S}}}^{-1} \dv H_{\bar{\mc S}}\\[0.1em]
&= \dv\Sigma_{\dv x}^{-1} + \dv H_{\bar{\mc S}}^\dagger \dv\Sigma_{\dv n_{\bar{\mc S}}}^{-1} \big( \dv I - \dv\Lambda_{\bar{\mc S},q} \dv\Sigma_{\dv n_{\bar{\mc S}}}^{-1}  \big) \dv H_{\bar{\mc S}} \;,
\end{align*}
where $(a)$ is due to Lemma~\ref{lemma-Brujin}; $(b)$ is due to~\eqref{equation-outer-4}; and $(c)$ follows due to the definitions of $\dv\Sigma_{\dv w_{\bar{\mc S}}}^{-1}$ and $\dv G_{\bar{\mc S}}$.

Next, we average~\eqref{equation-Gausss-CEO-first-inequality-q} and~\eqref{equation-Gausss-CEO-second-inequality-q} over the time-sharing $Q$ and letting $\dv\Omega_k := \sum_{q\in \mc Q}p(q) \dv\Omega_{k,q}$, we obtain the lower bound
\begin{align}
I(\dv Y_k;\dv U_k|\dv X,\dv Y_0,Q) &= \sum_{q \in \mc Q} p(q) I(\dv Y_k;\dv U_k|\dv X,\dv Y_0,Q=q)  \nonumber\\
&\stackrel{(a)}{\geq} - \sum_{q \in \mc Q} p(q) \log|\dv I- \dv\Omega_{k,q} \dv\Sigma_k| \nonumber\\
&\stackrel{(b)}{\geq} -\log |\dv I - \sum_{q \in \mc Q} p(q) \dv\Omega_{k,q} \dv\Sigma_k| \nonumber\\  
&= -\log |\dv I- \dv\Omega_k \dv\Sigma_k| \;, \label{equation-Gausss-CEO-first-inequality}
\end{align}
where $(a)$ follows from~\eqref{equation-Gausss-CEO-first-inequality-q}; and $(b)$ follows from the concavity of the log-det function and Jensen's Inequality. 

\noindent Besides,  we can derive the following lower bound
\begin{align}
h(\dv X|U_{S^c},\dv Y_0,Q)  &= \sum_{q \in \mc Q} p(q) h(\dv X|U_{S^c,q}, \dv Y_0, Q=q) \nonumber\\ 
&\stackrel{(a)}{\geq} \sum_{q \in \mc Q} p(q) \log\left| (\pi e) \left(\dv\Sigma_{\dv x}^{-1} + \dv H_{\bar{\mc S}}^\dagger \dv\Sigma_{\dv n_{\bar{\mc S}}}^{-1} \big( \dv I - \dv\Lambda_{\bar{\mc S},q} \dv\Sigma_{\dv n_{\bar{\mc S}}}^{-1}  \big) \dv H_{\bar{\mc S}} \right)^{-1} \right| \nonumber\\
&\stackrel{(b)}{\geq} \log\left| (\pi e) \left( \dv\Sigma_{\dv x}^{-1} + \dv H_{\bar{\mc S}}^\dagger \dv\Sigma_{\dv n_{\bar{\mc S}}}^{-1} \big( \dv I - \dv\Lambda_{\bar{\mc S}} \dv\Sigma_{\dv n_{\bar{\mc S}}}^{-1}  \big) \dv H_{\bar{\mc S}} \right)^{-1} \right| \;, \label{equation-Gausss-CEO-second-inequality}  
\end{align}
where $(a)$ is due to~\eqref{equation-Gausss-CEO-second-inequality-q}; and $(b)$ is due to the concavity of the log-det function and Jensen's inequality and the definition of $\dv\Lambda_{\bar{\mc S}}$ given in~\eqref{equation-definition-T}. 

Finally, the outer bound on $\mc{RD}_{\mathrm{VG}\text{-}\mathrm{CEO}}^\star$ is obtained by applying~\eqref{equation-Gausss-CEO-first-inequality}   and~\eqref{equation-Gausss-CEO-second-inequality} in~\eqref{equation-continous-RD1-CEO}, noting that $\dv\Omega_k = \sum_{q\in \mathcal{Q}}p(q) \dv\Omega_{k,q} \preceq\mathbf{\Sigma}_{k}^{-1}$ since $\mathbf{0} \preceq \dv\Omega_{k,q} \preceq\mathbf{\Sigma}_{k}^{-1}$, and taking the union over $\dv\Omega_k$ satisfying $\mathbf{0} \preceq \dv\Omega_k \preceq\mathbf{\Sigma}_{k}^{-1}$.

\section{Proof of Theorem~\ref{theorem-quadratic-RD-CEO}}\label{proof-quadratic-RD-CEO}

We first present the following lemma, which essentially states that Theorem~\ref{theorem-Gauss-RD-CEO} provides an outer bound on $\mc{RD}_{\mathrm{VG}\text{-}\mathrm{CEO}}^\mathrm{det}$.

\begin{lemma}~\label{lemma-connection}
If $(R_1,\ldots,R_K,D) \in \mc{RD}_{\mathrm{VG}\text{-}\mathrm{CEO}}^\mathrm{det}$, then $(R_1,\ldots,R_K, \log ({{\pi}e})^{n_x} D) \in \mc{RD}_\mathrm{CEO}^\mathrm{I}$.  
\end{lemma}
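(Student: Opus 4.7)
The plan is to show that any coding scheme achieving the determinant constraint can be reused, with a suitably chosen soft decoder, to achieve logarithmic loss distortion $\log(\pi e)^{n_x} D$; the claim then follows because $\mc{RD}_\mathrm{CEO}^\mathrm{I}=\mc{RD}_{\mathrm{VG}\text{-}\mathrm{CEO}}^\star$ by Theorem~\ref{theorem-continuous-RD1-CEO}. Let $\{\breve{\phi}^{(n)}_k\}_{k=1}^K$ be encoders that achieve the tuple $(R_1,\ldots,R_K,D)\in \mc{RD}_{\mathrm{VG}\text{-}\mathrm{CEO}}^\mathrm{det}$, and denote by $\hat{\dv X}_i=\E[\dv X_i\,|\,\breve{\phi}^{(n)}_1(\dv Y_1^n),\ldots,\breve{\phi}^{(n)}_K(\dv Y_K^n),\dv Y_0^n]$ the MMSE estimate and by $\dv M_i=\mathrm{mmse}(\dv X_i\,|\,\breve{\phi}^{(n)}_1(\dv Y_1^n),\ldots,\breve{\phi}^{(n)}_K(\dv Y_K^n),\dv Y_0^n)$ the associated error covariance, so that by assumption $\big|\tfrac{1}{n}\sum_{i=1}^n \dv M_i\big|\le D$.

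For the logarithmic loss problem I would reuse the same $\{\breve{\phi}^{(n)}_k\}_{k=1}^K$ and define the soft reconstruction $\psi^{(n)}$ so that its $i$th marginal is the circularly-symmetric complex Gaussian density with mean $\hat{\dv X}_i$ and covariance $\dv M_i$, i.e.
\begin{equation*}
\hat{x}_i(\dv x) \;=\; \frac{1}{\pi^{n_x}|\dv M_i|}\exp\!\bigl(-(\dv x-\hat{\dv X}_i)^{\dagger}\dv M_i^{-1}(\dv x-\hat{\dv X}_i)\bigr).
\end{equation*}
A direct computation, using $\E[(\dv X_i-\hat{\dv X}_i)^{\dagger}\dv M_i^{-1}(\dv X_i-\hat{\dv X}_i)]=\mathrm{tr}(\dv M_i^{-1}\dv M_i)=n_x$, yields
\begin{equation*}
\E\bigl[-\log \hat{x}_i(\dv X_i)\bigr] \;=\; \log(\pi^{n_x}|\dv M_i|)+n_x \;=\; \log\bigl((\pi e)^{n_x}|\dv M_i|\bigr).
\end{equation*}

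Averaging over $i$ and invoking the concavity of $\dv A\mapsto \log|\dv A|$ on the positive definite cone (Jensen's inequality) gives
\begin{equation*}
\E\bigl[d^{(n)}(\dv X^n,\psi^{(n)}(\cdot))\bigr] \;=\; \frac{1}{n}\sum_{i=1}^{n}\log\bigl((\pi e)^{n_x}|\dv M_i|\bigr) \;\le\; \log\!\Bigl((\pi e)^{n_x}\Bigl|\tfrac{1}{n}\sum_{i=1}^{n}\dv M_i\Bigr|\Bigr) \;\le\; \log\bigl((\pi e)^{n_x}D\bigr),
\end{equation*}
where the last inequality uses the hypothesis $|\dv D^{(n)}|\le D$. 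Hence $(R_1,\ldots,R_K,\log(\pi e)^{n_x}D)$ is achievable under logarithmic loss, which by Theorem~\ref{theorem-continuous-RD1-CEO} places it in $\mc{RD}_\mathrm{CEO}^\mathrm{I}$, proving the lemma.

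There is no real obstacle here; the proof is essentially a one-line construction of a Gaussian surrogate decoder together with the log-det Jensen inequality. The only thing to be slightly careful about is that the natural log convention used throughout the paper matches the form $\log(\pi e)^{n_x}$ arising from the complex Gaussian differential entropy, and that the covariance one plugs into the Gaussian pdf must be exactly $\dv M_i$ (the MMSE covariance) rather than the overall $\dv D^{(n)}$, in order to push the concavity step to the outside.
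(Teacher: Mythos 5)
Your proof is correct, but it follows a genuinely different route from the paper's. You give an \emph{operational reduction}: reuse the encoders of the determinant-constrained code, append an explicit Gaussian surrogate soft decoder $\mc{CN}(\hat{\dv X}_i,\dv M_i)$, compute its average log-loss exactly as $\log\bigl((\pi e)^{n_x}|\dv M_i|\bigr)$, push the log-det Jensen step outside, and then invoke Theorem~\ref{theorem-continuous-RD1-CEO} (i.e., $\mc{RD}_{\mathrm{VG}\text{-}\mathrm{CEO}}^\star=\mc{RD}_\mathrm{CEO}^\mathrm{I}$) to conclude. The paper instead verifies the defining single-letter inequalities of $\mc{RD}_\mathrm{CEO}^\mathrm{I}$ directly: it sets $\bar{\Delta}^{(n)}=\tfrac1n h(\dv X^n\,|\,\breve{\phi}^{(n)}_1(\dv Y_1^n),\ldots,\breve{\phi}^{(n)}_K(\dv Y_K^n),\dv Y_0^n)$, re-runs the Courtade--Weissman single-letterization to exhibit auxiliaries $(U_1,\ldots,U_K,Q)$ satisfying $\sum_{k\in\mc S}R_k+\bar{\Delta}^{(n)}\geq\sum_{k\in\mc S}I(\dv Y_k;U_k|\dv X,\dv Y_0,Q)+h(\dv X|U_{\mc S^c},\dv Y_0,Q)$, and then bounds $\bar{\Delta}^{(n)}\leq\log(\pi e)^{n_x}D$ via the maximal differential entropy lemma and the same log-det Jensen step. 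The two arguments share the same analytic core (your exact cross-entropy computation against the matched Gaussian is precisely the max-entropy bound $h(\dv X_i|\cdot)\leq\log(\pi e)^{n_x}|\dv M_i|$), but yours buys brevity by outsourcing the single-letterization to Theorem~\ref{theorem-continuous-RD1-CEO}, which is legitimate and non-circular since that theorem is proved independently of this lemma. Two minor points worth a sentence in a polished write-up: membership in $\mc{RD}_{\mathrm{VG}\text{-}\mathrm{CEO}}^\mathrm{det}$ is defined via a closure, so one should either work with tuples in the interior and pass to the limit or note the continuity of the map $D\mapsto\log(\pi e)^{n_x}D$; and the Gaussian density requires $\dv M_i\succ\dv 0$ (a degenerate $\dv M_i$ can be handled by an $\epsilon$-perturbation of the covariance, which only loosens the bound by $o(1)$). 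Neither affects the substance.
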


\begin{proof}
Let a tuple $(R_1,\ldots,R_K,D) \in \mc{RD}_{\mathrm{VG}\text{-}\mathrm{CEO}}^\mathrm{det}$ be given. Then, there exist a blocklength $n$, $K$ encoding functions $\{\breve{\phi}^{(n)}_k\}^K_{k=1}$ and a decoding function $\breve{\psi}^{(n)}$ such that    
\begin{align}
R_k &\geq \frac{1}{n}\log M^{(n)}_k \;, \quad\text{for}\:\: k=1,\ldots,K \;, \nonumber\\
D &\geq \bigg|\frac{1}{n} \sum_{i=1}^{n} \mathrm{mmse}(\dv X_i|\breve{\phi}^{(n)}_1(\dv Y_1^n),\ldots,\breve{\phi}^{(n)}_K(\dv Y_K^n),\dv Y_0^n)  \bigg| \;. \label{equation-log-quadrtic-1}
\end{align} 
	
\noindent We need to show that there exist $(U_1,\ldots,U_K,Q)$ such that 
\begin{equation}~\label{required-existence-condition-proof-lemma}
\sum_{k \in \mc S} R_k + \log (\pi e)^{n_x} D \geq \sum_{k \in \mc S} I(\dv Y_k;U_k|\dv X,\dv Y_0,Q) + h(\dv X|U_{\mc S^c},\dv Y_0,Q) \;, \quad\text{for}\:\: \mc S \subseteq \mc K \;.
\end{equation}
	
\noindent Let us define  
\begin{equation*}
\bar{\Delta}^{(n)} := \frac{1}{n} h(\dv X^n|\breve{\phi}^{(n)}_1(\dv Y_1^n),\ldots,\breve{\phi}^{(n)}_K(\dv Y_K^n),\dv Y_0^n) \;.
\end{equation*}

\noindent It is easy to justify that expected distortion $\bar{\Delta}^{(n)}$ is achievable under logarithmic loss (see Theorem~\ref{theorem-continuous-RD1-CEO}).  Then, following straightforwardly the lines in the proof of~\cite[Theorem 10]{CW14}, we have
\begin{align}~\label{equation-log-quadrtic-2}
\sum_{k \in \mc S} R_k  \geq \sum_{k \in \mc S} \frac{1}{n} \sum_{i=1}^{n} I(\dv Y_{k,i};U_{k,i}|\dv X_i,\dv Y_{0,i},Q_i) 
+ \frac{1}{n} \sum_{i=1}^{n} h(\dv X_i|U_{\mc S^c,i},\dv Y_{0,i},Q_i) - \bar{\Delta}^{(n)} \;.
\end{align}
	
Next, we upper bound $\bar{\Delta}^{(n)}$ in terms of $D$ as follows
\begin{align} 
\bar{\Delta}^{(n)} &= \frac{1}{n} h(\dv X^n | \breve{\phi}^{(n)}_1(\dv Y_1^n),\ldots,\breve{\phi}^{(n)}_K(\dv Y_K^n),\dv Y_0^n ) \nonumber\\
& =  \frac{1}{n} \sum_{i=1}^{n} h(\dv X_i|\dv X_{i+1}^n, \breve{\phi}^{(n)}_1(\dv Y_1^n),\ldots,\breve{\phi}^{(n)}_K(\dv Y_K^n),\dv Y_0^n ) \nonumber\\
&= \:  \frac{1}{n} \sum_{i=1}^{n} h(\dv X_i - \E[\dv X_i | J_{\mc K} ] \big| \dv X_{i+1}^n, \breve{\phi}^{(n)}_1(\dv Y_1^n),\ldots,\breve{\phi}^{(n)}_K(\dv Y_K^n),\dv Y_0^n) \nonumber\\
& \stackrel{(a)}{\leq} \frac{1}{n} \sum_{i=1}^{n} h(\dv X_i - \E[\dv X_i|\breve{\phi}^{(n)}_1(\dv Y_1^n),\ldots,\breve{\phi}^{(n)}_K(\dv Y_K^n),\dv Y_0^n ) \nonumber\\
& \stackrel{(b)}{\leq} \frac{1}{n} \sum_{i=1}^{n} \log (\pi e)^{n_x} \left| \mathrm{mmse}(\dv X_i|\breve{\phi}^{(n)}_1(\dv Y_1^n),\ldots,\breve{\phi}^{(n)}_K(\dv Y_K^n),\dv Y_0^n) \right| \nonumber\\
& \stackrel{(c)}{\leq} \log ({{\pi}e})^{n_x} \bigg| \frac{1}{n} \sum_{i=1}^{n}  \mathrm{mmse}(\dv X_i|\breve{\phi}^{(n)}_1(\dv Y_1^n),\ldots,\breve{\phi}^{(n)}_K(\dv Y_K^n),\dv Y_0^n) \bigg| \nonumber\\
& \stackrel{(d)}{\leq} \log ({{\pi}e})^{n_x} D \label{equation-log-quadrtic-3} \;,
\end{align}
where  $(a)$  holds since conditioning reduces entropy; $(b)$ is due to the maximal differential entropy lemma; $(c)$ is due to the convexity of the log-det function and Jensen's inequality; and $(d)$ is due to~\eqref{equation-log-quadrtic-1}. 
	
\noindent Combining~\eqref{equation-log-quadrtic-3} with~\eqref{equation-log-quadrtic-2}, and using standard arguments for single-letterization, we get~\eqref{required-existence-condition-proof-lemma}; and this completes the proof of the lemma.
\end{proof}

The proof of Theorem~\ref{theorem-quadratic-RD-CEO} is as follows. By Lemma~\ref{lemma-connection} and Proposition~\ref{proposition-continous-RD2-CEO}, there must exist Gaussian test channels $(V^\mathrm{G}_1,\ldots,V^\mathrm{G}_K)$ and a time-sharing random variable $Q^\prime$, with joint distribution that factorizes as
\begin{equation*}
P_{\dv X,\dv Y_0}(\dv x,\dv y_0) \prod_{k=1}^K P_{\dv Y_k|\dv X,\dv Y_0}(\dv y_k|\dv x,\dv y_0) \: P_Q^\prime(q^\prime) \prod_{k=1}^{K} P_{V_k|\dv Y_k,Q^\prime}(v_k|\dv y_k,q^\prime) \;,
\end{equation*}
such that the following holds 
\begin{align} 
\sum_{k \in \mc S} R_k &\geq I(\dv Y_{\mc S};V_{\mc S}^\mathrm{G}|V_{\mc S^c}^\mathrm{G},\dv Y_0,Q^\prime) \;, \quad\text{for}\:\: \mc S \subseteq \mc K \;, \\
\log (\pi e)^{n_x} D &\geq h(\dv X|V_1^\mathrm{G},\ldots,V_K^\mathrm{G},\dv Y_0,Q^\prime) \;. \label{equation-optimal-rate-distortion region}
\end{align}

\noindent This is clearly achievable by the Berger-Tung coding scheme with Gaussian test channels and time-sharing $Q^\prime$, since the achievable error matrix under quadratic distortion has determinant that satisfies
\begin{equation*}
\log \big( (\pi e)^{n_x} | \mathrm{mmse}(\dv X|V_1^\mathrm{G},\ldots,V_K^\mathrm{G},\dv Y_0,Q')| \big) = h(\dv X|V_1^G,\ldots,V_K^G,\dv Y_0,Q^\prime) \;.
\end{equation*}

\noindent The above shows that the rate-distortion region of the quadratic vector Gaussian CEO problem with determinant constraint is given by~\eqref{equation-optimal-rate-distortion region}, i.e., $\mc{RD}_\mathrm{CEO}^\mathrm{II}$ (with distortion parameter $\log(\pi e)^{n_x}D$). Recalling that $\mc{RD}_\mathrm{CEO}^\mathrm{II}=\mc{RD}_\mathrm{CEO}^\mathrm{I}=\mc{RD}_{\mathrm{VG}\text{-}\mathrm{CEO}}^\star$, and substituting in Theorem~\ref{theorem-Gauss-RD-CEO} using distortion level $\log({\pi}e)^{n_x}D$ completes the proof.

\section{Proof of Proposition~\ref{proposition-example-distributed-classification}}\label{proof-proposition-example-distributed-classification}

Let a triple mappings $(Q_{U_1|Y_1},Q_{U_2|Y_2},Q_{\hat{X}|U_1,U_2,Y_0})$ be given. It is easy to see that the probability of classification error of the classifier $Q_{\hat{X}|Y_0,Y_1,Y_2}$ as defined by~\eqref{definition-probability-classification-error} satisfies
\begin{equation}~\label{surrogate-on-probability-classification-error}
P_{\mathcal E}(Q_{\hat{X}|Y_0,Y_1,Y_2}) \leq \mathbb{E}_{P_{X,Y_0,Y_1,Y_2}} [- \log Q_{\hat{X}|Y_0,Y_1,Y_2}(X|Y_0,Y_1,Y_2)] \;.
\end{equation}
Applying Jensen's inequality on the right hand side (RHS) of \eqref{surrogate-on-probability-classification-error}, using the concavity of the logarithm function, and combining with the fact that the exponential function increases monotonically, the probability of classification error can be further bounded as
\begin{equation}~\label{bound-on-surrogate-on-probability-classification-error-step1}
P_{\mathcal E}(Q_{\hat{X}|Y_0,Y_1,Y_2}) \leq 1 - \exp\Big(- \mathbb{E}_{P_{X,Y_0,Y_1,Y_2}} [ - \log Q_{\hat{X}|Y_0,Y_1,Y_2}(X|Y_0,Y_1,Y_2) ]\Big) \;.
\end{equation}
Using~\eqref{definition-equivalent-classifier} and continuing from~\eqref{bound-on-surrogate-on-probability-classification-error-step1}, we get  
\begin{align}
P_{\mathcal E}(Q_{\hat{X}|Y_0,Y_1,Y_2}) &\leq 1 - \exp\Big(- \mathbb{E}_{P_{X,Y_0,Y_1,Y_2}} [ - \log \mathbb{E}_{Q_{U_1|Y_1}} \mathbb{E}_{Q_{U_2|Y_2}} [Q_{\hat{X}|U_1,U_2,Y_0}(X|U_1,U_2,Y_0)] ]\Big) \nonumber\\
&\leq 1 - \exp\Big(- \mathbb{E}_{P_{X,Y_0,Y_1,Y_2}} \mathbb{E}_{Q_{U_1|Y_1}} \mathbb{E}_{Q_{U_2|Y_2}} [ - \log  [Q_{\hat{X}|U_1,U_2,Y_0}(X|U_1,U_2,Y_0)] ]\Big) \;,
\label{bound-on-surrogate-on-probability-classification-error-step2}
\end{align}
where the last inequality follows by applying Jensen's inequality and using the concavity of the logarithm function.

\noindent Noticing that the term in the exponential function in the RHS of~\eqref{bound-on-surrogate-on-probability-classification-error-step2},
\begin{equation}
\mc D(Q_{U_1|Y_1},Q_{U_1|Y_1},Q_{\hat{X}|U_1,U_2,Y_0}) := \mathbb{E}_{P_{XY_0Y_1Y_2}} \mathbb{E}_{Q_{U_1|Y_1}} \mathbb{E}_{Q_{U_2|Y_2}} [ - \log  Q_{\hat{X}|U_1,U_2,Y_0}(X|U_1,U_2,Y_0)] \;,
\end{equation}
is the average logarithmic loss, or cross-entropy risk,  of the triple $(Q_{U_1|Y_1},Q_{U_2|Y_2},Q_{\hat{X}|U_1,U_2,Y_0})$; the inequality~\eqref{bound-on-surrogate-on-probability-classification-error-step2} implies that minimizing the average logarithmic loss distortion leads to classifier with smaller (bound on) its classification error. Using Theorem~\ref{theorem-continuous-RD1-CEO}, the minimum average logarithmic loss, minimized over all mappings \mbox{$Q_{U_1|Y_1}\: :\:  \mc Y_1 \longrightarrow \mc P(\mc U_1)$} and  $Q_{U_2|Y_2}\: :\:  \mc Y_2 \longrightarrow \mc P(\mc U_2)$ that have description lengths no more than $R_1$ and $R_2$ bits per-sample, respectively, as well as all choices of $Q_{\hat{X}|U_1,U_2,Y_0}\: :\: \mc U_1 \times \mc U_2 \times \mc Y_0 \longrightarrow \mc P(\mc X)$, is 
\begin{equation}
D^{\star}(R_1,R_2) = \inf\{ D \: : \: (R_1,R_2, D) \in \mc{RD}_\mathrm{CEO}^\star \} \;.
\end{equation}
Thus, the direct part of Theorem~\ref{theorem-continuous-RD1-CEO} guarantees the existence of a classifier $Q^{\star}_{\hat{X}|Y_0,Y_1,Y_2}$ whose probability of error satisfies the bound given in Proposition~\ref{proposition-example-distributed-classification}.

\section{Proof of Lemma~\ref{lemma-CEO-fixed-P}}\label{proof-lemma-CEO-fixed-P}
	
First, we rewrite $F_s(\dv P)$ in~\eqref{equation-CEO-objective1}. To that end, the second term of the RHS of~\eqref{equation-CEO-objective1} can be proceeded as   
\begin{align}
I(Y_1;U_1|U_2,Y_0) \stackrel{(a)}=& \: I(X,Y_1;U_1|U_2,Y_0) 
= I(X;U_1|U_2,Y_0) + I(Y_1;U_1|U_2,Y_0,X) \nonumber\\ 
\stackrel{(b)}=& \: I(X;U_1|U_2,Y_0) + I(Y_1;U_1|X,Y_0) \nonumber\\ 
=& \: I(X;U_1|U_2,Y_0) + I(Y_1,X;U_1|Y_0) - I(X;U_1|Y_0) \nonumber\\
\stackrel{(c)}=& \: I(X;U_1|U_2,Y_0) + I(Y_1;U_1|Y_0) - I(X;U_1|Y_0) \nonumber\\
=& \: H(X|U_2,Y_0) - H(X|U_1,U_2,Y_0) + H(U_1|Y_0) - H(U_1|Y_0,Y_1) 
- H(X|Y_0) + H(X|U_1,Y_0) \nonumber\\   
=& \: H(X|U_2,Y_0) - H(X|U_1,U_2,Y_0) + H(U_1) - H(Y_0) + H(Y_0|U_1) \nonumber\\
& - H(U_1|Y_0,Y_1)- H(X|Y_0) + H(X|U_1,Y_0) \label{equation-I_Y1U_1_g_U2Y0} \;,
\end{align}
and, the third term of the RHS of~\eqref{equation-CEO-objective1} can be written as    
\begin{align}
I(Y_2;U_2|Y_0) &= H(U_2|Y_0) - H(U_2|Y_0, Y_2) 
\stackrel{(d)}= H(U_2|Y_0) - H(U_2|Y_2) \nonumber\\
&= H(U_2) - H(Y_0) + H(Y_0|U_2) - H(U_2|Y_2) \;, \label{equation-I_Y2U_2_g_Y0}
\end{align}
where $(a)$, $(b)$, $(c)$ and $(d)$ follows due to the Markov chain $U_1 \mkv Y_1 \mkv (X, Y_0) \mkv Y_2 \mkv U_2$. 
	
\noindent
By applying~\eqref{equation-I_Y1U_1_g_U2Y0} and~\eqref{equation-I_Y2U_2_g_Y0} in~\eqref{equation-CEO-objective1}, we have 
\begin{align}
F_{\dv s}(\dv P)  =& - s_1 H(X|Y_0) - (s_1 + s_2) H(Y_0)
+ (1 - s_1) H(X|U_1,U_2,Y_0) \nonumber\\
& + s_1 H(X|U_1,Y_0) + s_1 H(X|U_2,Y_0) 
+ s_1  H(U_1) - s_1 H(U_1|Y_1) \nonumber\\
& + s_2  H(U_2) - s_2 H(U_2|Y_2) + s_1 H(Y_0|U_1) + s_2 H(Y_0|U_2) \nonumber\\[0.5em]
=& - s_1 H(X|Y_0) - (s_1 + s_2) H(Y_0) \nonumber\\
& - (1-s_1) \sum_{u_1 u_2 x y_0} p(u_1,u_2,x,y_0) \log p(x|u_1,u_2,y_0) \nonumber\\ 
& - s_1 \sum_{u_1 x y_0} p(u_1,x,y_0) \log p(x|u_1,y_0) 
- s_1 \sum_{u_2 x y_0} p(u_2,x,y_0) \log p(x|u_2,y_0) \nonumber\\ 
& - s_1 \sum_{u_1} p(u_1) \log p(u_1) 
+ s_1 \sum_{u_1 y_1} p(u_1,y_1) \log p(u_1|y_1) \nonumber\\
& - s_2 \sum_{u_2} p(u_2) \log p(u_2) 
+ s_2 \sum_{u_2 y_2} p(u_2,y_2) \log p(u_2|y_2) \nonumber\\
& - s_1 \sum_{u_1 y_0} p(u_1,y_0) \log p(y_0|u_1)  - s_2 \sum_{u_2 y_0} p(u_2,y_0) \log p(y_0|u_2) \;, \label{equation-Fs_P}
\end{align}
	
\noindent
Then, marginalizing~\eqref{equation-Fs_P} over variables $X, Y_0, Y_1, Y_2$, and using the Markov chain $U_1 \mkv Y_1 \mkv (X, Y_0) \mkv Y_2 \mkv U_2$, it is easy to see that $F_{\dv s}(\dv P)$ can be written as  
\begin{align}
F_{\dv s}(\dv P) =& - s_1 H(X|Y_0) - (s_1 + s_2) H(Y_0) \nonumber\\
&+ \E_{P_{X,Y_0,Y_1,Y_2}} \Big[(1-s_1) \E_{P_{U_1| Y_1}}\E_{P_{U_2| Y_2}}[- \log P_{X|U_1,U_2,Y_0}] \nonumber\\
&\hspace{6.5em} + s_1 \E_{P_{U_1| Y_1}}[- \log P_{X|U_1,Y_0}] + s_1 \E_{P_{U_2| Y_2}}[- \log P_{X|U_2,Y_0}] \nonumber\\ 
&\hspace{6.5em} + s_1 D_\mathrm{KL}(P_{U_1|Y_1}\|P_{U_1}) + s_2 D_\mathrm{KL}(P_{U_2|Y_2}\|P_{U_2}) \nonumber\\   
&\hspace{6.5em} + s_1 \E_{P_{U_1| Y_1}}[- \log P_{Y_0|U_1}] + s_2 \E_{P_{U_2| Y_2}}[- \log P_{Y_0|U_2}] \Big] \;.
\end{align}
	
\noindent 
Hence, we have 
\begin{align*}	
F_{\dv s}(\dv P, \dv Q) - F_{\dv s}(\dv P) 
=& \: (1-s_1) \E_{U_1,U_2,Y_0}[D_\mathrm{KL}(P_{X|U_1,U_2,Y_0}\|Q_{X|U_1,U_2,Y_0})] \\
&+ s_1 \E_{U_1,Y_0}[D_\mathrm{KL}(P_{X|U_1,Y_0}\|Q_{X|U_1,Y_0})] 
+ s_1 \E_{U_2,Y_0}[D_\mathrm{KL}(P_{X|U_2,Y_0}\|Q_{X|U_2,Y_0})] \\
&+ s_1 D_\mathrm{KL}(P_{U_1}\|Q_{U_1}) 
+ s_2 D_\mathrm{KL}(P_{U_2}\|Q_{U_2}) \\
& + s_1 \E_{U_1}[D_\mathrm{KL}(P_{Y_0|U_1}\|Q_{Y_0|U_1})]
+ s_2 \E_{U_2}[D_\mathrm{KL}(P_{Y_0|U_2}\|Q_{Y_0|U_2})] \geq 0 \;,
\end{align*}	
where it holds with equality if and only if~\eqref{equation-CEO-optimal-Q} is satisfied. Note that we have the relation $1-s_1\geq 0$ due to Lemma~\ref{lemma-range}. This completes the proof.

\section{Proof of Lemma~\ref{lemma-CEO-fixed-Q}}\label{proof-lemma-CEO-fixed-Q}

We have that $F_{\dv s}(\dv P, \dv Q)$ is convex in $\dv P$ from Lemma~\ref{lemma-CEO-convex}. For a given $\dv Q$ and $\dv s$, in order to minimize $F_{\dv s}(\dv P, \dv Q)$ over the convex set of pmfs $\dv P$, let us define the Lagrangian as      	
\begin{equation*}  
\mc L_{\dv s}(\dv P, \dv Q, \boldsymbol\lambda) :=  F_{\dv s}(\dv P, \dv Q)  + \sum_{y_1} \lambda_1(y_1) [1 - \sum_{u_1} p(u_1|y_1)] + \sum_{y_2} \lambda_2(y_2) [1 - \sum_{u_2} p(u_2|y_2)] \;,
\end{equation*}		
where $\lambda_1(y_1) \geq 0$ and $\lambda_2(y_2) \geq 0$ are the Lagrange multipliers corresponding the constrains $\sum_{u_k} p(u_k|y_k) =1$, $y_k \in \mc Y_k$, $k=1,2$, of the pmfs $P_{U_1|Y_1}$ and $P_{U_2|Y_2}$, respectively. Due to the convexity of $F_{\dv s}(\dv P, \dv Q)$, the KKT conditions are necessary and sufficient for optimality. By applying the KKT conditions
\begin{equation*}
\frac{\partial \mc L_{\dv s}(\dv P, \dv Q, \boldsymbol\lambda)}{\partial p(u_1|y_1)}  = 0 \;,
\quad \quad \quad
\frac{\partial \mc L_{\dv s}(\dv P, \dv Q, \boldsymbol\lambda)}{\partial p(u_2|y_2)}  = 0 \;,
\end{equation*}
and arranging terms, we obtain 
\begin{align}  
\log &\:p(u_k|y_k) \nonumber\\
=& \: \log q(u_k) + \frac{1-s_1}{s_k} \sum_{u_\kbar x y_0} p(x,y_0|y_k) p(u_\kbar|x,y_0) \log q(x|u_k,u_\kbar,y_0) \nonumber\\
& + \frac{s_1}{s_k} \sum_{x y_0} p(x,y_0|y_k) \log q(x|u_k,y_0) + \sum_{y_0} p(y_0|y_k) \log q(y_0|u_k) + \frac{\lambda_k(y_k)}{s_k p(y_k)} - 1 \nonumber\\[0.5em]
=& \: \log q(u_k) + \frac{1-s_1}{s_k} \sum_{u_\kbar y_0} p(u_\kbar, y_0|y_k) \sum_{x} p(x|y_k, u_\kbar, y_0) \log q(x|u_k,u_\kbar,y_0) \nonumber\\
& + \frac{s_1}{s_k} \sum_{y_0} p(y_0|y_k) \sum_{x} p(x|y_k,y_0) \log q(x|u_k,y_0) + \sum_{y_0} p(y_0|y_k) \log q(y_0|u_k) + \frac{\lambda_k(y_k)}{s_k p(y_k)} - 1 \nonumber\\[0.5em] 
=& \: \log q(u_k) - \frac{1-s_1}{s_k} \sum_{u_\kbar y_0} p(u_\kbar, y_0|y_k) \sum_{x} p(x|y_k,u_\kbar,y_0) \log \frac{p(x|y_k,u_\kbar,y_0)}{q(x|u_k,u_\kbar,y_0)} \frac{1}{p(x|y_k,u_\kbar,y_0)} + \frac{\lambda_k(y_k)}{s_k p(y_k)} - 1  \nonumber\\
& - \frac{s_1}{s_k} \sum_{y_0} p(y_0|y_k) \sum_{x} p(x|y_k,y_0) \log \frac{p(x|y_k,y_0)}{q(x|u_k,y_0)} 
\frac{1}{p(x|y_k,y_0)} - \sum_{y_0} p(y_0|y_k) \log \frac{p(y_0|y_k)}{q(y_0|u_k)} \frac{1}{p(y_0|y_k)} \nonumber\\[0.5em] 
=& \: \log q(u_k) - \psi_k(u_k,y_k) + \tilde{\lambda}_k(y_k) \;, \label{equation-CEO-log} 
\end{align}	
where $\psi_k(u_k,y_k)$, $k=1,2$, are given by \eqref{equation-CEO-psi}, and $\tilde{\lambda}_k(y_k)$ contains all terms independent of $u_k$ for $k=1,2$. Then, we proceeded by rearranging \eqref{equation-CEO-log} as follows
\begin{equation}~\label{equation-CEO-rearrange}
p(u_k|y_k) = e^{\tilde{\lambda}_k(y_k)} q(u_k) e^{- \psi_k(u_k,y_k)} \;, \quad\text{for}\:\: k=1,2 \;.
\end{equation}
Finally, the Lagrange multipliers $\lambda_k(y_k)$ satisfying the KKT conditions are obtained by finding $\tilde{\lambda}_k(y_k)$ such that $\sum_{u_k} p(u_k|y_k) = 1$, $k=1,2$. Substituting in~\eqref{equation-CEO-rearrange}, $p(u_k|y_k)$ can be found as in~\eqref{equation-CEO-optimal-P}.  

\section{Derivation of the Update Rules of  Algorithm~\ref{algo-CEO-Gauss}}\label{proof-CEO-Gauss-algorithm}

In this section, we derive the update rules in Algorithm~\ref{algo-CEO-Gauss} and show that the Gaussian distribution is invariant to the update rules in Algorithm~\ref{algo-CEO}, in line with Theorem~\ref{theorem-Gauss-RD-CEO}.

First, we recall that if $(\dv X_1, \dv X_2)$ are jointly Gaussian, then 
\begin{equation*}	
P_{\dv X_2|\dv X_1} \sim\mc{CN} (\boldsymbol\mu_{\dv x_2|\dv x_1}, \dv\Sigma_{\dv x_2|\dv x_1}) \;,
\end{equation*}
where $\boldsymbol\mu_{\dv x_2|\dv x_1} := \dv K_{\dv x_2|\dv x_1} \dv x_1$, $\dv K_{\dv x_2|\dv x_1} := \dv\Sigma_{\dv x_2,\dv x_1} \dv\Sigma_{\dv x_1}^{-1}$. 

Then, for $\dv Q^{(t+1)}$ computed as in \eqref{equation-CEO-optimal-Q} from $\dv P^{(t)}$, which is a set of Gaussian distributions,  we have
\begin{align*}
&Q_{\dv X|\dv U_1,\dv U_2,\dv Y_0} \sim\mc{CN} (\boldsymbol\mu_{\dv x|\dv u_1,\dv u_2,\dv y_0} , \dv\Sigma_{\dv x|\dv u_1,\dv u_2,\dv y_0}) \;, \quad\quad 
&&Q_{\dv X|\dv U_k,\dv Y_0} \sim\mc{CN} (\boldsymbol\mu_{\dv x|\dv u_k,\dv y_0} , \dv\Sigma_{\dv x|\dv u_k,\dv y_0}) \;, \\[0.2em]  
&Q_{\dv Y_0|\dv U_k} \sim\mc{CN} (\boldsymbol\mu_{\dv y_0|\dv u_k} , \dv\Sigma_{\dv y_0|\dv u_k}) \;, \quad\quad
&&Q_{\dv U_k} \sim\mc{CN} (\dv 0 , \dv\Sigma_{\dv u_k}) \;.
\end{align*}

Next, we look at the update $\dv P^{(t+1)}$ as in \eqref{equation-CEO-optimal-P} from given $\dv Q^{(t+1)}$. To compute $\psi_k(\dv u_k^t,\dv y_k)$, first, we note that 
\begin{equation}~\label{equation-rewrite-divergence}	 
\begin{aligned}
\E_{\dv U_\kbar,\dv Y_0|\dv y_k} D(P_{\dv X|\dv y_k,\dv U_\kbar,\dv Y_0}\|Q_{\dv X|\dv u_k,\dv U_\kbar,\dv Y_0}) 
&= D(P_{\dv U_\kbar,\dv X,\dv Y_0|\dv y_k}\|Q_{\dv U_\kbar,\dv X,\dv Y_0|\dv u_k}) - D(P_{\dv U_\kbar,\dv Y_0|\dv y_k}\|Q_{\dv U_\kbar,\dv Y_0|\dv u_k}), \\[0.5em]
\E_{\dv Y_0|\dv y_k} D(P_{\dv X|\dv y_k,\dv Y_0}\|Q_{\dv X|\dv u_k,\dv Y_0})
&= D(P_{\dv X,\dv Y_0|\dv y_k}\|Q_{\dv X,\dv Y_0|\dv u_k}) - D(P_{\dv Y_0|\dv y_k}\|Q_{\dv Y_0|\dv u_k}) \;,
\end{aligned}
\end{equation}
and that for two multivariate Gaussian distributions, i.e., $P_{\dv X_1} \sim \mc{CN} (\boldsymbol\mu_{\dv x_1}, \dv\Sigma_{\dv x_1})$ and $P_{\dv X_2} \sim \mc{CN} (\boldsymbol\mu_{\dv x_2}, \dv\Sigma_{\dv x_2})$ in $\mathds{C}^N$,
\begin{equation}~\label{equation-Gauss-quadratic-divergence}
D(P_{\dv X_1}\|P_{\dv X_2}) = (\boldsymbol\mu_{\dv x_1} - \boldsymbol\mu_{\dv x_2})^\dagger \dv\Sigma_{\dv x_2}^{-1} (\boldsymbol\mu_{\dv x_1} - \boldsymbol\mu_{\dv x_2}) + \log | \dv\Sigma_{\dv x_2} \dv\Sigma_{\dv x_1}^{-1} | + \mathrm{tr}(\dv\Sigma_{\dv x_2}^{-1} \dv\Sigma_{\dv x_1}) - N \;.   
\end{equation}
Applying \eqref{equation-rewrite-divergence} and \eqref{equation-Gauss-quadratic-divergence} in \eqref{equation-CEO-psi} and noting that all involved distributions are Gaussian, it follows that $\psi_k(\dv u_k^t,\dv y_k)$ is a quadratic form. Then, since $q^{(t)}(\dv u_k)$ is also Gaussian, the product $\log(q^{(t)}(\dv u_k)\exp(-\psi_k(\dv u_k^t,\dv y_k)))$ is also a quadratic form, and identifying constant, first and second order terms, we can write    
\begin{equation*}
\log p^{(t+1)}(\dv u_k|\dv y_k) = - (\dv u_k - \boldsymbol\mu_{\dv u_k^{t+1}|\dv y_k})^\dagger \dv\Sigma_{\dv z_k^{t+1}}^{-1} (\dv u_k - \boldsymbol\mu_{\dv u_k^{t+1} |\dv y_k}) +  Z(\dv y_k) \;,    
\end{equation*}
where 
\begin{align}
\dv\Sigma_{\dv z_k^{t+1}}^{-1} 
=& \: \dv\Sigma_{\dv u_k^{t}}^{-1} 
+ \frac{1-s_1}{s_k} \dv K_{(\dv u_\kbar^t,\dv x,\dv y_0)|\dv u_k^t}^\dagger \dv\Sigma_{(\dv u_\kbar^t,\dv x,\dv y_0)|\dv u_k^t}^{-1} \dv K_{(\dv u_\kbar^t,\dv x,\dv y_0)|\dv u_k^t} 
- \frac{1-s_1}{s_k} \dv K_{(\dv u_\kbar^t,\dv y_0)|\dv u_k^t}^\dagger \dv\Sigma_{(\dv u_\kbar^t,\dv y_0)|\dv u_k^t}^{-1} \dv K_{(\dv u_\kbar^t,\dv y_0)|\dv u_k^t} \nonumber\\
&+ \frac{s_1}{s_k} \dv K_{(\dv x,\dv y_0)|\dv u_k^t}^\dagger \dv\Sigma_{(\dv x,\dv y_0)|\dv u_k^t}^{-1} \dv K_{(\dv x,\dv y_0)|\dv u_k^t} 
+ \frac{s_k-s_1}{s_k} \dv K_{\dv y_0|\dv u_k^t}^\dagger \dv\Sigma_{\dv y_0|\dv u_k^t}^{-1} \dv K_{\dv y_0|\dv u_k^t} \label{equation-CEO-Gauss-Sigma} \\[1em]    
\boldsymbol\mu_{\dv u_k^{t+1} |\dv y_k} =& 
\: \dv\Sigma_{\dv z_k^{t+1}} \left( 
\frac{1-s_1}{s_k} \dv K_{(\dv u_\kbar^t,\dv x,\dv y_0)|\dv u_k^t}^\dagger \dv\Sigma_{(\dv u_\kbar^t,\dv x,\dv y_0)|\dv u_k^t}^{-1} \dv K_{(\dv u_\kbar^t,\dv x,\dv y_0)|\dv y_k} 
- \frac{1-s_1}{s_k} \dv K_{(\dv u_\kbar^t,\dv y_0)|\dv u_k^t}^\dagger \dv\Sigma_{(\dv u_\kbar^t,\dv y_0)|\dv u_k^t}^{-1} \dv K_{(\dv u_\kbar^t,\dv y_0)|\dv y_k} \right. \nonumber\\
&\left.+ \frac{s_1}{s_k}  \dv K_{(\dv x,\dv y_0)|\dv u_k^t}^\dagger \dv\Sigma_{(\dv x,\dv y_0)|\dv u_k^t}^{-1} \dv K_{(\dv x,\dv y_0)|\dv y_k} 
+ \frac{s_k-s_1}{s_k} \dv K_{\dv y_0|\dv u_k^t}^\dagger \dv\Sigma_{\dv y_0|\dv u_k^t}^{-1} \dv K_{\dv y_0|\dv y_k} 
\right) \dv y_k \;. \label{equation-CEO-Gauss-mu} 
\end{align}
This shows that $p^{(t+1)}\!(\dv u_k|\dv y_k)$ is a Gaussian distribution and that $\dv U_k^{t+1}$ is distributed as $\dv U_k^{t+1}\!\!\sim\!\mc {CN}(\boldsymbol\mu_{\dv u_k^{t+1} |\dv y_k}, \dv\Sigma_{\dv z_k^{t+1}})$. 

Next, we simplify \eqref{equation-CEO-Gauss-Sigma} to obtain the update rule \eqref{equation-CEO-Gauss-Sigma-update}. From the matrix inversion lemma, similarly to \cite{CGTW05}, for $(\dv X_1,\dv X_2)$ jointly Gaussian we have
\begin{equation}~\label{equation-matrix-inv-1}
\dv \Sigma_{\dv x_2|\dv x_1}^{-1} = \dv\Sigma_{\dv x_2}^{-1} + \dv K_{\dv x_1|\dv x_2}^\dagger\dv\Sigma_{\dv x_1|\dv x_2}^{-1}\dv K_{\dv x_1|\dv x_2} \;.
\end{equation}
Applying \eqref{equation-matrix-inv-1} in \eqref{equation-CEO-Gauss-Sigma}, we have
\begin{align*}
\dv\Sigma_{\dv z_k^{t+1}}^{-1} 
= & \: \dv\Sigma_{\dv u_k^{t}}^{-1} 
+ \frac{1-s_1}{s_k} \left( \dv\Sigma_{\dv u_k^t|(\dv u_\kbar^t,\dv x, \dv y_0)}^{-1}  - \dv\Sigma_{\dv u_k^{t}}^{-1} \right)
- \frac{1-s_1}{s_k} \left( \dv\Sigma_{\dv u_k^t|(\dv u_\kbar^t, \dv y_0)}^{-1} - \dv\Sigma_{\dv u_k^{t}}^{-1} \right) \\
&+ \frac{s_1}{s_k} \left( \dv\Sigma_{\dv u_k^t|(\dv x, \dv y_0)}^{-1} - \dv\Sigma_{\dv u_k^{t}}^{-1} \right) 
+ \frac{s_k-s_1}{s_k} \left( \dv\Sigma_{\dv u_k^t|\dv y_0}^{-1} - \dv\Sigma_{\dv u_k^{t}}^{-1} \right) \\[0.7em]
\stackrel{(a)}{=} &  \: \frac{1}{s_k} \dv\Sigma_{\dv u_k^t|(\dv x, \dv y_0)}^{-1} - \frac{1-s_1}{s_k} \dv\Sigma_{\dv u_k^t|(\dv u_\kbar^t,\dv y_0)}^{-1} + \frac{s_k-s_1}{s_k} \dv\Sigma_{\dv u_k^t|\dv y_0}^{-1} \;,     
\end{align*}
where $(a)$ is due to the Markov chain $\dv U_1 \mkv \dv X \mkv \dv U_2$. We obtain \eqref{equation-CEO-Gauss-Sigma-update} by taking the inverse of both sides of $(a)$.

Also from the matrix inversion lemma \cite{CGTW05}, for $(\dv X_1,\dv X_2)$ jointly Gaussian we have
\begin{equation}~\label{equation-matrix-inv-2}
\dv \Sigma_{\dv x_1}^{-1} \dv \Sigma_{\dv x_1,\dv x_2} \dv\Sigma_{\dv x_2|\dv x_1}^{-1}
= \dv \Sigma_{\dv x_1|\dv x_2}^{-1} \dv \Sigma_{\dv x_1,\dv x_2} \dv\Sigma_{\dv x_2}^{-1} \;.
\end{equation}
Now, we simplify \eqref{equation-CEO-Gauss-mu} to obtain the update rule \eqref{equation-CEO-Gauss-A-update} as follows
\begin{align}
\boldsymbol\mu_{\dv u_k^{t+1}|\dv y_k}
=& \: \dv\Sigma_{\dv z_k^{t+1}} \left( 
\frac{1-s_1}{s_k} \dv\Sigma_{\dv u_k^t}^{-1} \dv\Sigma_{\dv u_k^t,(\dv u_\kbar^t,\dv x,\dv y_0)} \dv\Sigma_{(\dv u_\kbar^t,\dv x,\dv y_0)|\dv u_k^t}^{-1} \dv\Sigma_{(\dv u_\kbar^t,\dv x,\dv y_0),\dv y_k} \dv\Sigma_{\dv y_k}^{-1} \right. \nonumber\\ 
&\left.- \frac{1-s_1}{s_k} \dv\Sigma_{\dv u_k^t}^{-1} \dv\Sigma_{\dv u_k^t,(\dv u_\kbar^t,\dv y_0)} \dv\Sigma_{(\dv u_\kbar^t,\dv y_0)|\dv u_k^t}^{-1} \dv\Sigma_{(\dv u_\kbar^t,\dv y_0),\dv y_k} \dv\Sigma_{\dv y_k}^{-1} \right. \nonumber\\  
&\left.+ \frac{s_1}{s_k} \dv\Sigma_{\dv u_k^t}^{-1} \dv\Sigma_{\dv u_k^t,(\dv x,\dv y_0)} \dv\Sigma_{(\dv x,\dv y_0)|\dv u_k^t}^{-1} \dv\Sigma_{(\dv x,\dv y_0),\dv y_k} \dv\Sigma_{\dv y_k}^{-1} 
+ \frac{s_k-s_1}{s_k} \dv\Sigma_{\dv u_k^t}^{-1} \dv\Sigma_{\dv u_k^t,\dv y_0} \dv\Sigma_{\dv y_0|\dv u_k^t}^{-1} \dv\Sigma_{\dv y_0,\dv y_k} \dv\Sigma_{\dv y_k}^{-1}
\right) \dv y_k \nonumber\\[0.9em]
\stackrel{(a)}{=}& \: \dv\Sigma_{\dv z_k^{t+1}} \left( 
\frac{1-s_1}{s_k} \dv\Sigma_{\dv u_k^t|(\dv u_\kbar^t,\dv x,\dv y_0)}^{-1} \dv\Sigma_{\dv u_k^t,(\dv u_\kbar^t,\dv x,\dv y_0)} \dv\Sigma_{(\dv u_\kbar^t,\dv x,\dv y_0)}^{-1} \dv\Sigma_{(\dv u_\kbar^t,\dv x,\dv y_0),\dv y_k} \dv\Sigma_{\dv y_k}^{-1} \right. \nonumber\\
&\left. - \frac{1-s_1}{s_k} \dv\Sigma_{\dv u_k^t|(\dv u_\kbar^t,\dv y_0)}^{-1} \dv\Sigma_{\dv u_k^t,(\dv u_\kbar^t,\dv y_0)} \dv\Sigma_{(\dv u_\kbar^t,\dv y_0)}^{-1} \dv\Sigma_{(\dv u_\kbar^t,\dv y_0),\dv y_k} \dv\Sigma_{\dv y_k}^{-1}  \right. \nonumber\\
&\left. + \frac{s_1}{s_k} \dv\Sigma_{\dv u_k^t|(\dv x,\dv y_0)}^{-1} \dv\Sigma_{\dv u_k^t,(\dv x,\dv y_0)} \dv\Sigma_{(\dv x,\dv y_0)}^{-1} \dv\Sigma_{(\dv x,\dv y_0),\dv y_k} \dv\Sigma_{\dv y_k}^{-1}  
+ \frac{s_k-s_1}{s_k} \dv\Sigma_{\dv u_k^t|\dv y_0}^{-1} \dv\Sigma_{\dv u_k^t,\dv y_0} \dv\Sigma_{\dv y_0}^{-1} \dv\Sigma_{\dv y_0,\dv y_k} \dv\Sigma_{\dv y_k}^{-1} 
\right) \dv y_k \nonumber\\[0.8em]
\hspace{-2em}\stackrel{(b)}{=}& \: \dv\Sigma_{\dv z_k^{t+1}} \left( 
\frac{1-s_1}{s_k} \dv\Sigma_{\dv u_k^t|(\dv u_\kbar^t,\dv x,\dv y_0)}^{-1} \dv A_k^t \dv\Sigma_{\dv y_k,(\dv u_\kbar^t,\dv x,\dv y_0)} \dv\Sigma_{(\dv u_\kbar^t,\dv x,\dv y_0)}^{-1} \dv\Sigma_{(\dv u_\kbar^t,\dv x,\dv y_0),\dv y_k} \dv\Sigma_{\dv y_k}^{-1} \right. \nonumber\\
&\left. - \frac{1-s_1}{s_k} \dv\Sigma_{\dv u_k^t|(\dv u_\kbar^t,\dv y_0)}^{-1} \dv A_k^t \dv\Sigma_{\dv y_k,(\dv u_\kbar^t,\dv y_0)} \dv\Sigma_{(\dv u_\kbar^t,\dv y_0)}^{-1} \dv\Sigma_{(\dv u_\kbar^t,\dv y_0),\dv y_k} \dv\Sigma_{\dv y_k}^{-1}  \right. \nonumber\\
&\left. + \frac{s_1}{s_k} \dv\Sigma_{\dv u_k^t|(\dv x,\dv y_0)}^{-1} \dv A_k^t \dv\Sigma_{\dv y_k,(\dv x,\dv y_0)} \dv\Sigma_{(\dv x,\dv y_0)}^{-1} \dv\Sigma_{(\dv x,\dv y_0),\dv y_k} \dv\Sigma_{\dv y_k}^{-1} 
+ \frac{s_k-s_1}{s_k} \dv\Sigma_{\dv u_k^t|\dv y_0}^{-1} \dv A_k^t \dv\Sigma_{\dv y_k,\dv y_0} \dv\Sigma_{\dv y_0}^{-1} \dv\Sigma_{\dv y_0,\dv y_k} \dv\Sigma_{\dv y_k}^{-1} 
\right) \dv y_k \nonumber\\[0.8em]
\stackrel{(c)}{=}& \: \dv\Sigma_{\dv z_k^{t+1}} \left( 
\frac{1-s_1}{s_k} \dv\Sigma_{\dv u_k^t|(\dv u_\kbar^t,\dv x,\dv y_0)}^{-1} \dv A_k^t (\dv\Sigma_{\dv y_k} - \dv\Sigma_{\dv y_k|(\dv u_\kbar^t,\dv x,\dv y_0)}) \dv\Sigma_{\dv y_k}^{-1} \right. \nonumber\\
&\left.- \frac{1-s_1}{s_k} \dv\Sigma_{\dv u_k^t|(\dv u_\kbar^t,\dv y_0)}^{-1} \dv A_k^t (\dv\Sigma_{\dv y_k} - \dv\Sigma_{\dv y_k|(\dv u_\kbar^t,\dv y_0)}) \dv\Sigma_{\dv y_k}^{-1} \right. \nonumber\\  
&\left. + \frac{s_1}{s_k} \dv\Sigma_{\dv u_k^t|(\dv x,\dv y_0)}^{-1} \dv A_k^t 
(\dv\Sigma_{\dv y_k} - \dv\Sigma_{\dv y_k|(\dv x,\dv y_0)}) \dv\Sigma_{\dv y_k}^{-1}
+ \frac{s_k-s_1}{s_k} \dv\Sigma_{\dv u_k^t|\dv y_0}^{-1} \dv A_k^t 
(\dv\Sigma_{\dv y_k} - \dv\Sigma_{\dv y_k|\dv y_0}) \dv\Sigma_{\dv y_k}^{-1}
\right) \dv y_k \nonumber\\[1em]
\stackrel{(d)}{=}& \: \dv\Sigma_{\dv z_k^{t+1}} \left(
\frac{1}{s_k}  \dv\Sigma_{\dv u_k^t|(\dv x,\dv y_0)}^{-1} \dv A_k^t (\dv I - \dv\Sigma_{\dv y_k|(\dv x,\dv y_0)} \dv\Sigma_{\dv y_k}^{-1} ) 
- \frac{1-s_1}{s_k} \dv\Sigma_{\dv u_k^t|(\dv u_\kbar^t,\dv y_0)}^{-1} \dv A_k^t (\dv I - \dv\Sigma_{\dv y_k|(\dv u_\kbar^t,\dv y_0)} \dv\Sigma_{\dv y_k}^{-1}) \right. \nonumber\\ 
&\left. + \frac{s_k-s_1}{s_k} \dv\Sigma_{\dv u_k^t|\dv y_0}^{-1} \dv A_k^t 
(\dv I - \dv\Sigma_{\dv y_k|\dv y_0} \dv\Sigma_{\dv y_k}^{-1}) 
\right) \dv y_k \;, \nonumber  
\end{align}
where $(a)$ follows from \eqref{equation-matrix-inv-2}; $(b)$ follows from the relation $\dv\Sigma_{\dv u_k, \dv y_0} = \dv A_k \dv\Sigma_{\dv y_k, \dv y_0}$; $(c)$ is due the definition of $\dv\Sigma_{\dv x_1|\dv x_2}$; and $(d)$ is due to the Markov chain $\dv U_1 \mkv \dv X \mkv \dv U_2$. Equation \eqref{equation-CEO-Gauss-A-update} follows by noting that $\boldsymbol\mu_{\dv u_k^{t+1} |\dv y_k} = \dv A_k^{t+1}\dv y_k$.

\newpage
\bibliographystyle{IEEEtran}
\bibliography{IEEEabrv,mybibfile}
\end{document}